\documentclass[11pt,letterpaper]{article}

\usepackage[utf8]{inputenc}

\usepackage{amsthm}

\usepackage{amsmath,amssymb,amsfonts} 
\usepackage{epsfig} \usepackage{latexsym,nicefrac,bbm}

\usepackage{color,fancybox,graphicx,subfigure}
\usepackage[top=1in, bottom=1in, left=1in, right=1in]{geometry}
\usepackage{tabularx} \usepackage{hyperref} 

\usepackage[boxruled, linesnumbered]{algorithm2e}
\usepackage{enumitem}
\usepackage{multirow}
 \usepackage{url}

 \usepackage{tcolorbox}

\renewcommand{\epsilon}{\varepsilon}

\newcommand{\nfrac}{\nicefrac}
\newcommand{\eps}{\varepsilon}

\newtheorem{theorem}{Theorem}[section]

\newtheorem{definition}{Definition}[section]
\newtheorem{lemma}[theorem]{Lemma}

\newtheorem{corollary}[theorem]{Corollary}

\newtheorem{assumption}[theorem]{Assumption}

\def\FullBox{\hbox{\vrule width 6pt height 6pt depth 0pt}}

\def\qed{\ifmmode\qquad\FullBox\else{\unskip\nobreak\hfil
\penalty50\hskip1em\null\nobreak\hfil\FullBox
\parfillskip=0pt\finalhyphendemerits=0\endgraf}\fi}

\usepackage[T1]{fontenc}
\usepackage{lmodern}

\title{\bf Re-Analyze Gauss: Bounds for Private Matrix Approximation  via Dyson Brownian Motion\footnote{This is the full version of a paper accepted to NeurIPS 2022 \url{https://openreview.net/pdf?id=Ep98SUx9gka}}}

 \author{Oren Mangoubi\\ Worcester Polytechnic Institute \and Nisheeth K. Vishnoi \\ Yale University}

\begin{document}
\date{}

\maketitle

\begin{abstract}
Given a symmetric matrix $M$ and a vector $\lambda$, we present new bounds on the Frobenius-distance utility of the Gaussian mechanism for  approximating $M$ by a matrix whose spectrum is $\lambda$, under $(\eps,\delta)$-differential privacy. Our bounds depend on both $\lambda$ and the gaps in the eigenvalues of $M$, and hold whenever the top $k+1$ eigenvalues of $M$ have sufficiently large gaps. When applied to the problems of private rank-$k$ covariance matrix approximation and subspace recovery, our bounds yield improvements over previous bounds. Our bounds are obtained by viewing the addition of Gaussian noise as a continuous-time matrix Brownian motion. This viewpoint allows us to track the evolution of eigenvalues and eigenvectors of the matrix, which are governed by  stochastic differential equations discovered by Dyson. These equations allow us to bound the utility as the square-root of a sum-of-squares of perturbations to the eigenvectors, as opposed to a sum of perturbation bounds obtained via Davis-Kahan-type theorems.

\end{abstract}

\newpage

\vspace{-5mm}
\tableofcontents
\newpage

\section{Introduction}

Given a dataset $A \in \mathbb{R}^{m \times d}$, which consists of $m$ individuals with $d$-dimensional features, 
 methods for preprocessing  or prediction from $A$ often use the covariance matrix $M:= A^\top A$  of $A$. 
In many such applications one computes  a rank-$k$ approximation to $M$, or finds a matrix {\em close} to $M$ with a specified set of eigenvalues $\lambda=(\lambda_1,\ldots,\lambda_d)$ \cite{shikhaliev2019low, hubert2004robust, shen2008sparse}.
Examples include the rank-$k$ covariance matrix approximation problem where one seeks to compute a rank-$k$ matrix that minimizes a given distance to  $M$, 
and the subspace recovery problem where the goal is to compute a rank $k$-projection matrix $H = V_k V_k^\top$, where $V_k$ is the $d \times k$ matrix whose columns are the top-$k$ eigenvectors of $M$.
These matrix approximation  problems are ubiquitous in ML and have a rich algorithmic history; see \cite{james2013introduction,vogels2019powersgd,candes2010matrix,blum2020foundations}.

In some cases, the rows of $A$ correspond to sensitive features of  individuals and the release of solutions to aforementioned matrix approximation problems may reveal their private information, e.g., as in the case of the Netflix prize problem   \cite{bennett2007netflix}. 
Differential privacy (DP) has become a popular notion to quantify the extent to which an algorithm preserves the privacy of individuals  \cite{dwork2006differential}.
Algorithms for solving low-rank matrix approximation problems have been widely studied under DP constraints   \cite{kapralov2013differentially, blum2005practical, dwork2014analyze, dwork2006calibrating}.
Notions of DP studied in the literature include $(\eps, \delta)$-DP \cite{dwork2006calibrating, hardt2012beating, hardt2013beyond, dwork2014analyze} which is the notion we study in this paper, as well as pure $(\eps, 0)$-DP \cite{dwork2006calibrating, kapralov2013differentially, amin2019differentially, leake2020polynomial}.
To define a notion of DP in problems involving covariance matrices, following \cite{blum2005practical, dwork2006calibrating}, two matrices $M=A^\top A$ and $M' = A'^\top A'$ are said to be {\em neighbors} if they arise from  $A, A'$ which differ by at most one row.
Moreover, as is oftentimes done, we assume that each row of the datasets $A, A'$ has norm at most $1$.
For any $\eps, \delta \geq 0$, a randomized mechanism $\mathcal{A}$ is $(\eps, \delta)$-differentially private  if for all neighbors $M, M' \in  \mathbb{R}^{d \times d}$, and any measurable subset $S$ of outputs of $\mathcal{A}$, we have $\mathbb{P}(\mathcal{A}(M) \in S) \leq e^\eps \mathbb{P}(\mathcal{A}(M') \in S) + \delta$.

{\bf The problem.} We consider a  class of problems where one wishes to compute an approximation to a symmetric $d \times d$ matrix under $(\eps,\delta)$-differential privacy constraints.
Specifically,  given $M = A^\top A$ for  $A \in \mathbb{R}^{m \times d}$, together with a vector $\lambda$ of target eigenvalues $\lambda_1 \geq \cdots \geq \lambda_d$,  the goal is to output a $d \times d$ matrix $\hat{H}$ with eigenvalues $\lambda$ which minimizes the Frobenius-norm distance
$\|\hat{H}- H\|_F$ under $(\eps,\delta)$-differential privacy constraints.
Here $H$ is the matrix with eigenvalues $\lambda$ and the same eigenvectors as $M$.
This class of problems includes as a special case the subspace recovery problem if we set  $\lambda_1 = \cdots = \lambda_k = 1$ and $\lambda_{k+1} = \cdots = \lambda_d = 0$.
It also includes the  rank-$k$ covariance approximation problems if we set  $\lambda_i = \sigma_i$ for $i \leq k$, where $\sigma_1 \geq \cdots \geq \sigma_d$ are the eigenvalues of $M$.
Since revealing $\sigma_i$s may violate privacy constraints,  the eigenvalues of the output matrix $\hat{H}$ should not be the same as those of $H$.

Various distance functions have been used in the literature to evaluate the utility of $(\eps,\delta)$-DP mechanisms for matrix approximation problems, including the Frobenius-norm distance $\|\hat{H}  - H\|_F$  (e.g. \cite{dwork2014analyze, amin2019differentially})  
and the  Frobenius inner product utility $\langle M, H - \hat{H}\rangle$ (e.g. \cite{chaudhuri2012near, dwork2014analyze, gilad2017smooth}).
Note that while a bound $\| H -\hat{H} \|_F \leq b$  implies an upper bound on the inner product utility of $\langle M, H -\hat{H} \rangle \leq \| M\|_F \cdot b$ (by the Cauchy-Schwarz inequality),
an upper bound on the inner product utility  does not (in general) imply any upper bound on the Frobenius-norm distance.
Moreover, the Frobenius-norm distance can be a good utility metric to use if the goal is to recover a low-rank matrix $H$ from a dataset of noisy observations (see e.g. \cite{davenport2016overview}). 
Hence, we use the Frobenius-norm distance to measure the utility of an $(\eps,\delta)$-DP mechanism.

{\bf Related work.} The problem of approximating a matrix under differential privacy constraints has been widely studied. 
In particular, prior works have provided algorithms for problems where the goal is to approximate a covariance matrix under differential privacy constraints, including rank-$k$ PCA and subspace recovery \cite{blum2005practical, kapralov2013differentially, dwork2014analyze, leake2020computability} as well as rank-$k$ covariance matrix approximation \cite{blum2005practical, dwork2014analyze, amin2019differentially}.
Another set of works has studied the problem of approximating a rectangular data matrix $A$ under DP  \cite{blum2005practical, achlioptas2007fast, hardt2012beating, hardt2013beyond}.
 We note that upper bounds on the utility of differentially-private mechanisms for rectangular matrix approximation problems can grow with the number of datapoints $m$, while those for covariance matrix approximation problems oftentimes depend only on the dimension $d$ of the covariance matrix and do not grow with $m$.
Prior works which deal with covariance matrix approximation problems such as rank-$k$ covariance matrix approximation and subspace recovery are the most relevant to our paper.
The notion of DP varies among the different works on differentially-private matrix approximation, with many of these works considering the notion $(\eps,\delta)$-DP \cite{hardt2012beating, hardt2013beyond, dwork2014analyze}, while other works focus on (pure) $(\eps, 0)$-DP \cite{kapralov2013differentially, amin2019differentially, leake2020computability}.

{\em Analysis of the Gaussian mechanism in \cite{dwork2014analyze}.} 
 \cite{dwork2014analyze} analyze a version of the Gaussian mechanism of \cite{dwork2006our}, where one perturbs the entries of $M$ by adding a symmetric matrix $E$ with i.i.d. Gaussian entries $N(0,\nfrac{\sqrt{\log(\frac{1}{\delta})}}{\eps})$, to obtain an $(\eps, \delta)$-differentially private mechanism which outputs a perturbed matrix $\hat{M} = M+E$.
One can then post-process this matrix $\hat{M}$ to obtain a rank-$k$ projection matrix which projects onto the subspace spanned by the top-$k$ eigenvectors of $\hat{M}$ (for the rank-$k$ PCA or subspace recovery problem), or a rank-$k$ matrix $\hat{H}$ with the same top-$k$ eigenvectors and eigenvalues as $\hat{M}$ (for the rank-$k$ covariance matrix approximation problem).
\cite{dwork2014analyze} consider different notions of utility in their results, including the inner product utility (for PCA), and the Frobenius-norm and spectral-norm distance distances (for low-rank approximation and subspace recovery).

In one set of results,  \cite{dwork2014analyze} give lower utility bounds of $\tilde{\Omega}(k\sqrt{d})$ w.h.p. for the rank-$k$ PCA problem with respect to the inner product utility $\langle M, H \rangle$, together with matching upper bounds provided by a post-processing of the Gaussian mechanism, where $\tilde{\Omega}$ hides polynomial factors of $\frac{1}{\eps}$ and $\log(\frac{1}{\delta})$ (their Theorems 3 and 18).
As noted by the authors, their lower bounds are tight for matrices $M$ with the ``worst-case'' spectral profile $\sigma$, but they can obtain improved upper bounds for matrices $M$  where  $\sigma_k - \sigma_{k+1} > \tilde{\Omega}(\sqrt{d})$ (Theorem 3 of \cite{dwork2014analyze}).

For the subspace recovery problem,  \cite{dwork2014analyze} obtain a Frobenius-distance bound of  $\|\hat{H}- H\|_F \leq \tilde{O}\left(\nfrac{\sqrt{kd}}{(\sigma_k - \sigma_{k+1})}\right)$ w.h.p. for a post-processing of the Gaussian mechanism whenever $\sigma_k - \sigma_{k+1} > \tilde{\Omega}(\sqrt{d})$ (implied by their Theorem 6, which is stated for the spectral norm).
And for the rank-$k$ covariance matrix approximation problem,  \cite{dwork2014analyze} show a utility bound of $\|\hat{H}- M\|_F - \|H- M\|_F \leq \tilde{O}(k \sqrt{d})$ w.h.p. for a post-processing of the Gaussian mechanism (Theorem 7 in \cite{dwork2014analyze}), and also give related bounds for the spectral norm.
While their Frobenius bound for the covariance matrix approximation problem is independent of the number of datapoints $m$, it may not be tight.
For instance, when $k=d$, one can easily obtain a better bound since, by the triangle inequality, $\|\hat{H}- M\|_F - \|H- M\|_F \leq \|\hat{H}- H\|_F = \|\hat{M}- M\|_F = \|E\|_F \leq O(d)$ w.h.p., since $\|E\|_F$ is just the norm of a vector of $d^2$ Gaussians with variance $\tilde{O}(1)$.
Moreover, the bound for the  rank-$k$ covariance approximation problem, $\|\hat{H}- H\|_F \leq \tilde{O}(k \sqrt{d})$, is also a worst-case upper bound for any spectral profile $\sigma$ as the right-hand side of the bound not depend on the eigenvalues $\sigma$.

{\em 
Thus, a question arises of whether the Frobenius-norm utility bounds for the rank-$k$ covariance matrix approximation and subspace recovery problems 
are tight for all spectral profiles $\sigma$, and whether the analysis of the Gaussian mechanism can be improved to achieve better utility bounds.
A more general question is to obtain utility bounds for the  Gaussian mechanism for the matrix approximation problems for arbitrary $\lambda$.
}

%
{\bf Our contribution.} Our main result is a new upper bound on the Frobenius-distance utility of the Gaussian mechanism for the  general matrix approximation problem for a given $M$ and $\lambda$ (Theorem \ref{thm_large_gap}). 
Our bound depends on the eigenvalues of $M$ and the entries of $\lambda$.

The novel insight is to view the perturbed matrix $M+E$  as a continuous-time symmetric matrix diffusion, where each entry of the matrix $M+E$ is the value reached by a (one-dimensional) Brownian motion after some time $T = \nfrac{\sqrt{\log(\frac{1}{\delta})}}{\eps}$. 
This matrix-valued Brownian motion, which we denote by $\Phi(t)$, induces a stochastic process on the eigenvalues $\gamma_1(t) \geq \cdots \geq \gamma_d(t)$ and corresponding eigenvectors $u_1(t), \ldots, u_d(t)$ of $\Phi(t)$ originally discovered by Dyson and now referred to as Dyson Brownian motion, with initial values $\gamma_i(0) = \sigma_i$ and $u_i(0)$ which are the eigenvalues and eigenvectors of the initial matrix $M$ \cite{dyson1962brownian}.

We then use the stochastic differential equations  \eqref{eq_DBM_eigenvalues} and \eqref{eq_DBM_eigenvectors}, which govern the evolution of the eigenvalues and eigenvectors of the Dyson Brownian motion, to track the perturbations to each eigenvector.
Roughly speaking, these equations say that, as the Dyson Brownian motion  evolves over time, every pair of eigenvalues $\gamma_i(t)$ and $\gamma_j(t)$, and corresponding eigenvectors $u_i(t)$ and $u_j(t)$, interacts with the other eigenvalue/eigenvector with the magnitude of the interaction term proportional to $ \frac{1}{\gamma_i(t) - \gamma_j(t)}$ at any given time $t$. 
This allows us to bound the perturbation of the eigenvectors at every time $t$, provided that the initial gaps in the top $k+1$ eigenvalues of the input matrix are $\geq \Omega(\sqrt{d})$ (Assumption \ref{assumption_gaps}).
Empirically, we observe that Assumption \ref{assumption_gaps} is satisfied for  covariance matrices of many real-world datasets (see Section \ref{appendix_data}), as well as on Wishart random matrices  $W = A^\top A$, where $A$ is an $m \times d$ matrix of i.i.d. Gaussian entries, for sufficiently large $m$ (see Section \ref{appendix_wishart}).
We then derive a stochastic differential equation that tracks how the utility changes as the Dyson Brownian motion evolves over time (Lemma \ref{Lemma_projection_differntial}) and integrate this differential equation over time to obtain a bound on the (expectation of) the utility $\mathbb{E}[\|\hat{H} -H \|_F]$ (Lemma \ref{Lemma_integral}) as a function of the gaps $\gamma_i(t) - \gamma_j(t)$.
%

Plugging in basic estimates (Lemma \ref{lemma_gap_concentration}) for the eigenvalue gaps $\gamma_i(t) - \gamma_j(t)$ to Lemma \ref{Lemma_integral}, we obtain a bound on the expected utility $\mathbb{E}[\|\hat{H} -H \|_F]$ (Theorem \ref{thm_large_gap}) for the different matrix approximation problems as a function of the eigenvalue gaps $\sigma_i - \sigma_j$ of the input matrix $M$.
Roughly speaking, our bound is the square-root of a sum-of-squares of the ratios, $\frac{\lambda_i - \lambda_j}{\sigma_i - \max(\sigma_j, \sigma_{k+1})}$, of eigenvalue gaps of the input and output matrices.

When applied to the  rank-$k$ covariance matrix approximation problem (Corollary \ref{cor_rank_k_covariance2}), Theorem \ref{thm_large_gap} implies a bound of $\mathbb{E}[\|\hat{H} -H \|_F] \leq \tilde{O}(\sqrt{k d})$ whenever the eigenvalues $\sigma$ of the input matrix $M$ satisfy $\sigma_k - \sigma_{k+1} \geq \Omega(\sigma_k)$ and the gaps in top $k+1$ eigenvalues satisfy $\sigma_i - \sigma_{i+1} \geq \tilde{\Omega}(\sqrt{d})$.
Thus, when $M$ satisfies the above condition on $\sigma$, our bound improves by a factor
of $\sqrt{k}$ on the (expectation of) the previous bound of \cite{dwork2014analyze}, which says that  $\|\hat{H}- M\|_F - \|H- M\|_F \leq \tilde{O}(k \sqrt{d})$ w.h.p., since by the triangle inequality $\|\hat{H}- M\|_F - \|H- M\|_F \leq \|\hat{H}- H\|_F$.
This condition on $\sigma$ is satisfied, e.g., for matrices $M$ whose eigenvalue gaps are at least as large as those of the Wishart random covariance matrices with sufficiently many datapoints $m$ (see Section \ref{sec_results} for details).
And, if $\sigma$ is such that $\sigma_i - \sigma_{i+1} \geq \Omega(\sigma_k - \sigma_{k+1})$ for $i \leq k$, Theorem \ref{thm_large_gap} implies a bound of $\mathbb{E}[\|\hat{H} -H \|_F] \leq \tilde{O}(\nfrac{\sqrt{d}}{(\sigma_k- \sigma_{k+1})})$ for the  subspace recovery
problem (Corollary \ref{cor_subspace_recovery}), improving by a factor of $\sqrt{k}$ (in expectation) on the previous bound of \cite{dwork2014analyze}, which implies that  $\|\hat{H}- M\|_F - \|H- M\|_F \leq \tilde{O}\left(\nfrac{\sqrt{kd}}{(\sigma_k - \sigma_{k+1})}\right)$ w.h.p.

\section{Results} \label{sec_results}

Our main result (Theorem \ref{thm_large_gap}) gives a new and unified upper bound on the Frobenius-norm utility of a post-processing of the Gaussian mechanism, for the general matrix approximation problem where one is given a symmetric matrix $M \in \mathbb{R}^{d \times d}$ and a vector $\lambda$ with $\lambda_1 \geq \cdots \geq \lambda_d$, and the goal is to compute a matrix $\hat{H}$ with eigenvalues $\lambda$ which minimizes the distance $\|\hat{H}- H\|_F$.
Here $H$ is the matrix with eigenvalues $\lambda$ and the same eigenvectors as $M$.
Plugging in different choices of $\lambda$ to Theorem \ref{thm_large_gap}, we obtain as corollaries new Frobenius-distance utility bounds for the rank-$k$ covariance matrix approximation problem (Corollary \ref{cor_rank_k_covariance2}) and the subspace recovery problem (Corollary \ref{cor_subspace_recovery}).
Our results rely on the following assumption about the eigenvalues of the input matrix $M$:

\begin{assumption}[($M,k,\lambda_1, \eps, \delta$) Eigenvalue gaps]\label{assumption_gaps}
The gaps in the top $k+1$ eigenvalues eigenvalues $\sigma_1 \geq \cdots \geq \sigma_d$ of the matrix $M \in \mathbb{R}^{d \times d}$ satisfy $\sigma_i - \sigma_{i+1} \geq  \frac{8\sqrt{\log(\frac{1.25}{\delta})}}{\eps} \sqrt{d} + 3\log^{\frac{1}{2}}(\lambda_1 k)$ for every $i \in [k]$.
\end{assumption}
\noindent
We observe empirically that Assumption \ref{assumption_gaps} is satisfied on a number of real-world datasets which were previously used as benchmarks in the differentially private matrix approximation  literature \cite{chaudhuri2012near, amin2019differentially} (see Section \ref{appendix_data}).
 Assumption \ref{assumption_gaps} is also  satisfied, for instance, by random Wishart matrices  $W = A^\top A$, where $A$ is an $m \times d$ matrix of i.i.d. Gaussian entries, which are a popular model for sample covariance matrices \cite{wishart1928generalised}.
This is because the minimum  gap $\sigma_i - \sigma_{i+1}$ of a Wishart matrix grows proportional to $\sqrt{m}$ with high probability; thus for large enough $m$, Assumption \ref{assumption_gaps} holds (see Section \ref{appendix_wishart} for details).
Hence, the assumption  requires that the gaps in the top $k+1$ eigenvalues of  $M$ are at least as large as the gaps in a random Wishart matrix.

\begin{theorem}[\bf Main result]\label{thm_large_gap}
Let $\eps, \delta>0$, and given a symmetric matrix $M \in \mathbb{R}^{d \times d}$ with  eigenvalues $\sigma_1 \geq \cdots \geq \sigma_d$ and corresponding orthonormal eigenvectors $v_1,\ldots, v_d$.
Let $G$ be a matrix with i.i.d. $N(0,1)$ entries, and consider the mechanism that outputs $\hat{M} = M+  \frac{\sqrt{2\log(\frac{1.25}{\delta})}}{\eps}(G +G^\top)$.
Then such a mechanism is $(\eps,\delta)$-differentially private. 
Moreover, let $\lambda_1 \geq \cdots \geq \lambda_d$ and $k \in [d]$ be any numbers such that $\lambda_i = 0$ for $i>k$, and define $\Lambda := \mathrm{diag}(\lambda_1, \ldots, \lambda_d)$ and $V = [v_1,\ldots, v_d]$, and define $\hat{\sigma}_1 \geq \cdots \geq \hat{\sigma}_d$ to be the eigenvalues of $\hat{M}$ with corresponding orthonormal eigenvectors $\hat{v}_1,\ldots, \hat{v}_d$ and $\hat{V} = [\hat{v}_1,\ldots,\hat{v}_d]$.
Then if $M$ satisfies Assumption \ref{assumption_gaps} for ($M,k,\lambda_1,\eps,\delta$), we have 
\begin{equation*}
    \mathbb{E}\left[\| \hat{V} \Lambda \hat{V}^\top -  V \Lambda V^\top \|_F^2\right]
\leq       O\left(\sum_{i=1}^{k}  \sum_{j = i+1}^d  \frac{(\lambda_i - \lambda_j)^2}{(\sigma_i-\max(\sigma_j, \sigma_{k+1}))^2}
     \right) \frac{\log(\frac{1}{\delta})}{\eps^2}.
     %
    %
\end{equation*}
\end{theorem}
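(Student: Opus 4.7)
The plan has two parts: a standard Gaussian-mechanism argument for the $(\eps,\delta)$-differential privacy claim, and a Dyson Brownian motion analysis for the utility bound. The privacy part is essentially mechanical: if $M = A^\top A$ and $M' = (A')^\top A'$ correspond to datasets differing in a single row (each row of norm at most $1$), then the upper triangle of $M - M'$, viewed as a vector, has constant $\ell_2$-norm, and the per-coordinate noise scale $\Theta(\sqrt{\log(1/\delta)}/\eps)$ in the upper triangle of $\frac{\sqrt{2\log(1.25/\delta)}}{\eps}(G+G^\top)$ is exactly the standard choice for the $(\eps,\delta)$-Gaussian mechanism.

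For the utility bound, I reinterpret the noise matrix $E := \hat M - M$ as the terminal value $B(T)$ of a symmetric matrix-valued Brownian motion $B(t)$, run for time $T = \Theta(\log(1/\delta)/\eps^2)$. Setting $\Phi(t) := M + B(t)$, the eigenvalues $\gamma_1(t) \geq \cdots \geq \gamma_d(t)$ and orthonormal eigenvectors $u_1(t),\ldots,u_d(t)$ of $\Phi(t)$ interpolate between $(\sigma_i, v_i)$ at time $0$ and (a signed version of) $(\hat\sigma_i, \hat v_i)$ at time $T$, evolving according to the Dyson SDEs \eqref{eq_DBM_eigenvalues}--\eqref{eq_DBM_eigenvectors}. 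Critically, the eigenvector SDE is driven by Brownian increments with coefficients proportional to $1/(\gamma_i(t)-\gamma_j(t))$.

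Now define the utility process
\begin{equation*}
F(t) := \|U(t)\Lambda U(t)^\top - V\Lambda V^\top\|_F^2 = \sum_{i,j} (\lambda_i - \lambda_j)^2 \, \langle u_i(t), v_j\rangle^2,
\end{equation*}
with $U(t) := [u_1(t),\ldots,u_d(t)]$; this is gauge-invariant, depending on the $u_i$ only through the spectral projectors $u_i u_i^\top$, hence insensitive to the sign ambiguity of the eigenvectors. Applying It\^o's formula together with the eigenvector Dyson SDE (Lemma \ref{Lemma_projection_differntial}) produces an SDE for $F(t)$ whose drift and quadratic variation are sums of $(\lambda_i-\lambda_j)^2/(\gamma_i(t)-\gamma_j(t))^2$. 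Taking expectations and integrating over $[0,T]$ (Lemma \ref{Lemma_integral}) yields
\begin{equation*}
\mathbb{E}[F(T)] \ \lesssim \ \sum_{i=1}^{k} \sum_{j \neq i} (\lambda_i-\lambda_j)^2 \, \mathbb{E}\!\int_0^T \frac{dt}{(\gamma_i(t)-\gamma_j(t))^2},
\end{equation*}
where the outer index can be restricted to $i \leq k$ because $\lambda_i - \lambda_j$ vanishes when both indices exceed $k$.

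Finally, I invoke the gap-concentration estimate (Lemma \ref{lemma_gap_concentration}), which uses Assumption \ref{assumption_gaps} and the eigenvalue Dyson SDE to show that $\gamma_i(t) - \gamma_j(t)$ stays $\Omega(\sigma_i - \max(\sigma_j,\sigma_{k+1}))$ uniformly on $[0,T]$ with probability large enough to survive taking expectations. Substituting this lower bound and using $T = \Theta(\log(1/\delta)/\eps^2)$ produces the claimed inequality. The main obstacle will be the It\^o calculation in Lemma \ref{Lemma_projection_differntial}: since eigenvectors are only defined up to sign and become undefined at eigenvalue collisions, the computation must be carried out in terms of spectral projectors, and one must verify that the It\^o correction terms combine in just the right way so that the growth of $F(t)$ is controlled by a \emph{sum-of-squares} of inverse gaps (which yields the $\sqrt{\text{sum-of-squares}}$ structure of the final Frobenius bound) rather than a sum-of-absolute-values (which is what a Davis--Kahan-type perturbation argument gives, and is precisely what costs the extra $\sqrt{k}$ factor). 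A secondary obstacle is propagating the initial eigenvalue-gap condition of Assumption \ref{assumption_gaps} to a uniform-in-$t$ gap lower bound without losing extra $\log$ factors.
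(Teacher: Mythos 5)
You follow essentially the same route as the paper (Dyson Brownian motion started at $M$, It\^o calculus on the projector process $U(t)\Lambda U(t)^\top$, and a Weyl-plus-concentration bound keeping the top gaps large on $[0,T]$), and the privacy part is fine, but there is a genuine gap at the final step. Your plan is to bound $\mathbb{E}\int_0^T (\gamma_i(t)-\gamma_j(t))^{-2}\,dt$ by ``substituting'' the high-probability lower bound of Lemma \ref{lemma_gap_concentration}. That substitution is not valid inside an expectation: on the complementary event the gaps can be arbitrarily small, and $\mathbb{E}[\Delta_{ij}(t)^{-2}]$ need not even be finite (the gap density of a GOE-type process vanishes only polynomially at zero), so an event of probability $1-p$ does not control the expectation of an inverse-squared gap no matter how small $p$ is. The paper's proof needs an extra device precisely here: it introduces the capped diffusion $Z_\eta(t)$ in which every denominator is replaced by $\max(|\Delta_{ij}(t)|,\eta_{ij})$ with $\eta_{ij}=\tfrac14(\sigma_i-\max(\sigma_j,\sigma_{k+1}))$, proves Lemma \ref{Lemma_integral} for $Z_\eta$ (the lemma you cite is stated for $Z_\eta$, not for $\Psi$), shows $\Psi(t)=Z_\eta(t)$ on the good event $E$, and bounds the bad-event contribution by the deterministic bound $\|\hat V\Lambda\hat V^\top-V\Lambda V^\top\|_F^2\le O(\|\Lambda\|_F^2)\le O(\lambda_1^2k)$ times $\mathbb{P}(E^c)\le 1/(\lambda_1^2k)$; this is exactly why Assumption \ref{assumption_gaps} carries the extra $3\log^{1/2}(\lambda_1 k)$ term. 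Without the truncation (or an equivalent conditioning argument together with the almost-sure boundedness of the utility), ``probability large enough to survive taking expectations'' does not follow.

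A second, smaller inaccuracy: the It\^o step does not produce only sum-of-squares terms. By Lemma \ref{Lemma_orbit_differntial}, $\mathrm{d}\Psi(t)$ has a nonzero finite-variation drift $-\sum_i\sum_{j\neq i}(\lambda_i-\lambda_j)\Delta_{ij}^{-2}(t)\,u_i(t)u_i^\top(t)\,\mathrm{d}t$, and its contribution to $\|\Psi(T)-\Psi(0)\|_F^2$ is the square of a time integral; after Cauchy--Schwarz it yields a term of the form
\begin{equation*}
T\int_0^T\sum_{i=1}^{d}\Big(\sum_{j\neq i}\frac{\lambda_i-\lambda_j}{\Delta_{ij}^2(t)}\Big)^2\mathrm{d}t,
\end{equation*}
which is not of the form $\sum_{i,j}(\lambda_i-\lambda_j)^2\Delta_{ij}^{-2}(t)$. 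One must then show this drift term is dominated by the martingale (sum-of-squares) term; the paper does this with a second application of Cauchy--Schwarz using the fact that Assumption \ref{assumption_gaps} forces $\sigma_i-\max(\sigma_j,\sigma_{k+1})\ge\Omega(\sqrt d)$ --- a second, separate use of the gap assumption that your outline omits. Both issues are fixable with the paper's devices, but as written your proposal skips the two steps that actually make the expectation bound go through.
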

\noindent
The fact that the mechanism in this theorem is $(\eps,\delta)$-differentially private follows from standard results about the Gaussian mechanism \cite{dwork2014analyze}.
Given any list of eigenvalues $\lambda$, and letting $\Lambda = \mathrm{diag}(\lambda)$, one can post-process the matrix $\hat{M}$ by computing its spectral decomposition $\hat{M} = \hat{V}\hat{\Sigma} \hat{V}^\top$ and  replacing its eigenvalues to obtain a matrix $\hat{V} \Lambda \hat{V}^\top$ with eigenvalues $\lambda$ and eigenvectors $\hat{V}$.
Since $\hat{V} \Lambda \hat{V}^\top$ is a post-processing of the Gaussian mechanism, the mechanism which outputs $\hat{V} \Lambda \hat{V}^\top$ is differentially private as well. 
 Theorem \ref{thm_large_gap} bounds the excess utility $\mathbb{E}[\|\hat{V} \Lambda \hat{V}^\top - V \Lambda V^\top\|_F^2]$  (whenever the gaps in the eigenvalues $\sigma_1 \geq \cdots \geq  \sigma_d$ of the input matrix satisfy Assumption \ref{assumption_gaps}) as a sum-of-squares of the ratio of the gaps $\lambda_i-\lambda_j$ in the given eigenvalues to the corresponding gaps $\sigma_i-\max(\sigma_j, \sigma_{k+1})$ in the eigenvalues of the input matrix (note that $\lambda_i-\lambda_j = \lambda_i-\max(\lambda_j, \lambda_{k+1})$ since $\lambda_j = 0$ for $j \geq k+1$).

While we do not know if Theorem \ref{thm_large_gap} is tight for all choices of $\lambda$ and $k$, it does give a tight bound for some problems.
Namely, when applied to the covariance matrix estimation problem, in the special case where $k=d$ Theorem \ref{thm_large_gap} implies a bound of $\mathbb{E}[\|\hat{M} - M \|_F] \leq \tilde{O}(\sqrt{kd}) = O(d)$ (see Corollary \ref{cor_rank_k_covariance2}).
Since $\hat{M} - M = \frac{\sqrt{2\log(\frac{1.25}{\delta})}}{\eps}(G +G^\top)$,
 the matrix $\hat{M} - M$ has independent Gaussian entries with mean zero and variance $\tilde{O}(1)$, and we have from concentration results for Gaussian random matrices (see e.g. Theorem 2.3.6 of \cite{tao2012topics}) that $\mathbb{E}[\|\hat{M} - M \|_F] = \tilde\Omega(d)$, implying that the bound in Theorem \ref{thm_large_gap} is tight in this case.

The proof of Theorem \ref{thm_large_gap} differs from prior works, including that of \cite{dwork2014analyze} which use Davis-Kahan-type theorems \cite{davis1970rotation} and trace inequalities, and instead relies on an interpretation of the Gaussian mechanism as a diffusion process which may be of independent interest (See Section \ref{sec_challenges} for additional comparison to previous approaches).
This connection allows us to use sophisticated tools from stochastic differential equations and random matrix theory. We present an outline of the proof in Section \ref{sec:proof}.

 \paragraph{Application to covariance matrix approximation:}
 Plugging $\lambda_i = \sigma_i$ for $i \leq k$ and $\lambda_i=0$ for $i>k$ into Theorem \ref{thm_large_gap}, and plugging in concentration bounds for the perturbation to the eigenvalues $\sigma_i$, we obtain utility bounds for covariance matrix approximation:

\begin{corollary}[\bf Rank-$k$ covariance matrix approximation]\label{cor_rank_k_covariance2}
Let $\eps, \delta>0$, and given a symmetric matrix $M \in \mathbb{R}^{d \times d}$ with eigenvalues $\sigma_1 \geq \cdots \geq \sigma_d$ and corresponding orthonormal eigenvectors $v_1,\ldots, v_d$.
Let $G$ be a matrix with i.i.d. $N(0,1)$ entries, and consider the mechanism that outputs $\hat{M} = M+  \frac{\sqrt{2\log(\frac{1.25}{\delta})}}{\eps}(G +G^\top)$.
Then such a mechanism is $(\eps,\delta)$-differentially private. 
Moreover, for any $k \in [d]$, define $\Sigma_k := \mathrm{diag}(\sigma_1, \ldots, \sigma_k, 0 \ldots, 0)$ and $V = [v_1,\ldots, v_d]$, and define $\hat{\sigma}_1 \geq \cdots \geq \hat{\sigma}_d$ to be the eigenvalues of $\hat{M}$ with corresponding orthonormal eigenvectors $\hat{v}_1,\ldots, \hat{v}_d$, and define $\hat{\Sigma}_k := \mathrm{diag}(\hat{\sigma}_1, \ldots, \hat{\sigma}_k, 0 \ldots, 0)$ and $\hat{V} := [\hat{v}_1,\ldots,\hat{v}_d]$. 
Then if $M$ satisfies Assumption \ref{assumption_gaps} for ($M,k,\sigma_1,\eps,\delta$), and defining $\sigma_{d+1}:=0$, we have
 $$ \mathbb{E}\left[\| \hat{V} \hat{\Sigma}_k \hat{V}^\top -  V \Sigma_k V^\top \|_F \right] \leq  O\left(\sqrt{kd} \times  \frac{\sigma_k}{\sigma_{k}-\sigma_{k+1}
    }\right) \frac{\log^{\frac{1}{2}}(\frac{1}{\delta})}{\eps}.$$

\end{corollary}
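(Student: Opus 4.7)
The plan is to reduce the corollary to Theorem \ref{thm_large_gap} via a triangle-inequality decomposition that isolates the spectral error from the eigenvector-rotation error. Concretely, I would write
$$\hat V\hat\Sigma_k\hat V^\top - V\Sigma_k V^\top = \hat V(\hat\Sigma_k-\Sigma_k)\hat V^\top + \bigl(\hat V\Sigma_k\hat V^\top - V\Sigma_k V^\top\bigr),$$
apply $\mathbb{E}\|X+Y\|_F \leq \mathbb{E}\|X\|_F + \mathbb{E}\|Y\|_F$, and bound each expectation separately.

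For the spectral piece, orthogonality of $\hat V$ gives $\|\hat V(\hat\Sigma_k-\Sigma_k)\hat V^\top\|_F^2 = \sum_{i=1}^k(\hat\sigma_i-\sigma_i)^2$. Weyl's inequality bounds each summand by $\|E\|_{\mathrm{op}}^2$ with $E := \hat M - M$, and since $E$ is a rescaled symmetric Gaussian matrix, standard concentration (e.g., Theorem 2.3.6 of \cite{tao2012topics}) yields $\mathbb{E}[\|E\|_{\mathrm{op}}^2] = O(d\log(1/\delta)/\eps^2)$. Jensen's inequality then gives $\mathbb{E}\|\hat\Sigma_k-\Sigma_k\|_F \leq O(\sqrt{kd\log(1/\delta)}/\eps)$.

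For the eigenvector-rotation piece, I would invoke Theorem \ref{thm_large_gap} with $\lambda_i = \sigma_i$ for $i \leq k$ and $\lambda_i = 0$ otherwise, and unpack the resulting double sum. For $i+1 \leq j \leq k$, $\lambda_i-\lambda_j = \sigma_i-\sigma_j$ and $\max(\sigma_j,\sigma_{k+1}) = \sigma_j$, so each summand equals $1$ and these contribute at most $\binom{k}{2}$. For $j \geq k+1$, $\lambda_j = 0$ and $\max(\sigma_j,\sigma_{k+1}) = \sigma_{k+1}$, so each summand equals $\sigma_i^2/(\sigma_i-\sigma_{k+1})^2$. The key (and essentially only non-routine) arithmetic step is the monotonicity observation: writing $x/(x-\sigma_{k+1}) = 1 + \sigma_{k+1}/(x-\sigma_{k+1})$ shows this function is decreasing on $(\sigma_{k+1},\infty)$, so since Assumption \ref{assumption_gaps} guarantees $\sigma_k > \sigma_{k+1}$, for every $i \leq k$
$$\frac{\sigma_i}{\sigma_i-\sigma_{k+1}} \;\leq\; \frac{\sigma_k}{\sigma_k-\sigma_{k+1}}.$$
Plugging this in bounds the $j \geq k+1$ portion by $k(d-k)\sigma_k^2/(\sigma_k-\sigma_{k+1})^2$, making the whole sum $O(kd)\cdot(\sigma_k/(\sigma_k-\sigma_{k+1}))^2$. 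Theorem \ref{thm_large_gap} combined with Jensen then delivers the target bound for this piece.

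Combining the two pieces (the spectral piece is absorbed into the eigenvector-rotation piece since $\sigma_k/(\sigma_k-\sigma_{k+1}) \geq 1$) yields the claim. The main obstacle is really just the monotonicity step, since without it one would be stuck with a bound involving $\sigma_1/(\sigma_k-\sigma_{k+1})$, strictly weaker than the stated $\sigma_k/(\sigma_k-\sigma_{k+1})$.
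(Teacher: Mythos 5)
Your proposal is correct and follows essentially the same route as the paper: the same triangle-inequality split into the eigenvalue-perturbation piece $\hat V(\hat\Sigma_k-\Sigma_k)\hat V^\top$ (handled by Weyl's inequality, Gaussian operator-norm concentration, and Jensen) and the rotation piece $\hat V\Sigma_k\hat V^\top - V\Sigma_k V^\top$ (handled by Theorem \ref{thm_large_gap} with $\lambda_i=\sigma_i$ for $i\le k$ and $\lambda_i=0$ otherwise, with each summand bounded by $O(\sigma_k/(\sigma_k-\sigma_{k+1}))$). The only cosmetic differences are that the paper bounds $\sigma_i/(\sigma_i-\sigma_{k+1})$ via the split $\frac{\sigma_i-\sigma_k}{\sigma_i-\sigma_{k+1}}+\frac{\sigma_k}{\sigma_i-\sigma_{k+1}}\le 1+\frac{\sigma_k}{\sigma_k-\sigma_{k+1}}$ rather than your monotonicity observation (both arguments, like your absorption step, implicitly use $\sigma_{k+1}\ge 0$ from the covariance setting), and that the paper also spells out the routine sensitivity argument for the $(\eps,\delta)$-privacy claim, which your write-up leaves implicit.
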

\noindent
The proof appears in Section \ref{sec:covariance}.
If $\sigma_k - \sigma_{k+1} = \Omega(\sigma_k)$, then Corollary \ref{cor_rank_k_covariance2} implies that 
    $$\mathbb{E}\left[\| \hat{V} \hat{\Sigma}_k \hat{V}^\top -  V \Sigma_k V^\top \|_F\right] \leq O\left( \sqrt{kd}  \frac{\log^{\frac{1}{2}}(\frac{1}{\delta})}{\eps}\right).$$
Thus, for matrices $M$ with eigenvalues satisfying Assumption \ref{assumption_gaps} and where $\sigma_k - \sigma_{k+1} = \Omega(\sigma_k)$, Corollary \ref{cor_rank_k_covariance2} improves by a factor of $\sqrt{k}$ on the bound in Theorem 7 of \cite{dwork2014analyze} which says $\| \hat{V} \hat{\Sigma}_k \hat{V}^\top - M\|_F - \|V \Sigma_k V^\top - M\|_F = \tilde{O}(k\sqrt{d})$ w.h.p..
This is because an upper bound on $\| \hat{V} \hat{\Sigma}_k \hat{V}^\top -  V \Sigma_k V^\top \|_F$ implies an upper bound on $\| \hat{V} \hat{\Sigma}_k \hat{V}^\top - M\|_F - \|V \Sigma_k V^\top - M\|_F$ by the triangle inequality.
 On the other hand, while their result does not require a bound on the gaps in the eigenvalue of $M$ and bounds their utility w.h.p., our Corollary \ref{cor_subspace_recovery} requires a bound on the gaps of the top $k+1$ eigenvalues of $M$ and bounds the expected utility $\mathbb{E}[\| \hat{V} \hat{\Sigma}_k \hat{V}^\top -  V \Sigma_k V^\top \|_F ]$.

\paragraph{Application to subspace recovery:} Plugging in $\lambda_1 = \cdots = \lambda_k =1$ and $\lambda_{k+1} = \cdots = \lambda_d = 0$, the post-processing step in Theorem \ref{thm_large_gap} outputs a projection matrix, and we obtain utility bounds for the subspace recovery problem.

\begin{corollary}[\bf Subspace recovery]\label{cor_subspace_recovery}
Let $\eps,\delta>0$, and given a symmetric matrix $M \in$ $\mathbb{R}^{d \times d}$ with eigenvalues $\sigma_1 \geq \cdots \geq \sigma_d$ and corresponding orthonormal eigenvectors $v_1,\ldots, v_d$.
Let $G$ be a matrix with i.i.d. $N(0,1)$ entries, and consider the mechanism that outputs $\hat{M} = M+  \frac{\sqrt{2\log(\frac{1.25}{\delta})}}{\eps}(G +G^\top)$.
Then such a mechanism is $(\eps,\delta)$-differentially private. 
Moreover, for any $k \in [d]$, define the $d\times k$ matrices  $V_k = [v_1,\ldots,v_k]$ and $\hat{V}_k = [\hat{v}_1,\ldots,\hat{v}_k]$, where $\hat{\sigma}_1 \geq \cdots \geq \hat{\sigma}_d$ denote the eigenvalues of $\hat{M}$ with corresponding orthonormal eigenvectors $\hat{v}_1,\ldots, \hat{v}_d$. 
Then if $M$ satisfies Assumption \ref{assumption_gaps} for ($M,k,2,\eps,\delta$), we have
$    \mathbb{E}\left[\| \hat{V}_k \hat{V}_k^\top -  V_k V_k^\top \|_F \right] \leq  O\left(\frac{\sqrt{kd}}{\sigma_{k}-\sigma_{k+1}
    }\times \frac{\log^{\frac{1}{2}}(\frac{1}{\delta})}{\eps}\right).$
Moreover, if we also have that  $\sigma_i - \sigma_{i+1} \geq \Omega(\sigma_k - \sigma_{k+1})$ for all $i \leq k$, then
$$\mathbb{E}\left[\| \hat{V}_k \hat{V}_k^\top -  V_k V_k^\top \|_F \right] \leq  O\left(\frac{\sqrt{d}}{\sigma_{k}-\sigma_{k+1}
    }\times \frac{\log^{\frac{1}{2}}(\frac{1}{\delta})}{\eps}\right).$$
\end{corollary}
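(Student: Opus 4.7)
The plan is to instantiate Theorem \ref{thm_large_gap} with the choice $\lambda_1 = \cdots = \lambda_k = 1$ and $\lambda_{k+1} = \cdots = \lambda_d = 0$, under which $\Lambda$ is the projection onto the first $k$ coordinates and therefore $V\Lambda V^\top = V_k V_k^\top$ and $\hat{V}\Lambda\hat{V}^\top = \hat{V}_k \hat{V}_k^\top$. The $(\varepsilon,\delta)$-differential privacy conclusion transfers verbatim from Theorem \ref{thm_large_gap}. The corollary's hypothesis, Assumption \ref{assumption_gaps} for $(M,k,2,\varepsilon,\delta)$, is at least as strong as what Theorem \ref{thm_large_gap} asks for with $\lambda_1 = 1$ (since $\log^{1/2}(\lambda_1 k)$ is monotone in $\lambda_1$), so the theorem applies.

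Next I would simplify the double sum in the bound. Pairs with $i,j \in [k]$ contribute zero because $\lambda_i - \lambda_j = 0$; pairs with $i \in [k]$ and $j > k$ satisfy $\lambda_i - \lambda_j = 1$ and $\max(\sigma_j, \sigma_{k+1}) = \sigma_{k+1}$ since $\sigma_j \leq \sigma_{k+1}$. Thus the sum collapses to
\begin{equation*}
(d-k)\sum_{i=1}^k \frac{1}{(\sigma_i - \sigma_{k+1})^2} \;\leq\; \frac{kd}{(\sigma_k - \sigma_{k+1})^2},
\end{equation*}
using the trivial lower bound $\sigma_i - \sigma_{k+1} \geq \sigma_k - \sigma_{k+1}$ for all $i \leq k$. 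Substituting into Theorem \ref{thm_large_gap}, taking square roots, and applying $\mathbb{E}[X] \leq \sqrt{\mathbb{E}[X^2]}$ (Jensen's inequality) yields the first displayed bound.

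For the improved bound under the additional assumption that $\sigma_i - \sigma_{i+1} \geq \Omega(\sigma_k - \sigma_{k+1})$ for every $i \leq k$, I would sharpen the estimate on each individual $\sigma_i - \sigma_{k+1}$ by telescoping:
\begin{equation*}
\sigma_i - \sigma_{k+1} \;=\; \sum_{\ell=i}^k (\sigma_\ell - \sigma_{\ell+1}) \;\geq\; \Omega\bigl((k-i+1)(\sigma_k - \sigma_{k+1})\bigr).
\end{equation*}
This converts $\sum_{i=1}^k (\sigma_i - \sigma_{k+1})^{-2}$ into $O((\sigma_k - \sigma_{k+1})^{-2})\sum_{m=1}^k m^{-2} = O((\sigma_k - \sigma_{k+1})^{-2})$, so the entire double sum is $O(d/(\sigma_k - \sigma_{k+1})^2)$ and the spurious $\sqrt{k}$ factor disappears after the square-root-and-Jensen step.

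There is no substantial obstacle, since all the probabilistic and stochastic heavy lifting was already done inside Theorem \ref{thm_large_gap}; the corollary is essentially a bookkeeping computation on the gap-ratio sum. The only step requiring any care is the telescoping argument for the improved bound, where it is essential that the gap assumption holds for \emph{every} $i \in [k]$ (not merely at $i = k$), so that the sum of reciprocals of squared gaps becomes a convergent $\sum_{m \geq 1} m^{-2}$ rather than being bounded crudely term-by-term.
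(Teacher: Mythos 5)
Your proposal is correct and follows essentially the same route as the paper: instantiate Theorem \ref{thm_large_gap} with $\lambda_1=\cdots=\lambda_k=1$, $\lambda_{k+1}=\cdots=\lambda_d=0$, observe that only the $i\le k<j$ terms survive with $\max(\sigma_j,\sigma_{k+1})=\sigma_{k+1}$, bound each gap by $\sigma_k-\sigma_{k+1}$ for the first inequality, telescope the per-gap assumption to get a convergent $\sum_m m^{-2}$ for the second, and finish with Jensen's inequality. Your explicit remark that Assumption \ref{assumption_gaps} for $(M,k,2,\eps,\delta)$ dominates the hypothesis needed with $\lambda_1=1$ is a small point the paper leaves implicit, but it does not change the argument.
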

\noindent
The proof appears in Section \ref{sec:cor_rank_subspace}.
For matrices $M$ satisfying Assumption \ref{assumption_gaps}, the first inequality 
of Corollary \ref{cor_subspace_recovery}  recovers (in expectation) the Frobenius-norm utility bound implied by Theorem 6 of \cite{dwork2014analyze}, which states that $\| \hat{V}_k \hat{V}_k^\top -  V_k  V_k^\top \|_F \leq O\left(\frac{\sqrt{kd}}{\sigma_k - \sigma_{k+1}} \times \frac{\log^{\frac{1}{2}}(\frac{1}{\delta})}{\eps}\right)$ w.h.p.
Moreover, for many input matrices, $M$ with spectral profiles $\sigma_1 \geq \cdots \geq \sigma_d$ satisfying Assumption \ref{assumption_gaps}, Theorem \ref{thm_large_gap} implies stronger bounds than those in \cite{dwork2014analyze} for the subspace recovery problem.
For instance, if we also have that $\sigma_i - \sigma_{i+1} \geq \Omega(\sigma_k - \sigma_{k+1})$ for all $i \leq k$, the bound given in the second inequality  of Corollary \ref{cor_subspace_recovery} improves on the bound of \cite{dwork2014analyze} by a factor of $\sqrt{k}$.
 On the other hand, while their result only requires that $\sigma_k - \sigma_{k+1} \geq \sqrt{d}$ and bounds the Frobenius distance $\|\hat{V}_k \hat{V}_k^\top -  V_k  V_k^\top \|_F$ w.h.p., our Corollary \ref{cor_subspace_recovery} requires a bound on the gaps of the top $k+1$ eigenvalues of $M$ and bounds the expected Frobenius distance $\mathbb{E}[\| \hat{V}_k \hat{V}_k^\top -  V_k  V_k^\top \|_F]$.

\section{Preliminaries}

{\bf Brownian motion and stochastic calculus.}
A Brownian motion $W(t)$ in $\mathbb{R}$ is a continuous process that has stationary
independent increments  (see e.g., \cite{morters2010brownian}).
In a  multi-dimensional Brownian motion,  each coordinate is an independent and identical Brownian motion. 
The filtration $\mathcal{F}_t$  generated by  $W(t)$ is defined as $\sigma \left(\cup_{s \leq t} \sigma(W(s))\right)$, where $\sigma(\Omega)$ is the $\sigma$-algebra generated by $\Omega$.
$W(t)$ is a martingale with respect to $\mathcal{F}_t$.

\begin{definition}[\bf It\^o Integral]
Let $W(t)$ be a Brownian motion for $t \geq 0$, let $\mathcal{F}_t$ be the filtration generated by $W(t)$, and let $z(t) :  \mathcal{F}_t \rightarrow \mathbb{R}$ be a stochastic process adapted to $\mathcal{F}_t$.
The It\^o integral is defined as 
$\int_0^T z(t) \mathrm{d}W(t) := \lim_{\omega \rightarrow 0} \sum_{i=1}^{\frac{T}{\omega}} z(i\omega)\times[W((i+1)\omega) -W(i\omega)].$   
\end{definition}
\begin{lemma}[\bf It\^o's Lemma, integral form with no drift; Theorem 3.7.1 of \cite{lawler2010stochastic}] \label{lemma_ito_lemma_new}

Let $f:  \mathbb{R}^n \rightarrow \mathbb{R}$ be any twice-differentiable function.
Let $W(t) \in \mathbb{R}^n$  be a Brownian motion, and let $X(t)  \in \mathbb{R}^n$ be an It\^o diffusion process with mean zero defined by the following stochastic differential equation:
\begin{equation}
\mathrm{d}X_j(t) = \sum_{i=1}^d R_{i j }(t) \mathrm{d}W_i(t),
\end{equation}
for some It\^o diffusion $R(t) \in \mathbb{R}^{n \times n}$ adapted to the filtration generated by the Brownian motion $W(t)$.
Then for any $T\geq 0$,
\begin{align*}
  f(X(T)) -f(X(0)) &= \int_0^T \sum_{i=1}^n \sum_{\ell=1}^n \left(\frac{\partial}{\partial X_\ell} f(X(t))\right) R_{i \ell}(t) \mathrm{d}W_i(t)\\
  &\qquad +  \frac{1}{2} \int_0^T \sum_{i=1}^n \sum_{j=1}^n \sum_{\ell=1}^n \left(\frac{\partial^2}{\partial X_{j} \partial X_\ell} f(X(t))\right) R_{i j }(t) R_{i \ell}(t) \mathrm{d}t.
\end{align*}
\end{lemma}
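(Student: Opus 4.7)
The plan is to follow the classical discretize--Taylor-expand--take-limits proof of It\^o's lemma, as in Lawler. Fix a mesh $\omega = T/N$ with $t_n = n\omega$, and telescope
\begin{equation*}
f(X(T)) - f(X(0)) = \sum_{n=0}^{N-1} \bigl[f(X(t_{n+1})) - f(X(t_n))\bigr].
\end{equation*}
Taylor-expand each increment to second order in $\Delta X^{(n)} := X(t_{n+1}) - X(t_n)$, and substitute the SDE approximation $\Delta X_j^{(n)} \approx \sum_i R_{ij}(t_n)\, \Delta W_i^{(n)}$, where $\Delta W_i^{(n)} := W_i(t_{n+1}) - W_i(t_n)$. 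This produces three groups of terms: a first-order It\^o sum, a second-order quadratic-variation sum, and a cubic Taylor remainder. The absence of drift ($\mathrm{d}X = R\,\mathrm{d}W$) means there is no extra first-order-in-$t$ contribution to track.

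The first-order group $\sum_n \sum_{i,\ell} \partial_\ell f(X(t_n))\, R_{i\ell}(t_n)\, \Delta W_i^{(n)}$ is precisely a left-endpoint Riemann approximation to an It\^o integral, and by It\^o's isometry together with adaptedness of the integrand it converges in $L^2$, as $\omega \to 0$, to the first term on the right-hand side of the lemma. For the second-order group, I would use the key identity $\mathbb{E}[\Delta W_i^{(n)} \Delta W_{i'}^{(n)} \mid \mathcal{F}_{t_n}] = \delta_{i,i'}\,\omega$ with conditional variance $O(\omega^2)$. After multiplying by $\partial_j \partial_\ell f(X(t_n))\, R_{ij}(t_n)\, R_{i'\ell}(t_n)$ and summing, the off-diagonal contributions with $i \neq i'$ vanish in $L^2$ (they form a martingale with quadratic variation of order $N\omega^2 = T\omega \to 0$), while the diagonal part is a Riemann sum converging to
\begin{equation*}
\tfrac{1}{2}\int_0^T \sum_{i,j,\ell} \partial_j \partial_\ell f(X(t))\, R_{ij}(t)\, R_{i\ell}(t)\,\mathrm{d}t.
\end{equation*}
The cubic Taylor remainder has $L^1$ norm bounded by $C\,N\,\omega^{3/2} = C\,T\,\omega^{1/2} \to 0$, since each $|\Delta W^{(n)}|$ is of order $\omega^{1/2}$.

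The main obstacle is justifying the limit passages without boundedness hypotheses on $f$, its derivatives, or the coefficients $R_{ij}(t)$. The standard remedy is localization: introduce stopping times $\tau_M := \inf\{t \geq 0 : |X(t)| + \|R(t)\|_F \geq M\}$, prove the formula for the stopped processes $X(\cdot \wedge \tau_M)$ where the above bounded-coefficient arguments apply verbatim, and then let $M \to \infty$ using the a.s.\ continuity of $X$ and $R$, together with dominated/monotone convergence applied to both the It\^o integral and the Lebesgue integral on the right-hand side. In our application $f$ will be smooth and $X(t), R(t)$ will arise from Dyson Brownian motion restricted to the good gap event of Assumption \ref{assumption_gaps}, so the localization is straightforward in context.
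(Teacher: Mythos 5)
The paper does not prove this lemma at all: it is imported verbatim as a standard result (Theorem 3.7.1 of Lawler's text), so there is no in-paper argument to compare against. Your sketch is the classical discretize--Taylor-expand--localize proof of It\^o's formula, which is exactly how the cited source establishes it, and it is sound as a proof outline; the only step you gloss is the replacement of $R(t)$ by its left-endpoint value $R(t_n)$ inside each increment, which requires an $L^2$ continuity estimate via the It\^o isometry (routine under your localization, where $R$ is bounded and has a.s.\ continuous paths, and one also needs $\tau_M \uparrow \infty$ a.s.\ to remove the localization).
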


\paragraph{Dyson Brownian motion.}
Let $W(t) \in \mathbb{R}^{d \times d}$ be a matrix where each entry is an independent standard Brownian motion with distribution $N(0, tI_d)$ at time $t$, and let $B(t) = W(t) + W^\top(t)$.
Define the symmetric-matrix valued stochastic process $\Phi(t)$ as follows:
\begin{equation} \label{eq_DBM_matrix}
    \Phi(t):= M + B(t) \qquad  \forall t\geq 0.
\end{equation}
\noindent The process $\Phi(t)$ is referred to as (matrix) Dyson Brownian motion.
At every time $t>0$ the eigenvalues $\gamma_1(t), \ldots, \gamma_d(t)$ of $\Phi(t)$ are distinct with probability $1$, and \eqref{eq_DBM_matrix} induces a stochastic process on the eigenvalues and eigenvectors.
The process on the eigenvalues and eigenvectors can be expressed via the following diffusion equations.
The eigenvalue diffusion process, which is also referred to as (eigenvalue) ``Dyson Brownian motion'', is defined by the stochastic differential equation \eqref{eq_DBM_eigenvalues}.
The  (eigenvalue) Dyson Brownian motion is an It\^o diffusion and can be expressed can be expressed by the following stochastic differential equation \cite{dyson1962brownian}:
\begin{equation} \label{eq_DBM_eigenvalues}
    \mathrm{d} \gamma_i(t) = \mathrm{d}B_{i i}(t) + \sum_{j \neq i} \frac{1}{\gamma_i(t) - \gamma_j(t)} \mathrm{d}t \qquad \qquad \forall i \in [d], t > 0.
\end{equation}
\noindent The corresponding eigenvector process $v_1(t), \ldots, v_d(t)$, referred to as the Dyson vector flow, is also an It\^o diffusion and,  conditional on the eigenvalue process \eqref{eq_DBM_eigenvalues}, can be expressed by the following stochastic differential equation (see e.g.,  \cite{anderson2010introduction}):
\begin{equation} \label{eq_DBM_eigenvectors}
    \mathrm{d}u_i(t) = \sum_{j \neq i} \frac{\mathrm{d}B_{ij}(t)}{\gamma_i(t) - \gamma_j(t)}u_j(t) - \frac{1}{2}\sum_{j \neq i} \frac{\mathrm{d}t}{(\gamma_i(t)- \gamma_j(t))^2}u_i(t) \qquad \qquad \forall i \in [d], t > 0.
\end{equation}
\paragraph{Eigenvalue bounds.}
The following two Lemmas will help us bound the gaps in the eigenvalues of the Dyson Brownian motion:

\begin{lemma} [Theorem 4.4.5 of \cite{vershynin2018high}, special case \footnote{The theorem is stated for sub-Gaussian entries in terms of a constant $C$; this constant is $C=2$ in the special case where the entries are $N(0,1)$ Gaussian.}] \label{lemma_concentration} 
Let $W \in \mathbb{R}^{d \times d}$ with i.i.d. $N(0,1)$ entries. Then 
 $   \mathbb{P}(\|W\|_2 > 2(\sqrt{d} +s) <  2e^{-s^2}$
 for any $s>0$.%
\end{lemma}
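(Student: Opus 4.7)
The plan is to combine two classical ingredients: Gaussian concentration of Lipschitz functions to get tight fluctuations, and a Slepian–Gordon comparison to control the expectation of the spectral norm. This is the standard path to the asymptotically sharp constant $2$ in front of $\sqrt{d}$, and it matches the exact form of the stated bound.

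\textbf{Step 1 (Lipschitz reduction).} I would first view $f(W) := \|W\|_2$ as a function of the $d^2$ i.i.d.\ Gaussian entries of $W$, i.e., as $f : \mathbb{R}^{d^2} \to \mathbb{R}$. The operator norm is $1$-Lipschitz with respect to the Frobenius (equivalently, Euclidean) norm: for any $W, W'$, $|\|W\|_2 - \|W'\|_2| \leq \|W - W'\|_2 \leq \|W - W'\|_F$. Invoking the Borell–Tsirelson–Ibragimov–Sudakov inequality for Lipschitz functions of a Gaussian vector then yields
\begin{equation*}
\mathbb{P}\bigl(\|W\|_2 > \mathbb{E}\|W\|_2 + t\bigr) \;\leq\; e^{-t^2/2} \qquad \forall\, t > 0.
\end{equation*}

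\textbf{Step 2 (Bounding the mean via Slepian–Gordon).} Next, I would bound $\mathbb{E}\|W\|_2 \leq 2\sqrt{d}$. Writing $\|W\|_2 = \sup_{u,v \in S^{d-1}} u^\top W v$, one views $X_{u,v} := u^\top W v$ as a centered Gaussian process indexed by $S^{d-1} \times S^{d-1}$. A direct covariance computation shows that $X_{u,v}$ is dominated (in the Slepian–Gordon sense) by the comparison process $Y_{u,v} := u^\top g + v^\top h$, where $g, h \sim N(0, I_d)$ are independent; concretely one verifies $\mathbb{E}(X_{u,v} - X_{u',v'})^2 \leq \mathbb{E}(Y_{u,v} - Y_{u',v'})^2$. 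Gordon's inequality then gives $\mathbb{E}\sup_{u,v} X_{u,v} \leq \mathbb{E}\sup_{u,v} Y_{u,v} = 2\,\mathbb{E}\|g\|_2 \leq 2\sqrt{d}$, the last step by Jensen applied to $\mathbb{E}\|g\|_2^2 = d$.

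\textbf{Step 3 (Combine and tune).} Plugging the mean bound into Step 1 yields $\mathbb{P}(\|W\|_2 > 2\sqrt{d} + t) \leq e^{-t^2/2}$. Setting $t = 2s$ gives $\mathbb{P}(\|W\|_2 > 2(\sqrt{d} + s)) \leq e^{-2s^2} \leq 2 e^{-s^2}$, which is the claim.

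\textbf{Where the real difficulty lies.} Step 1 is essentially free once Lipschitzness is checked, and Step 3 is arithmetic; the substantive technical input is Step 2, namely producing the sharp constant $2$ in $\mathbb{E}\|W\|_2 \leq 2\sqrt{d}$. A crude $\epsilon$-net approach (take a $1/4$-net on $S^{d-1}$, union-bound the $N(0,1)$ scalars $u^\top W v$ over $\le 81^d$ pairs) would suffice for a bound of the form $C\sqrt{d} + s$ with some $C > 2$, but cannot hit the sharp prefactor $2$ demanded by the statement; this is precisely what the Slepian–Gordon comparison buys over a naive covering argument, and is the step I would be most careful about.
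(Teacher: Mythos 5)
Your proof is correct, but it is worth noting that the paper does not prove this lemma at all: it simply imports it as a citation to Theorem 4.4.5 of Vershynin, whose proof in that reference is an $\epsilon$-net/union-bound argument over pairs of net points on the sphere, with the footnote asserting that the generic sub-Gaussian constant specializes to $2$ in the Gaussian case. Your route is genuinely different and, in fact, is the argument that actually delivers the sharp prefactor $2$: the Lipschitz reduction plus Borell--TIS concentration around the mean, combined with the comparison-process bound $\mathbb{E}\|W\|_2 \leq 2\sqrt{d}$ (this is the Sudakov--Fernique comparison --- Gordon's min-max theorem is not needed for the one-sided sup bound, though the increment inequality you state, equivalent to $(1-u^\top u')(1-v^\top v') \geq 0$, is exactly right). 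This corresponds to the sharper Theorem 7.3.3-type statement in the same reference rather than to the cited net argument, and your closing observation is apt: a covering argument only yields $C\sqrt{d}$ for some unspecified $C>2$, so the constant claimed in the lemma's footnote really does require the comparison-based mean bound you supply. The arithmetic in Step 3 ($t=2s$, $e^{-2s^2} \leq 2e^{-s^2}$) is fine, so your write-up is a valid self-contained substitute for the citation.
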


\begin{lemma}[\bf Weyl's Inequality; \cite{bhatia2013matrix}]\label{lemma_weyl}
If $A,B \in \mathbb{R}^{d\times d}$ are two symmetric matrices, and denoting the $i$'th-largest eigenvalue of any symmetric matrix $M$ by $\sigma_i(M)$, we have
%
$ \sigma_i(A) + \sigma_d(B) \leq  \sigma_i(A + B) \leq  \sigma_i(A) + \sigma_1(B).
$
\end{lemma}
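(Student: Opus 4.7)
The plan is to deduce both inequalities from the Courant--Fischer (min-max / max-min) variational characterization of eigenvalues of a real symmetric matrix. Concretely, for any symmetric $M \in \mathbb{R}^{d\times d}$ one has
\[
\sigma_i(M) \;=\; \max_{\substack{S\subseteq\mathbb{R}^d \\ \dim S = i}} \; \min_{\substack{x\in S\\ \|x\|=1}} x^\top M x \;=\; \min_{\substack{S\subseteq\mathbb{R}^d \\ \dim S = d-i+1}} \; \max_{\substack{x\in S\\ \|x\|=1}} x^\top M x,
\]
which is the only nontrivial ingredient and follows directly from the spectral decomposition of $M$ (so in particular $\sigma_d(M)\|x\|^2 \leq x^\top M x \leq \sigma_1(M)\|x\|^2$ for every $x$). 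I would record this identity first, then pick, for each inequality, a test subspace constructed from the eigenvectors of $A$.

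For the lower bound $\sigma_i(A+B) \geq \sigma_i(A) + \sigma_d(B)$, I would apply the max-min form to $A+B$ with the specific choice $S = \mathrm{span}(e_1,\ldots,e_i)$, where $e_1,\ldots,e_d$ are orthonormal eigenvectors of $A$ ordered so that $A e_j = \sigma_j(A) e_j$. For any unit $x \in S$, expanding $x$ in this basis gives $x^\top A x \geq \sigma_i(A)$, while the Rayleigh quotient bound for $B$ yields $x^\top B x \geq \sigma_d(B)$. Adding the two and taking the minimum over unit $x \in S$ gives a lower bound that is feasible in the max over $i$-dimensional subspaces, hence bounds $\sigma_i(A+B)$ from below as required.

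For the upper bound $\sigma_i(A+B) \leq \sigma_i(A) + \sigma_1(B)$, I would use the dual min-max form with the choice $S = \mathrm{span}(e_i, e_{i+1}, \ldots, e_d)$, which has dimension $d-i+1$. For any unit $x \in S$, $x^\top A x \leq \sigma_i(A)$ (since the largest eigenvalue of $A$ restricted to $S$ is $\sigma_i(A)$), and $x^\top B x \leq \sigma_1(B)$; summing gives $x^\top(A+B)x \leq \sigma_i(A)+\sigma_1(B)$ uniformly on the unit sphere of $S$, so the max over that sphere is bounded above by $\sigma_i(A)+\sigma_1(B)$, and the min over $(d{-}i{+}1)$-dimensional subspaces can only be smaller. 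There is no real obstacle in this proof: everything reduces to stating Courant--Fischer and choosing the correct eigen-subspaces of $A$ as test spaces; the only care needed is matching dimensions ($i$ for the max-min, $d-i+1$ for the min-max) and ordering eigenvalues in decreasing order consistently with the statement.
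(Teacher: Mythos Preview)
Your proof is correct and is the standard Courant--Fischer argument for Weyl's inequality. Note, however, that the paper does not actually prove this lemma: it is stated in the preliminaries with a citation to Bhatia's book and used as a black-box tool, so there is no ``paper's own proof'' to compare against. Your argument is exactly the textbook proof one would find in the cited reference.
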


\section{Proof Overview of Theorem \ref{thm_large_gap} -- Main Result}\label{sec:proof}

We give an overview of the proof of Theorem \ref{thm_large_gap}, along with the main technical lemmas used to prove this result.
Section \ref{sec_outline_of_proof} outlines the different steps in our proof.
In Steps 1 and 2 we construct the matrix-valued diffusion used in our proof.
Steps 3,4, and 5 present the main technical lemmas,
and in step 6 we explain how to complete the proof.
The statements of the lemmas and the highlights of their proofs are given in Sections \ref{sec_computing_derivative}, \ref{sec_gap_bounds}, \ref{Sec_integration}.
In Section \ref{sec_completing_the_proof} we explain how to complete the proof.
The full proofs appear in Sections \ref{sec_proof_of_lemmas} and \ref{sec_proof_thm_large_gap}.

\subsection{Outline of proof} \label{sec_outline_of_proof}

\begin{enumerate}
[leftmargin=10pt]
    \item \textbf{Step 1: Expressing the Gaussian Mechanism as a Dyson Brownian Motion.}
To obtain our utility bound, we view the Gaussian mechanism as a matrix-valued Brownian motion \eqref{eq_DBM_matrix} initialized at the input matrix $M$:
 $ \Phi(t):= M + B(t) \qquad  \forall t\geq 0.$ 
 If we run this Brownian motion for time $T = \nicefrac{\sqrt{2\log(\frac{1.25}{\delta})}}{\eps}$ we have that $\Phi(T) = (\nicefrac{\sqrt{2\log(\frac{1.25}{\delta})}}{\eps})(G +G^\top)$, recovering the output of the Gaussian mechanism.
In other words, the input to the Gaussian mechanism is $M = \Phi(0)$, and the output is $\hat{M} = \Phi(T)$.

\item \textbf{Step 2: Expressing the post-processed mechanism as a matrix diffusion $\Psi(t)$.} Our goal is to bound $\| \hat{V} \Lambda \hat{V}^\top -  V \Lambda V^\top \|_F$, where $M = V \Sigma V^\top$ and $\hat{M} = \hat{V} \hat{\Sigma} \hat{V}^\top$ are  spectral decompositions of $M$ and $\hat{M}$.
To bound the error  $\| \hat{V} \Lambda \hat{V}^\top -  V \Lambda V^\top \|_F$ we will define a stochastic process $\Psi(t)$ such that $\Psi(0) =  V \Lambda V^\top$ and $\Psi(t) = \hat{V} \Lambda \hat{V}^\top$, and then bound the Frobenius distance $\|\Psi(T) - \Psi(0)\|_F$ by integrating the (stochastic) derivative of $\Psi(t)$ over the time interval $[0,T]$.

Towards this end, at every time $t$, let $\Phi(t) = U(t) \Gamma(t) U(t)^\top$  be a spectral decomposition of the symmetric matrix $\Phi(t)$,  where $\Gamma(t)$ is a diagonal matrix with diagonal entries $\gamma_1(t) \geq \cdots \geq \gamma_d(t)$ that are the eigenvalues of $\Phi(t)$, and $U(t) = [u_1(t), \ldots, u_d(t)]$ is a $d\times d$ orthogonal matrix whose columns $u_1(t), \ldots, u_d(t)$ are an orthonormal basis of eigenvectors of $\Phi(t)$.
At every time $t$, define $\Psi(t)$ to be the symmetric matrix with eigenvalues $\Lambda$ and eigenvectors given by the columns of $U(t)$: 
 $   \Psi(t):= U(t) \Lambda U(t)^\top$  $\forall t \in [0,T].$

\item \textbf{Step 3: Computing the stochastic derivative $\mathrm{d}\Psi(t)$.} To bound the expected squared Frobenius distance $\mathbb{E}[\|\Psi(T) - \Psi(0)\|_F^2]$, we first compute the stochastic derivative $\mathrm{d}\Psi(t)$ of the matrix diffusion $\Psi(T)$ (Lemma \ref{Lemma_orbit_differntial}).

\item \textbf{Step 4: Bounding the eigenvalue gaps.} 
The equation for the derivative $\mathrm{d}\Psi(t)$ includes terms with magnitude proportional to the inverse of the eigenvalue gaps $\Delta_{ij}(t):= \gamma_i(t) - \gamma_j(t)$ for each $i,j \in[d]$, which evolve over time.
In order to bound these terms, we use Weyl's inequality (Lemma \ref{lemma_weyl}) to show that w.h.p. the gaps in the top $k+1$ eigenvalues $\Delta_{ij}(t)$ satisfy $\Delta_{ij}(t) \geq \Omega(\sigma_i - \sigma_j)$ for every time $t\in [0,T]$ (Lemma \ref{lemma_gap_concentration}), provided that the initial gaps are sufficiently large (Assumption \ref{assumption_gaps})  (See Section \ref{sec_necessity_of_assumption} for a discussion on why we need this assumption for our proof to work).

\item \textbf{Step 5: Integrating the stochastic differential equation.} Next, we express the expected squared Frobenius distance $\mathbb{E}[\|\Psi(T) - \Psi(0)\|_F^2]$ as an integral
$ \|\Psi(T) - \Psi(0)\|_F^2 ]=\mathbb{E}\left[ \left\|\int_0^T \mathrm{d}\Psi(t)\right\|_F^2\right].$
We then apply It\^o's Lemma (Lemma \ref{lemma_ito_lemma_new})  to obtain a formula for this integral.
Roughly speaking, the formula we obtain (Lemma \ref{Lemma_integral}) is
\begin{equation}\label{eq_g3} \mathbb{E}\left[\left\|\Psi(T) -  \Psi(0)\right \|_F^2 \right] \approx  \int_0^{T}   \mathbb{E}\left[ \sum_{i=1}^{d}  \sum_{j \neq i}  \frac{(\lambda_i - \lambda_j)^2}{\Delta^2_{ij}(t)} \right] \mathrm{d}t  
     +    T \int_0^{T}\mathbb{E}\left[\sum_{i=1}^{d}\left(\sum_{j\neq i} \frac{\lambda_i - \lambda_j}{\Delta^2_{ij}(t)}\right)^2   \right]\mathrm{d}t
\end{equation}

\item \textbf{Step 6: Completing the proof.}  Plugging the bound $\Delta_{ij}(t) \geq \Omega(\sigma_i - \sigma_j)$ into \eqref{eq_g3}, and noting that the first term on the r.h.s. of \eqref{eq_g3} is at least as large as the second term since $\sigma_i - \sigma_j \geq \sqrt{d}$,
we obtain the bound in Theorem \ref{thm_large_gap}.

\end{enumerate}

\subsection{Step 3: Computing the stochastic derivative $\mathrm{d}\Psi(t)$}\label{sec_computing_derivative}

$\Psi(t)$ is itself a matrix-valued diffusion. 
We use the eigenvalue and eigenvector dynamics \ref{eq_DBM_eigenvalues} and \ref{eq_DBM_eigenvectors} together with It\^o's Lemma (Lemma \ref{lemma_ito_lemma_new}) to compute the It\^o derivative of this diffusion.
Towards this end, we first decompose the matrix $\Psi(t)$ as a sum of its eigenvectors:  $\Psi(t) =   \sum_{i=1}^{d}  \lambda_i u_i(t) u_i^\top(t)$.
Thus, we have   
\begin{equation} \label{eq_g4}
 \mathrm{d}\Psi(t) =   \sum_{i=1}^{d}  \lambda_i \mathrm{d}(u_i(t) u_i^\top(t)).
\end{equation}
We begin by computing the stochastic derivative $\mathrm{d}(u_i(t) u_i^\top(t))$ for each $i \in [d]$, by applying the formula for the derivative of $u_i(t)$ in  \eqref{eq_DBM_eigenvectors}, together with It\^o's Lemma (Lemma \ref{lemma_ito_lemma_new}):

\begin{lemma}[\bf Stochastic derivative of $u_i(t) u_j^\top(t)$]\label{Lemma_projection_differntial}
For all $t \in [0,T]$, 
\begin{align*}
\mathrm{d}(u_i(t) u_i^\top(t))=        \sum_{j \neq i} &\frac{\mathrm{d}B_{ij}(t)}{\gamma_i(t) - \gamma_j(t)}(u_i(t) u_j^\top(t) + u_j(t) u_i^\top(t))\\
& + \sum_{j \neq i} \frac{\mathrm{d}t}{(\gamma_i(t)- \gamma_j(t))^2} (u_i(t) u_i^\top(t) - u_j(t)u_j^\top(t)).
\end{align*}
\end{lemma}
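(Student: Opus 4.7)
The plan is to apply It\^o's product rule to the rank-one matrix process $u_i(t) u_i^\top(t)$, using the eigenvector SDE \eqref{eq_DBM_eigenvectors} as the basic input. For a vector-valued It\^o diffusion, the matrix-valued product rule (a special case of It\^o's Lemma, Lemma~\ref{lemma_ito_lemma_new}, applied to the bilinear map $v \mapsto vv^\top$) gives
\begin{equation*}
\mathrm{d}(u_i u_i^\top) \;=\; (\mathrm{d}u_i)\,u_i^\top \;+\; u_i\,(\mathrm{d}u_i)^\top \;+\; (\mathrm{d}u_i)(\mathrm{d}u_i)^\top,
\end{equation*}
where the final term denotes the matrix of quadratic covariations of the coordinates of $u_i$.

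The first step is to substitute \eqref{eq_DBM_eigenvectors} into the first two terms and collect the martingale and drift pieces. The martingale parts are symmetric under the $(\mathrm{d}u_i)u_i^\top + u_i(\mathrm{d}u_i)^\top$ symmetrization and combine to give exactly $\sum_{j \neq i}\frac{\mathrm{d}B_{ij}(t)}{\gamma_i(t) - \gamma_j(t)}\bigl(u_j(t)u_i^\top(t) + u_i(t) u_j^\top(t)\bigr)$, which reproduces the stochastic part of the claimed formula. The drift contributions of $(\mathrm{d}u_i)u_i^\top$ and $u_i(\mathrm{d}u_i)^\top$ are each $-\tfrac{1}{2}\sum_{j \neq i}\frac{u_i(t) u_i^\top(t)}{(\gamma_i(t) - \gamma_j(t))^2}\,\mathrm{d}t$ and together they contribute $-\sum_{j \neq i}\frac{u_i u_i^\top}{(\gamma_i - \gamma_j)^2}\,\mathrm{d}t$.

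The second step is to compute the It\^o correction $(\mathrm{d}u_i)(\mathrm{d}u_i)^\top$. Only the martingale portion of $\mathrm{d}u_i$ contributes at order $\mathrm{d}t$. Expanding the product and using the quadratic covariation of the symmetric Brownian increments $\mathrm{d}\langle B_{ij}, B_{i\ell}\rangle = \delta_{j\ell}\,\mathrm{d}t$ (in the normalization implicit in \eqref{eq_DBM_eigenvectors}; this is exactly the normalization which makes \eqref{eq_DBM_eigenvectors} preserve $\|u_i\|=1$), all $j \neq \ell$ cross-terms vanish and the diagonal ones collapse to $\sum_{j \neq i}\frac{u_j(t) u_j^\top(t)}{(\gamma_i(t)-\gamma_j(t))^2}\,\mathrm{d}t$ by orthonormality of the $u_j$'s. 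Combining this with the drift from Step~1 yields the single drift term $\sum_{j \neq i} \frac{1}{(\gamma_i-\gamma_j)^2}\bigl(u_i u_i^\top - u_j u_j^\top\bigr)\mathrm{d}t$ claimed in the lemma.

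The main obstacle I anticipate is bookkeeping: the matrix-valued It\^o rule requires care with which index is summed and with the symmetric structure of $B$, and the cancellation that produces the clean difference $u_i u_i^\top - u_j u_j^\top$ depends on correctly identifying the Brownian covariation. A useful consistency check is that $\sum_{i=1}^d u_i(t) u_i^\top(t) = I_d$ is constant, so $\sum_i \mathrm{d}(u_i u_i^\top) = 0$; both the martingale and drift terms obtained above must cancel when summed over $i$, which simultaneously validates the covariation calculation and pins down the overall sign of the It\^o correction.
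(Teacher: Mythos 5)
Your approach is the same as the paper's: the paper expands $(u_i+\mathrm{d}u_i)(u_i+\mathrm{d}u_i)^\top-u_iu_i^\top$, which is precisely the It\^o product rule $\mathrm{d}(u_iu_i^\top)=(\mathrm{d}u_i)u_i^\top+u_i(\mathrm{d}u_i)^\top+(\mathrm{d}u_i)(\mathrm{d}u_i)^\top$ you invoke, substitutes \eqref{eq_DBM_eigenvectors}, discards the $O(\mathrm{d}B\,\mathrm{d}t)$ and $O(\mathrm{d}t^2)$ terms, and keeps only the $j=\ell$ diagonal of the quadratic term, exactly as in your Step 2. Your martingale term, your drift $-\sum_{j\neq i}\frac{u_iu_i^\top}{(\gamma_i-\gamma_j)^2}\,\mathrm{d}t$ from the symmetrized first-order pieces, and your It\^o correction $+\sum_{j\neq i}\frac{u_ju_j^\top}{(\gamma_i-\gamma_j)^2}\,\mathrm{d}t$ all match the paper's computation.

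There is, however, a sign slip in your final combination: those two drift pieces add up to $-\sum_{j\neq i}\frac{\mathrm{d}t}{(\gamma_i-\gamma_j)^2}\bigl(u_iu_i^\top-u_ju_j^\top\bigr)$, not $+\sum_{j\neq i}\frac{\mathrm{d}t}{(\gamma_i-\gamma_j)^2}\bigl(u_iu_i^\top-u_ju_j^\top\bigr)$ as you assert. The paper's own proof ends with the minus sign, and that is the version actually plugged into the proof of Lemma \ref{Lemma_orbit_differntial}; the plus sign in the lemma statement appears to be a typo rather than the intended conclusion, so you should not force your (correct) intermediate terms to match it. Note also that your proposed consistency check $\sum_i\mathrm{d}(u_iu_i^\top)=0$ cannot detect this error: the drift term $\frac{u_iu_i^\top-u_ju_j^\top}{(\gamma_i-\gamma_j)^2}$ is antisymmetric under $i\leftrightarrow j$ with either overall sign, so the sum over $i$ vanishes in both cases. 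Your handling of the quadratic covariation (treating $\mathrm{d}\langle B_{ij},B_{i\ell}\rangle=\delta_{j\ell}\,\mathrm{d}t$ in the normalization implicit in \eqref{eq_DBM_eigenvectors}) mirrors the paper's convention and is fine.
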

\noindent
The proof is in Section \ref{sec_proof_of_lemmas}.
 Plugging Lemma \ref{Lemma_projection_differntial} into \eqref{eq_g4},   we get an expression for $\mathrm{d}\Psi(t)$:

\begin{lemma}[\bf Stochastic derivative of $\Psi(t)$; see Section \ref{sec_proof_of_lemmas} for proof]\label{Lemma_orbit_differntial}
For all $t \in [0,T]$ we have that
$  \mathrm{d}\Psi(t)    =   \frac{1}{2}\sum_{i=1}^{d} \sum_{j \neq i} (\lambda_i - \lambda_j) \frac{\mathrm{d}B_{ij}(t)}{\gamma_i(t) - \gamma_j(t)}(u_i(t) u_j^\top(t) + u_j(t) u_i^\top(t))
 \textstyle     + \sum_{i=1}^{d} \sum_{j\neq i} (\lambda_i - \lambda_j) \frac{\mathrm{d}t}{(\gamma_i(t)- \gamma_j(t))^2} u_i(t) u_i^\top(t).$

\end{lemma}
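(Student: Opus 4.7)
The plan is to obtain $\mathrm{d}\Psi(t)$ by combining the decomposition $\Psi(t) = \sum_{i=1}^d \lambda_i u_i(t) u_i^\top(t)$ from Step 2 with the per-index derivative established in Lemma \ref{Lemma_projection_differntial}. By linearity of the It\^o differential, $\mathrm{d}\Psi(t) = \sum_{i=1}^d \lambda_i\, \mathrm{d}(u_i(t) u_i^\top(t))$. Substituting the formula from Lemma \ref{Lemma_projection_differntial} yields a martingale (dB) contribution and a drift (dt) contribution; the job is then to rearrange each double sum into the symmetrized form stated in the lemma.

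For the martingale term, each summand initially carries a factor $\lambda_i$ and involves $\mathrm{d}B_{ij}(t)/(\gamma_i(t)-\gamma_j(t)) \cdot (u_i u_j^\top + u_j u_i^\top)$. I would symmetrize in $(i,j)$ using three facts: first, since $B(t) = W(t) + W^\top(t)$ is a symmetric-matrix-valued process, $B_{ij}(t) = B_{ji}(t)$ and hence $\mathrm{d}B_{ij}(t) = \mathrm{d}B_{ji}(t)$; second, the rank-two operator $u_i u_j^\top + u_j u_i^\top$ is invariant under the swap $i \leftrightarrow j$; third, $1/(\gamma_i-\gamma_j)$ is antisymmetric under this swap. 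Averaging the $(i,j)$ and $(j,i)$ contributions therefore produces precisely the coefficient $\tfrac{1}{2}(\lambda_i-\lambda_j)/(\gamma_i(t)-\gamma_j(t))$ that appears in the claim.

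For the drift term, a similar relabeling handles the contribution $\sum_i \lambda_i \sum_{j\neq i} (u_i u_i^\top - u_j u_j^\top)/(\gamma_i-\gamma_j)^2\, \mathrm{d}t$. I would split it into two sums over $i \neq j$, and in the subtracted sum swap $i \leftrightarrow j$; using $(\gamma_j-\gamma_i)^2 = (\gamma_i-\gamma_j)^2$ converts it into $\sum_{i\neq j} \lambda_j\, u_i u_i^\top/(\gamma_i-\gamma_j)^2$. Recombining then gives the advertised drift $\sum_{i\neq j}(\lambda_i-\lambda_j)\, u_i u_i^\top/(\gamma_i-\gamma_j)^2\,\mathrm{d}t$.

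The argument is essentially a bookkeeping exercise on top of Lemma \ref{Lemma_projection_differntial}, with no substantive obstacle. The only point requiring care is the symmetry $B_{ij} = B_{ji}$, so that $\mathrm{d}B_{ij}$ and $\mathrm{d}B_{ji}$ are identified (not treated as independent Brownian increments), together with consistent exclusion of diagonal $i = j$ terms throughout both sums.
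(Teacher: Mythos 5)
Your proposal is correct and follows essentially the same route as the paper: decompose $\Psi(t)=\sum_{i=1}^d\lambda_i\,u_i(t)u_i(t)^\top$, use linearity of the It\^o differential, substitute Lemma \ref{Lemma_projection_differntial}, and symmetrize the martingale term via $\mathrm{d}B_{ij}=\mathrm{d}B_{ji}$ together with the antisymmetry of $1/(\gamma_i-\gamma_j)$ and symmetry of $u_iu_j^\top+u_ju_i^\top$, then relabel the drift double sum. The only caveat is the sign of the drift term, which hinges on whether one uses the stated or the derived sign in Lemma \ref{Lemma_projection_differntial} (the paper's own proof of that lemma ends with a minus sign, and its proof of this lemma accordingly produces a minus, while your relabeling faithfully reproduces the plus sign as stated); since all later steps use only the squared magnitude of this term, the discrepancy is immaterial.
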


\subsection{Step 4: Bounding the eigenvalue gaps} \label{sec_gap_bounds}

The derivative in Lemma \ref{Lemma_orbit_differntial} contains terms with magnitude proportional to the inverse of the eigenvalue gaps $\Delta_{ij}(t):= \gamma_i(t) - \gamma_j(t)$.
To bound these terms, we would like to show that $\inf_{t\in [0,T]} \Delta_{ij}(t) \geq \Omega(\sigma_i - \sigma_j)$ for each $i<j \leq  k+1$ with high probability.
Towards this end, we first apply the spectral norm concentration  bound for Gaussian random matrices (Lemma \ref{lemma_concentration}), which provides a high-probability bound for $\|B(t)\|_2$ at any time $t$, together with Doob's submartingale inequality, to show that the spectral norm of the matrix-valued Brownian motion $B(t)$ does not exceed $T\sqrt{d}$ at any time $t\in [0,T]$ w.h.p.:
\begin{lemma}[\bf Spectral norm bound] \label{lemma_spectral_martingale}
 For every $T>0$, we have,
      $$ \mathbb{P}\left(\sup_{t \in [0,T]}\|B(t)\|_2 > 2T\sqrt{d} + \alpha)\right) \leq 2\sqrt{\pi} e^{-\frac{1}{8}\frac{\alpha^2}{T^2}}.$$

\end{lemma}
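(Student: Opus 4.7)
The plan is to combine the fixed-time Gaussian spectral-norm concentration (Lemma~\ref{lemma_concentration}) with Doob's maximal inequality, exploiting the martingale structure of $W(t)$ to promote a pointwise bound into a supremum-in-time bound.

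First, I would reduce the problem from $B(t)$ to $W(t)$ via the triangle inequality $\|B(t)\|_2 \leq \|W(t)\|_2 + \|W(t)^\top\|_2 = 2\|W(t)\|_2$. For any fixed $t>0$, the entries of $W(t)$ are i.i.d.\ centered Gaussians, so after rescaling Lemma~\ref{lemma_concentration} yields a Gaussian-type tail on $\|W(t)\|_2$ around its natural centering of order $\sqrt{d}$ (times the time-scale factor).

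Next, I would use that $W(t)$ is a matrix-valued martingale (each entry is a Brownian motion) and that $X \mapsto \|X\|_2$ is a convex functional. By Jensen's inequality, $\|W(t)\|_2$ is a nonnegative submartingale with respect to the canonical filtration, and composition with the convex nondecreasing map $x \mapsto e^{\theta x}$ (for $\theta \geq 0$) preserves the submartingale property. Doob's maximal inequality then delivers
\[
\Pr\Bigl(\sup_{t \in [0,T]} \|W(t)\|_2 \geq \lambda\Bigr) \;\leq\; e^{-\theta \lambda}\, \mathbb{E}\bigl[e^{\theta \|W(T)\|_2}\bigr]
\qquad \text{for every } \theta \geq 0,
\]
which converts the single-time tail bound from Lemma~\ref{lemma_concentration} into a uniform-in-time tail bound. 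The MGF on the right-hand side can be controlled by writing $\mathbb{E}[e^{\theta \|W(T)\|_2}] = \int_0^\infty \theta e^{\theta x} \Pr(\|W(T)\|_2 > x)\,dx$ and plugging in the Gaussian tail from Step~1, completing the square in the exponent, and finally optimizing over $\theta$. The final step is to translate the bound from $W$ back to $B$ via the factor of $2$ and re-parametrize the radius as $2T\sqrt{d}+\alpha$.

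The main obstacle is recovering the precise constants stated in the lemma: the centering $2T\sqrt{d}$, the exponent $\alpha^2/(8T^2)$, and especially the prefactor $2\sqrt{\pi}$. The $2\sqrt{\pi}$ is a strong hint that the tail integration must be arranged so that the underlying integral reduces to the Gaussian normalization $\int_{-\infty}^\infty e^{-u^2/4}\,du = 2\sqrt{\pi}$; consequently the MGF step has to be carried out carefully (with an explicit completion of the square and a precise choice of $\theta$) rather than by a crude sub-Gaussian MGF estimate that would lose these constants.
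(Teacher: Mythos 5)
Your proposal is correct and follows essentially the same route as the paper: establish that the exponential of the spectral norm of the matrix Brownian motion is a submartingale (via convexity/Jensen), apply Doob's submartingale maximal inequality, bound the resulting exponential moment at time $T$ by integrating the fixed-time Gaussian tail from Lemma~\ref{lemma_concentration} (which is where the $2\sqrt{\pi}$ prefactor arises), and optimize the free exponential parameter. The only differences are cosmetic, e.g.\ you pass through $W(t)$ with a factor of $2$ while the paper works with $B(t)$ directly.
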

\noindent
The proof appears in Section \ref{sec_proof_of_lemmas}.
Next, we use Lemma \ref{lemma_spectral_martingale} to bound the eigenvalue gaps:

\begin{lemma}[\bf Eigenvalue gap bound]\label{lemma_gap_concentration}
Whenever  $\gamma_i(0) - \gamma_{i+1}(0) \geq 4 T \sqrt{d}$ for every $i \in S$ and $T>0$ and some subset $S \subset [d-1]$, we have $$\mathbb{P}\left( \bigcup_{i\in S} \left\{\inf_{t \in [0,T]} \gamma_i(t) - \gamma_{i+1}(t) <  \frac{1}{2}(\gamma_i(0) - \gamma_{i+1}(0)) - \alpha) \right\}\right) \leq  2\sqrt{\pi} e^{-\frac{1}{32} \alpha^2}.$$
\end{lemma}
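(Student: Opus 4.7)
The plan is to reduce the eigenvalue-gap statement to a uniform bound on the spectral norm of the Brownian noise matrix $B(t)$, which is precisely what Lemma~\ref{lemma_spectral_martingale} provides. The first step is to apply Weyl's inequality (Lemma~\ref{lemma_weyl}) pointwise in $t \in [0,T]$ to the decomposition $\Phi(t) = M + B(t)$, yielding $|\gamma_i(t) - \sigma_i| \leq \|B(t)\|_2$ for every $i \in [d]$. Subtracting the bounds at consecutive indices $i$ and $i+1$ then produces the purely deterministic inequality
\begin{equation*}
    \gamma_i(t) - \gamma_{i+1}(t) \;\geq\; (\sigma_i - \sigma_{i+1}) - 2\|B(t)\|_2, \qquad \forall\, i \in [d-1],\ t \in [0,T].
\end{equation*}

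The crucial observation at this point is that the right-hand side depends on $i$ only through the initial gap $\sigma_i - \sigma_{i+1}$, whereas $\|B(t)\|_2$ is a \emph{single} random quantity that simultaneously controls every index. Thus, on the event $\{\sup_{t\in[0,T]} \|B(t)\|_2 \leq \rho\}$, each $i \in S$ automatically satisfies $\inf_{t \in [0,T]}(\gamma_i(t) - \gamma_{i+1}(t)) \geq (\sigma_i - \sigma_{i+1}) - 2\rho$ \emph{without any union bound over $i \in S$}. To guarantee the target lower bound $\tfrac{1}{2}(\sigma_i-\sigma_{i+1}) - \alpha$ uniformly over $i \in S$, it suffices to pick $\rho \leq \tfrac{1}{4}(\sigma_i-\sigma_{i+1}) + \tfrac{\alpha}{2}$ for every $i \in S$, and under the hypothesis $\sigma_i(0) - \sigma_{i+1}(0) \geq 4T\sqrt{d}$ this in turn follows from $\rho \leq T\sqrt{d} + \tfrac{\alpha}{2}$.

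What remains is to control $\mathbb{P}(\sup_{t\in[0,T]} \|B(t)\|_2 > T\sqrt{d} + \tfrac{\alpha}{2})$, which I would obtain by invoking Lemma~\ref{lemma_spectral_martingale} (a Doob-type maximal inequality applied to the spectral-norm submartingale $\|B(t)\|_2$) with its free parameter chosen so that the deterministic term $2T\sqrt{d}$ in the lemma is absorbed into the margin $\tfrac{1}{2}(\sigma_i-\sigma_{i+1})$ that the hypothesis grants us; the residual fluctuation of order $\alpha/2$ then drives the Gaussian tail, producing after routine constant bookkeeping the stated probability $2\sqrt{\pi}\, e^{-\alpha^2/32}$.

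The main obstacle is conceptual rather than technical: one must recognize that Weyl's inequality supplies a single spectral-norm event that handles every index $i \in S$ simultaneously, so the final bound is independent of $|S|$. Once this is seen, the only remaining work is calibrating the parameter in Lemma~\ref{lemma_spectral_martingale} so that the hypothesis $\sigma_i - \sigma_{i+1} \geq 4T\sqrt{d}$ exactly cancels the deterministic $2T\sqrt{d}$ contribution to $\sup_t \|B(t)\|_2$, which is a straightforward computation.
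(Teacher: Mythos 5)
Your proposal is correct and takes essentially the same route as the paper's proof: apply Weyl's inequality pointwise in $t$ to $\Phi(t)=M+B(t)$ so that the gap event for every $i\in S$ reduces to a single event on $\sup_{t\in[0,T]}\|B(t)\|_2$ (no union bound over $S$), then invoke the Doob-type spectral-norm bound of Lemma~\ref{lemma_spectral_martingale}, letting the hypothesis $\gamma_i(0)-\gamma_{i+1}(0)\geq 4T\sqrt{d}$ absorb the deterministic term and the residual $\alpha/2$ drive the tail $2\sqrt{\pi}e^{-\alpha^2/32}$. The only discrepancy is constant bookkeeping (your threshold $T\sqrt{d}+\alpha/2$ versus the $2T\sqrt{d}+\alpha'$ in the statement of Lemma~\ref{lemma_spectral_martingale}), a factor-of-two slack that the paper's own proof also glosses over, so it does not change the verdict.
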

\noindent
To prove Lemma \ref{lemma_gap_concentration}, we plug Lemma \ref{lemma_spectral_martingale} into Weyl's Inequality (Lemma \ref{lemma_weyl}),  to show that 
\begin{equation*}
    \gamma_i(t) - \gamma_{i+1}(t) \geq \sigma_i - \sigma_{i+1} - \|B(t)\|_2 \geq \Omega(\sigma_i - \sigma_{i+1} - T \sqrt{d}) \geq \Omega(\sigma_i - \sigma_{i+1}),
\end{equation*}
with high probability for each $i\leq  k$ (Lemma \ref{lemma_gap_concentration}).
The last inequality holds since Assumption \ref{assumption_gaps} ensures $\sigma_i - \sigma_{i+1} \geq \frac{1}{2}T \sqrt{d}$ for  $i\leq  k$.
The full proof is in Section \ref{sec_proof_of_lemmas}.

\subsection{Step 5: Integrating the stochastic differential equation} \label{Sec_integration}

Next, we would like to integrate the derivative $\mathrm{d}\Psi(t)$ to obtain an expression for $\mathbb{E}[\|\Psi(T) - \Psi(0)\|_F^2]$, and to then plug in our high-probability bounds (Lemma \ref{lemma_gap_concentration}) for the gaps $\Delta_{ij}(t)$.
To allow us to later plug in these high-probability bounds after we integrate and take the expectation, we define a new diffusion process $Z_\eta(t)$ which has nearly the same stochastic differential equation as \ref{Lemma_orbit_differntial}, except that each eigenvalue gap $\Delta_{ij}(t)$ is not permitted to become smaller than the value $\eta_{ij} = \frac{1}{4}(\sigma_i - \max(\sigma_{j}, \sigma_{k+1}))$ for each $i<j$.
Towards this end, fix any  $\eta \in \mathbb{R}^{d\times d}$,  define the following matrix-valued It\^o diffusion $Z_\eta(t)$ via its It\^o derivative  $\mathrm{d}Z_\eta(t)$:
\begin{eqnarray} \label{eq_orbit_diffusion2}
      \textstyle    \mathrm{d}Z_\eta(t)   & :=   \frac{1}{2}\sum_{i=1}^{d} \sum_{j \neq i} |\lambda_i - \lambda_j| \frac{\mathrm{d}B_{ij}(t)}{\max(|\Delta_{ij}(t)|, \eta_{ij})}(u_i(t) u_j^\top(t) + u_j(t) u_i^\top(t)) \nonumber \\
     &  + \sum_{i=1}^{d} \sum_{j\neq i} (\lambda_i - \lambda_j) \frac{\mathrm{d}t}{\max(\Delta^2_{ij}(t), \eta_{ij}^2)} u_i(t) u_i^\top(t),
\end{eqnarray}
with  initial condition $Z_\eta(0):= \Psi(0)$.
Thus, $Z_\eta(t) = \Psi(0)+ \int_0^t \mathrm{d}Z_\eta(s)$ for all $t \geq 0$.
We then integrate  $\mathrm{d}Z_\eta(t)$ over the time interval $[0,T]$, and apply  It\^o's Lemma (Lemma \ref{lemma_ito_lemma_new}) to obtain an expression for the Frobenius norm of this integral:

\begin{lemma}[\bf Frobenius distance integral] \label{Lemma_integral}

For any $T>0$, 
$\mathbb{E}\left[\left\|Z_\eta\left(T\right) -  \textstyle Z_\eta(0)\right \|_F^2 \right]=$
 \begin{eqnarray*} \textstyle 
 2\int_0^{T}   \mathbb{E}\left[ \sum_{i=1}^{d}  \sum_{j \neq i}  \frac{(\lambda_i - \lambda_j)^2}{\max(\Delta^2_{ij}(t), \eta_{ij}^2)} \mathrm{d}t \right]
      +    T \int_0^{T}\mathbb{E}\left[\sum_{i=1}^{d}\left(\sum_{j\neq i} \frac{\lambda_i - \lambda_j}{\max(\Delta^2_{ij}(t), \eta_{ij}^2)}\right)^2   \right]\mathrm{d}t.
\end{eqnarray*}
\end{lemma}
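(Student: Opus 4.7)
The plan is to apply It\^o's Lemma (Lemma \ref{lemma_ito_lemma_new}) to the squared Frobenius norm along the matrix diffusion $Z_\eta(t)$, using the orthonormality of the eigenvectors $\{u_i(t)\}_{i=1}^d$ at every time $t$ to collapse the two resulting contributions into the claimed form.

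First I would decompose $\mathrm{d}Z_\eta(t) = \sigma(t)\mathrm{d}B(t) + b(t)\mathrm{d}t$, where $\sigma$ absorbs the $\mathrm{d}B_{ij}$ coefficients in \eqref{eq_orbit_diffusion2} and the drift coefficient is $b(t)=\sum_i a_i(t)\, u_i(t) u_i^\top(t)$ with $a_i(t):=\sum_{j\neq i}(\lambda_i-\lambda_j)/\max(\Delta_{ij}^2(t),\eta_{ij}^2)$. Setting $M_T:=\int_0^T \sigma\,\mathrm{d}B$ (a matrix-valued martingale) and $D_T:=\int_0^T b\,\mathrm{d}t$, I would evaluate $\mathbb{E}[\|M_T+D_T\|_F^2]$ via It\^o isometry plus Cauchy--Schwarz in time.

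The martingale piece, by It\^o isometry and a pointwise quadratic-variation computation, gives $\mathbb{E}[\|M_T\|_F^2] = \int_0^T \mathbb{E}[\mathrm{tr}(\sigma(t)\sigma(t)^\top)]\,\mathrm{d}t$. The key simplification here is that since $B = W+W^\top$, we have $\mathrm{d}B_{ij}\,\mathrm{d}B_{ij}=2\,\mathrm{d}t$ for $i\neq j$, and the symmetric matrices $u_i(t) u_j^\top(t) + u_j(t) u_i^\top(t)$ for $i<j$ are mutually Frobenius-orthogonal with squared norm $2$. Tracking the prefactor $\tfrac{1}{2}$ in \eqref{eq_orbit_diffusion2} and the double-counting over ordered pairs $(i,j)$ and $(j,i)$, the trace collapses to exactly $2\sum_i\sum_{j\neq i}(\lambda_i-\lambda_j)^2/\max(\Delta_{ij}^2(t),\eta_{ij}^2)$ pointwise, producing the first term of the claim after taking expectation and integrating. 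For the drift piece, orthonormality of $\{u_i(t)\}$ gives the clean identity $\|b(t)\|_F^2 = \sum_i a_i^2(t)$, and Cauchy--Schwarz in time yields $\|D_T\|_F^2 \leq T\int_0^T \|b(t)\|_F^2\,\mathrm{d}t$, matching the second term.

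The main obstacle is the cross term $2\mathbb{E}[\langle M_T, D_T\rangle_F]$ arising from $\|M_T+D_T\|_F^2 = \|M_T\|_F^2 + 2\langle M_T,D_T\rangle_F + \|D_T\|_F^2$: since $D_T$ depends on the random eigenvector and eigenvalue trajectories that also drive $M_T$, the martingale and drift are not independent and this cross term does not vanish in expectation. The cleanest resolution is to apply It\^o's Lemma directly to $f(X)=\|X-Z_\eta(0)\|_F^2$ along $Z_\eta(t)$: the Hessian of $f$ is $2I$ (in the vectorized sense), so the $\sigma\,\mathrm{d}B$ piece integrates against the predictable process $2(Z_\eta(t)-Z_\eta(0))$ to a zero-mean martingale, and one is left only with a drift-on-position inner product that can be split by AM--GM in such a way that its two halves are absorbed into the quadratic-variation term and into the Cauchy--Schwarz drift-squared term respectively. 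A secondary but nontrivial bookkeeping task is to verify that the various factors of $\tfrac{1}{2}$ from the symmetrization of $B$ and from \eqref{eq_orbit_diffusion2} cancel correctly against the factor of $2$ in the variance of $\mathrm{d}B_{ij}$ for $i\neq j$ to produce the constant $2$ on the first term.
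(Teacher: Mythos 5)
Your computations of the two main terms are correct and coincide with the paper's route: the martingale part is handled exactly as in the paper (It\^o's Lemma \ref{lemma_ito_lemma_new} / It\^o isometry applied to the squared Frobenius norm of the $\mathrm{d}B$-integral, with $\mathrm{d}B_{ij}\,\mathrm{d}B_{ij}=2\,\mathrm{d}t$ and $\|u_iu_j^\top+u_ju_i^\top\|_F^2=2$ producing the constant $2$ on the first term -- your constant bookkeeping here is in fact more careful than the paper's), and the drift part is handled identically (Frobenius-orthonormality of the $u_iu_i^\top$ giving $\|b(t)\|_F^2=\sum_i a_i^2(t)$, then Cauchy--Schwarz in time giving the factor $T$). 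Note also that the lemma's ``$=$'' is really a ``$\leq$'' in the paper's own proof (Cauchy--Schwarz on the drift is one-sided), so delivering an upper bound is all that is needed.

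The genuine gap is in your treatment of the cross term. The paper never faces it: before taking any expectation it applies the pathwise inequality $\|a+b\|_F^2\leq 2\|a\|_F^2+2\|b\|_F^2$ to the martingale and drift integrals separately, so no martingale--drift inner product appears, at the cost of an absolute constant that is immaterial downstream (the lemma is only ever invoked inside an $O(\cdot)$). Your proposed fix -- apply It\^o to $f(X)=\|X-Z_\eta(0)\|_F^2$ along $Z_\eta(t)$ and ``absorb'' the resulting drift-on-position term by AM--GM -- does not work as stated: It\^o leaves you with $2\int_0^T\mathbb{E}\left[\langle Z_\eta(t)-Z_\eta(0),\,b(t)\rangle_F\right]\mathrm{d}t$, and Young's inequality turns this into $\frac{1}{T}\int_0^T\mathbb{E}\left[\|Z_\eta(t)-Z_\eta(0)\|_F^2\right]\mathrm{d}t+T\int_0^T\mathbb{E}\left[\|b(t)\|_F^2\right]\mathrm{d}t$. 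The second piece is indeed the Cauchy--Schwarz drift term, but the first piece is not dominated by the quadratic-variation term -- it is precisely the quantity you are trying to bound, evaluated at earlier times -- so it cannot simply be ``absorbed''; you would need a Gr\"onwall iteration, which costs an extra multiplicative constant (roughly a factor $e$) and still yields only an inequality, not the stated identity with constants $2$ and $T$. The clean repair is to drop this detour and argue as the paper does: split $Z_\eta(T)-Z_\eta(0)$ into the $\mathrm{d}B$-integral and the $\mathrm{d}t$-integral, bound the square of the sum by twice the sum of squares pathwise, and only then take expectations and apply your (correct) It\^o-isometry and Cauchy--Schwarz computations to each piece.
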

\noindent
To prove Lemma \ref{Lemma_integral}, we write
\begin{align} \label{eq_z2}
Z_\eta\left(T\right) -  Z_\eta(0)
&=  \frac{1}{2}  \int_0^{T}\sum_{i=1}^{d} \sum_{j \neq i} |\lambda_i - \lambda_j| \frac{\mathrm{d}B_{ij}(t)}{\max(|\Delta_{ij}(t)|, \eta_{ij})}(u_i(t) u_j^\top(t) + u_j(t) u_i^\top(t)) \nonumber\\
    & -   \int_0^{T}\sum_{i=1}^{d} \sum_{j\neq i} (\lambda_i - \lambda_j) \frac{\mathrm{d}t}{\max(\Delta^2_{ij}(t), \eta_{ij}^2)} u_i(t) u_i^\top(t).
\end{align}
To compute the Frobenius norm of the first term on the r.h.s. of \eqref{eq_z2}, we use It\^o's Lemma (Lemma \ref{lemma_ito_lemma_new}), with  $X(t):=  \int_0^{t}\sum_{i=1}^{d} \sum_{j \neq i} |\lambda_i - \lambda_j| \frac{\mathrm{d}B_{ij}(s)}{\max(|\Delta_{ij}(s)|, \eta_{ij})}(u_i(s) u_j^\top(s) + u_j(s) u_i^\top(s))$ and the function $f(X):= \|X\|_F^2 = \sum_{i=1}^d \sum_{j=1}^d X_{ij}^2$.
By It\^o's Lemma, we have
 \begin{align} \label{eq_z1}
   &\mathbb{E}[\|X(T)\|_F^2 - \|X(0)\|_F^2] 
    = \mathbb{E} [\frac{1}{2} \int_0^t \sum_{\ell, r} \sum_{\alpha, \beta} (\frac{\partial}{ \partial X_{\alpha \beta}} f(X(t))) R_{(\ell r) (\alpha \beta)}(t) \mathrm{d}B_{\ell r}(t) ] \nonumber \\
     &\qquad \qquad +\mathbb{E}\left [\frac{1}{2} \int_0^t \sum_{\ell, r} \sum_{i,j} \sum_{\alpha, \beta} \left(\frac{\partial^2}{\partial X_{ij} \partial X_{\alpha \beta}} f(X(t))\right) R_{(\ell r) (i j)}(t)  R_{(\ell r) (\alpha \beta)}(t) \mathrm{d}t \right],
     \end{align}
where $R_{(\ell r) (i j)}(t) := \left(\frac{ |\lambda_i - \lambda_j| }{\max(|\Delta_{ij}(t)|, \eta_{ij})}(u_i(t) u_j^\top(t) + u_j(t) u_i^\top(t)) \right)[\ell, r]$, and where we denote by either $H_{\ell r}$ or $H[\ell, r]$  the $(\ell, r)$'th entry of any matrix $H$.

The first term on the r.h.s. of \eqref{eq_z1} is equal to zero since $\mathrm{d}B_{\ell r}(s)$ is independent of both $X(t)$ and $R(t)$ for all $s \geq t$ and the time-integral of each Brownian motion increment $\mathrm{d}B_{\alpha \beta}(s)$ has zero mean.
 To compute the second term on the r.h.s. of \eqref{eq_z1}, we use the fact that $\frac{\partial^2 }{\partial X_{ij} \partial X_{\alpha \beta}}f(X)$ is equal to 2 for $i = j$ and $0$ for $i \neq j$

To compute the Frobenius norm of the second term on the r.h.s. of \eqref{eq_z2}, we use the Cauchy-Schwarz inequality. 
The full proof appears in Section \ref{sec_proof_of_lemmas}.

\subsection{Step 6: Completing the proof} \label{sec_completing_the_proof}
To complete the proof, we plug in the high-probability bounds on the eigenvalue gaps from Section \ref{sec_gap_bounds} into Lemma \ref{Lemma_integral}.
Since by Lemma \ref{lemma_gap_concentration} $\Delta_{ij}(t) \geq \frac{1}{2}(\sigma_i - \sigma_j)$ w.h.p. for each $i,j \leq k+1$, and $\eta_{ij} = \frac{1}{4}(\sigma_i - \max(\sigma_{j}, \sigma_{k+1}))$, we must also have that $Z_\eta(t) = \Psi(t)$ for all $t \in [0, T]$ w.h.p.
Plugging in the high-probability bounds $\Delta_{ij}(t) \geq \frac{1}{2}(\sigma_i - \sigma_j)$ for each $i,j \geq k+1$, and noting that $\lambda_i - \lambda_j = 0$ for all $i,j >k$, we get that
  \begin{align}\label{eq_g5} &\mathbb{E}\left[\| \hat{V} \Lambda \hat{V}^\top -  V \Lambda V^\top \|_F^2\right] = \mathbb{E}\left[\left\|\Psi\left(T\right) -  \textstyle \Psi(0)\right \|_F^2 \right] \nonumber\\
 &\leq   2\int_0^{T} \mathbb{E}\left[ \sum_{i=1}^{d}  \sum_{j \neq i}  \frac{(\lambda_i - \lambda_j)^2}{(\sigma_i - \sigma_j)^2}\right]  \mathrm{d}t 
      +    T \int_0^{T}\mathbb{E}\left[\sum_{i=1}^{d}\left(\sum_{j\neq i} \frac{\lambda_i - \lambda_j}{(\sigma_i - \sigma_j)^2}\right)^2   \right]\mathrm{d}t \nonumber\\
      &\leq T\sum_{i=1}^{k}  \sum_{j = i+1}^d  \frac{(\lambda_i - \lambda_j)^2}{(\sigma_i - \max(\sigma_j, \sigma_{k+1}))^2}  + T^2\sum_{i=1}^{k}\left(\sum_{j = i+1}^d  \frac{\lambda_i - \lambda_j}{(\sigma_i - \max(\sigma_j, \sigma_{k+1}))^2}\right)^2.
\end{align}
Since $(\sigma_i - \max(\sigma_j, \sigma_{k+1})
\geq \Omega(\sqrt{d})$ for all $i\leq k$ and $j\in [d]$, we  can use the Cauchy-Schwarz inequality to show that the second term is (up to a factor of T) smaller than the first term: $\sum_{i=1}^{k}\left(\sum_{j = i+1}^d  \frac{\lambda_i - \lambda_j}{(\sigma_i - \max(\sigma_j. \sigma_{k+1}))^2}\right)^2  \leq \sum_{i=1}^{k} \sum_{j=i+1}^d \frac{(\lambda_i - \lambda_j)^2}{(\sigma_i-\max(\sigma_j, \sigma_{k+1})}$.  Plugging $T = \frac{\sqrt{2\log(\frac{1.25}{\delta})}}{\eps}$  into \eqref{eq_g5}, we obtain the bound in Theorem \ref{thm_large_gap}.
For the full proof of Theorem \ref{thm_large_gap}, see Section \ref{sec_proof_thm_large_gap}

\section{Proof of Lemmas}  \label{sec_proof_of_lemmas} 

\begin{proof}[Proof of Lemma \ref{Lemma_projection_differntial}]
We compute the stochastic derivative $\mathrm{d}(u_i(t) u_i^\top(t))$ by applying the formula \eqref{eq_DBM_eigenvectors}  for the stochastic derivative $\mathrm{d}u_i(t)$ of the eigenvector $u_i(t)$ in Dyson Brownian motion.
For any $t \in [0,T]$, we have that the stochastic derivative $\mathrm{d}(u_i(t) u_i^\top(t))$ satisfies
\begin{align}\label{eq_eq_derivative1}
  &\mathrm{d}(u_i(t) u_i^\top(t))
  =(u_i(t) + \mathrm{d}u_i(t))(u_i(t) + \mathrm{d}u_i(t))^\top - u_i(t)u_i(t)^\top \nonumber\\
  &= \left(u_i(t)+ \sum_{j \neq i} \frac{\mathrm{d}B_{ij}(t)}{\gamma_i(t) - \gamma_j(t)}u_j(t) - \frac{1}{2}\sum_{j \neq i} \frac{\mathrm{d}t}{(\gamma_i(t)- \gamma_j(t))^2}u_i(t) \right) \nonumber\\
  &\qquad\times \left(u_i(t) + \sum_{j \neq i} \frac{\mathrm{d}B_{ij}(t)}{\gamma_i(t) - \gamma_j(t)}u_j(t) - \frac{1}{2}\sum_{j \neq i} \frac{\mathrm{d}t}{(\gamma_i(t)- \gamma_j(t))^2}u_i(t) \right)^\top - u_i(t)u_i(t)^\top \nonumber\\
  &= u_i(t) u_i^\top(t) + \sum_{j \neq i} \frac{\mathrm{d}B_{ij}(t)}{\gamma_i(t) - \gamma_j(t)}(u_i(t) u_j^\top(t) + u_j(t) u_i^\top(t))  - \sum_{j\neq i} \frac{\mathrm{d}t}{(\gamma_i(t)- \gamma_j(t))^2} u_i(t) u_i^\top(t)\nonumber\\
  & \qquad + \sum_{j \neq i}  \sum_{\ell \neq i} \frac{\mathrm{d}B_{ij}(t)\mathrm{d}B_{i\ell}(t)}{(\gamma_i(t) - \gamma_j(t)) (\gamma_i(t) - \gamma_\ell(t))} u_j(t)u_{\ell}^\top(t) \nonumber\\
  & \qquad   \qquad  \qquad -\varphi_1(t)\varphi_2(t)^\top  -\varphi_2(t)\varphi_1(t)^\top + -\varphi_2(t)\varphi_2(t)^\top - u_i(t)u_i(t)^\top,
  \end{align}
  where we define $\varphi_1(t):= \sum_{j \neq i} \frac{\mathrm{d}B_{ij}(t)}{\gamma_i(t) - \gamma_j(t)}u_j(t)$ and $\varphi_2(t):=\sum_{j \neq i} \frac{\mathrm{d}t}{(\gamma_i(t)- \gamma_j(t))^2}u_i(t)$.
The terms $\varphi_1(t) \varphi_2(t)^\top$ and $\varphi_2(t) \varphi_1(t)^\top$ have differentials $O(\mathrm{d}B_{ij} \mathrm{d}t)$, and $\varphi_2(t) \varphi_2(t)^\top$ has differentials $O(\mathrm{d}t^2)$; thus, all three terms vanish in the stochastic derivative by Lemma \ref{lemma_ito_lemma_new}.
Therefore, \eqref{eq_eq_derivative1} implies that the stochastic derivative $\mathrm{d}(u_i(t) u_i^\top(t))$ satisfies
  \begin{align}\label{eq_eq_derivative2}
  &\mathrm{d}(u_i(t) u_i^\top(t)) = \sum_{j \neq i} \frac{\mathrm{d}B_{ij}(t)}{\gamma_i(t) - \gamma_j(t)}(u_i(t) u_j^\top(t) + u_j(t) u_i^\top(t)) - \sum_{j\neq i} \frac{\mathrm{d}t}{(\gamma_i(t)- \gamma_j(t))^2} u_i(t) u_i^\top(t)\nonumber\\
  & \qquad + \sum_{j \neq i}  \sum_{\ell \neq i} \frac{\mathrm{d}B_{ij}(t)\mathrm{d}B_{i\ell}(t)}{(\gamma_i(t) - \gamma_j(t)) (\gamma_i(t) - \gamma_\ell(t))} u_j(t)u_{\ell}^\top(t)\nonumber\\
    &= \sum_{j \neq i} \frac{\mathrm{d}B_{ij}(t)}{\gamma_i(t) - \gamma_j(t)}(u_i(t) u_j^\top(t) + u_j(t) u_i^\top(t)) - \sum_{j\neq i} \frac{\mathrm{d}t}{(\gamma_i(t)- \gamma_j(t))^2} u_i(t) u_i^\top(t)\nonumber\\
  & \qquad + \sum_{j \neq i} \left(\frac{\mathrm{d}B_{ij}(t)}{\gamma_i(t) - \gamma_j(t)} \right)^2 u_j(t)u_j^\top(t)\nonumber\\
      &=  \sum_{j \neq i} \frac{\mathrm{d}B_{ij}(t)}{\gamma_i(t) - \gamma_j(t)}(u_i(t) u_j^\top(t) + u_j(t) u_i^\top(t)) - \sum_{j\neq i} \frac{\mathrm{d}t}{(\gamma_i(t)- \gamma_j(t))^2} u_i(t) u_i^\top(t)\nonumber\\
  & \qquad + \sum_{j \neq i} \frac{\mathrm{d}t}{(\gamma_i(t) - \gamma_j(t))^2} u_j(t)u_j^\top(t),
\end{align}
where the second-to-last equality holds since all terms $\mathrm{d}B_{ij}(t)\mathrm{d}B_{i\ell}(t)$ with $j \neq \ell$ in the sum $\sum_{j \neq i}  \sum_{\ell \neq i} \frac{\mathrm{d}B_{ij}(t)\mathrm{d}B_{i\ell}(t)}{(\gamma_i(t) - \gamma_j(t)) (\gamma_i(t) - \gamma_\ell(t))} u_j(t)u_{\ell}^\top(t)$  vanish by It\^o's Lemma (Lemma \ref{lemma_ito_lemma_new}) since they have mean 0 and are $O(\mathrm{d}B_{ij}(t) \mathrm{d}B_{i\ell}(t))$; we are therefore left only with the terms $j = \ell$ in the sum which have differential terms $(\mathrm{d}B_{ij}(t))^2$ which have mean $\mathrm{d}t$ plus higher-order terms which vanish by It\^o's Lemma. 
Therefore  \eqref{eq_eq_derivative2} implies that
\begin{align*}
    &\mathrm{d}(u_i(t) u_i^\top(t))\\ 
    & = \sum_{j \neq i} \frac{\mathrm{d}B_{ij}(t)}{\gamma_i(t) - \gamma_j(t)}(u_i(t) u_j^\top(t) + u_j(t) u_i^\top(t)) - \sum_{j\neq i} \frac{\mathrm{d}t}{(\gamma_i(t)- \gamma_j(t))^2} (u_i(t) u_i^\top(t) - u_j(t)u_j^\top(t)).
\end{align*}
\end{proof}


\begin{proof}[Proof of Lemma \ref{Lemma_orbit_differntial}]
To compute the stochastic derivative of $\Psi(t)$, we would like to apply our formula for the stochastic derivative of the projection matrix $u_i(t) u_i^\top(t)$ for each eigenvector $u_i(t)$ (Lemma \ref{Lemma_projection_differntial}).
Towards this end, we first decompose the matrix $\Psi(t)$ as a sum of these projection matrices $u_i(t) u_i^\top(t)$:
\begin{equation}\label{eq_derivative3}
\textstyle \Psi(t) =   \sum_{i=1}^{d}  \lambda_i (u_i(t) u_i^\top(t)).
\end{equation}
Taking the derivative on both sides of \eqref{eq_derivative3}, we have
\begin{equation} \label{eq_derivative4}
\textstyle \mathrm{d}\Psi(t) =   \sum_{i=1}^{d}  \lambda_i \mathrm{d}(u_i(t) u_i^\top(t)).
\end{equation}
Thus, plugging in Lemma \ref{Lemma_projection_differntial} for each $i \in[d]$ into \eqref{eq_derivative4}, we have that
\begin{align*}
  \mathrm{d}\Psi(t) &=   \sum_{i=1}^{d}  \lambda_i \mathrm{d}(u_i(t) u_i^\top(t))\\
   &\stackrel{\textrm{Lemma \ref{Lemma_projection_differntial}}}{=} \sum_{i=1}^{d}  \lambda_i  \left( \sum_{j \neq i} \frac{\mathrm{d}B_{ij}(t)}{\gamma_i(t) - \gamma_j(t)}(u_i(t) u_j^\top(t) + u_j(t) u_i^\top(t))\right. \\
   &- \left.\sum_{j\neq i} \frac{\mathrm{d}t}{(\gamma_i(t)- \gamma_j(t))^2} (u_i(t) u_i^\top(t) - u_j(t)u_j^\top(t)) \right)\\
   & =  \frac{1}{2}\sum_{i=1}^{d} \sum_{j \neq i} (\lambda_i - \lambda_j) \frac{\mathrm{d}B_{ij}(t)}{\gamma_i(t) - \gamma_j(t)}(u_i(t) u_j^\top(t) + u_j(t) u_i^\top(t))\\
   & - \frac{1}{2} \sum_{i=1}^{d} \sum_{j\neq i} (\lambda_i - \lambda_j) \frac{\mathrm{d}t}{(\gamma_i(t)- \gamma_j(t))^2} (u_i(t) u_i^\top(t) - u_j(t)u_j^\top(t))\\
      & =  \frac{1}{2}\sum_{i=1}^{d} \sum_{j \neq i} (\lambda_i - \lambda_j) \frac{\mathrm{d}B_{ij}(t)}{\gamma_i(t) - \gamma_j(t)}(u_i(t) u_j^\top(t) + u_j(t) u_i^\top(t))\\
    & - \sum_{i=1}^{d} \sum_{j\neq i} (\lambda_i - \lambda_j) \frac{\mathrm{d}t}{(\gamma_i(t)- \gamma_j(t))^2} u_i(t) u_i^\top(t),
\end{align*}
where the second equality holds by Lemma \ref{Lemma_projection_differntial}.
To see why the third equality holds, for the first term inside the summation, note that  $\mathrm{d}B(t)$ is a symmetric matrix of differentials which means that $\mathrm{d}B_{ij}(t) = \mathrm{d}B_{ji}(t)$ for all $i,j$, and hence that  $\frac{\mathrm{d}B_{ij}(t)}{\gamma_i(t) - \gamma_j(t)}(u_i(t) u_j^\top(t) + u_j(t) u_i^\top(t)) = - \frac{\mathrm{d}B_{ji}(t)}{\gamma_j(t) - \gamma_i(t)}(u_j(t) u_i^\top(t) + u_i(t) u_j^\top(t))$ for all $i \neq j$.
For the second term inside the summation, note that $\frac{\mathrm{d}t}{(\gamma_i(t)- \gamma_j(t))^2} (u_i(t) u_i^\top(t) - u_j(t)u_j^\top(t)) = - \frac{\mathrm{d}t}{(\gamma_j(t)- \gamma_i(t))^2} (u_j(t) u_j^\top(t) - u_i(t)u_i^\top(t))$ for all $i \neq j$.
\end{proof}


\begin{proof}[Proof of Lemma \ref{lemma_spectral_martingale}]
To prove Lemma \ref{lemma_spectral_martingale} we will use Doob's submartingale inequality.
Towards this end, let  $\mathcal{F}_s$ be the filtration generated by $B(s)$.
First, we note that $\exp(\|B(t)\|_2)$ is a submartingale for all $t \geq 0$; that is, $\mathbb{E}[\exp(\|B(t)\|_2) |  \mathcal{F}_s] \geq  \mathrm{exp}(\| B(s)\|_2 )$ for all $0\leq s \leq t$.
This is because for all $s \leq t$, we have
\begin{align*}
\mathbb{E}[\exp(\|B(t)\|_2) |  \mathcal{F}_s] &= \mathbb{E}\left[\mathrm{exp}\left(\sup_{v \in \mathbb{R}^d: \|v\|_2 = 1} v^\top B(t) v\right) \, \, \bigg | \, \,\mathcal{F}_s\right]\\
&\geq \mathrm{exp}\left(\mathbb{E}\left[\sup_{v \in \mathbb{R}^d: \|v\|_2 = 1} v^\top B(t) v \, \, \bigg | \, \,\mathcal{F}_s\right]\right)\\
&\geq \mathrm{exp}\left(\sup_{v \in \mathbb{R}^d: \|v\|_2 = 1}  \mathbb{E}\left[v^\top B(t) v \, \,  | \, \, \mathcal{F}_s\right]\right)\\
&=  \mathrm{exp}\left(\sup_{v \in \mathbb{R}^d: \|v\|_2 = 1}  \mathbb{E}\left[v^\top (B(t) - B(s)) v  + v^\top B(s) v \, \,  | \, \, \mathcal{F}_s\right] \right)\\
&=  \mathrm{exp}\left(\sup_{v \in \mathbb{R}^d: \|v\|_2 = 1}  \mathbb{E}\left[v^\top (B(t) - B(s)) v | \, \, \mathcal{F}_s\right]  +  \mathbb{E}\left[v^\top B(s) v \, \,  | \, \, \mathcal{F}_s\right] \right)\\
&= \mathrm{exp}\left( \sup_{v \in \mathbb{R}^d: \|v\|_2 = 1}  \mathbb{E}\left[v^\top B(s) v \, \,  | \, \, \mathcal{F}_s\right] \right)\\
&= \mathrm{exp}\left(\sup_{v \in \mathbb{R}^d: \|v\|_2 = 1}  v^\top B(s) v \right),\\
&= \mathrm{exp}(\| B(s)\|_2 ), 
\end{align*}
where the first inequality holds by Jensen's inequality since $\exp(\cdot)$ is convex, and the third equality holds since $v^\top (B(t) - B(s)) v$ is independent of $\mathcal{F}_s$ and is distributed as $N(0,2(t-s))$.
Thus, by Doob's submartingale inequality, for any $\beta>0$  (we will choose the value of $\beta$ later to optimize our bound) we have,
\begin{align*}
    \mathbb{P}\left(\sup_{t \in [0,T]}\|B(t)\|_2 > 2T(\sqrt{d} + \alpha)\right) & =      \mathbb{P}\left(\sup_{t \in [0,T]}\frac{\beta}{2T}\|B(t)\|_2 - \beta\sqrt{d} > \beta \alpha\right) \\
    &=\mathbb{P}\left(\sup_{t \in [0,T]}\exp\left(\frac{\beta}{2T}\|B(t)\|_2 - \beta\sqrt{d}\right) > \exp(\beta \alpha)\right)\\
    & \leq \frac{\mathbb{E}[\exp(\frac{\beta}{2T}\|B(t)\|_2 - \beta\sqrt{d})]}{\exp(\beta\alpha)}\\
        & = \frac{\int_{0}^{\infty} \mathbb{P}[\exp(\frac{\beta}{2T}\|B(t)\|_2 - \beta\sqrt{d})>x] \mathrm{d}x}{\exp(\beta \alpha)}\\
                & = \frac{\int_{0}^{\infty} \mathbb{P}[\frac{1}{2}\|B(t)\|_2 - \sqrt{d}> \beta^{-1} \log(x)] \mathrm{d}x}{\exp(\beta \alpha)}\\
                & \leq \frac{\int_{0}^{\infty} 2 e^{-\beta^{-2}\log^2(x)} \mathrm{d}x}{\exp(\beta \alpha)}\\
                                                                                               & =  \frac{2\sqrt{\pi} \beta e^{\frac{1}{4}\beta^2}}{\exp(\beta \alpha)}\\
                                                                & \leq  \frac{2\sqrt{\pi}e^{\frac{1}{2}\beta^2}}{\exp(\beta \alpha)}\\
                &= 2\sqrt{\pi} e^{\frac{1}{2}\beta^2 - \beta \alpha},
\end{align*}
where the first inequality holds by Doob's submartingale inequality, and the second inequality holds by Lemma \ref{lemma_concentration}.
Setting $\beta = \alpha$, we have
\begin{eqnarray*}
    \textstyle \mathbb{P}\left(\sup_{t \in [0,T]}\|B(t)\|_2 > T(\sqrt{d} + \alpha)\right) \leq 2\sqrt{\pi} e^{-\frac{1}{2}\alpha^2}.
    \end{eqnarray*}
\end{proof}

%
%

\begin{proof}[Proof of Lemma \ref{lemma_gap_concentration}]
To prove Lemma \ref{lemma_gap_concentration}, we plug our high-probability concentration bound for $\sup_{t\in[0,T]} \|B(t)\|_2$ (Lemma \ref{lemma_spectral_martingale}) into Weyl's Inequality (Lemma \ref{lemma_weyl}).
Since, at every time $t$, $\Phi(t)= M + B(t)$ and  $\gamma_1(t) \geq \cdots \geq \gamma_d(t)$ are the eigenvalues of $\Phi(t)$, Weyl's Inequality implies that
\begin{equation} \label{eq_gap1}
    \gamma_i(t) - \gamma_{i+1}(t) \geq \gamma_i(0) - \gamma_{i+1}(0) - \|B(t)\|_2, \qquad \forall t\in[0,T], i \in [d].
\end{equation}
Therefore, plugging Lemma \ref{lemma_spectral_martingale} into \eqref{eq_gap1} we have that
    \begin{align*}
             &\mathbb{P}\left( \bigcup_{i\in S} \left\{\inf_{t \in [0,T]} \gamma_i(t) - \gamma_{i+1}(t) <  \frac{1}{2}(\gamma_i(0) - \gamma_{i+1}(0)) - \alpha) \right\}\right)\\
              &\stackrel{\textrm{Eq. \eqref{eq_gap1}}}{\leq}\mathbb{P}\left( \bigcup_{i\in S} \left\{ \gamma_i(0) - \gamma_{i+1}(0) - \sup_{t \in [0,T]} 2\|B(t)\|_2  <  \frac{1}{2}(\gamma_i(0) - \gamma_{i+1}(0)) - \alpha) \right\}\right)  \\
         &= \mathbb{P}\left(  \bigcup_{i\in S} \left\{ \sup_{t \in [0,T]} \|B(t)\|_2  >  \frac{1}{4}(\gamma_i(0) - \gamma_{i+1}(0)) + \frac{1}{2} \alpha) \right\}\right)  \\
                  &\stackrel{\textrm{Assumption} \ref{assumption_gaps}}{\leq} \mathbb{P}\left( \bigcup_{i\in S} \left\{ \sup_{t \in [0,T]} \|B(t)\|_2  >   2T \sqrt{d} + \frac{1}{2} \alpha) \right\}\right)\\
                  &= \mathbb{P}\left(\sup_{t \in [0,T]} \|B(t)\|_2  >   2T \sqrt{d} + \frac{1}{2} \alpha) \right)\\  
                  &\stackrel{\textrm{Lemma \ref{lemma_spectral_martingale}}}{\leq} 2\sqrt{\pi} e^{-\frac{1}{32} \alpha^2},
    \end{align*}
The first inequality holds by \eqref{eq_gap1},
and the second inequality holds by Assumption \ref{assumption_gaps} since $\gamma_i(0) = \sigma_i$ for each $i\in[d]$ because $\Phi(0) = M$.
The last inequality holds by the high-probability concentration bound for $\sup_{t\in[0,T]} \|B(t)\|_2$ (Lemma \ref{lemma_spectral_martingale}).
\end{proof}


%
%
\begin{proof}[Proof of Lemma \ref{Lemma_integral}]

%
By the definition of $Z_\eta(t)$ we have that
\begin{align*}
Z_\eta\left(T\right) -  Z_\eta(0) &= \int_0^{T} \mathrm{d}Z_\eta(t)\\
&=  \frac{1}{2}  \int_0^{T}\sum_{i=1}^{d} \sum_{j \neq i} |\lambda_i - \lambda_j| \frac{\mathrm{d}B_{ij}(t)}{\max(|\Delta_{ij}(t)|, \eta_{ij})}(u_i(t) u_j^\top(t) + u_j(t) u_i^\top(t)) \\
    & -   \int_0^{T}\sum_{i=1}^{d} \sum_{j\neq i} (\lambda_i - \lambda_j) \frac{\mathrm{d}t}{\max(\Delta^2_{ij}(t), \eta_{ij}^2)} u_i(t) u_i^\top(t).
\end{align*}
Therefore, we have that
\begin{align} \label{eq_int_1}
\left\|Z_\eta(T) -  Z_\eta(0)\right \|_F^2 &\leq  \frac{1}{2} \left\| \int_0^{T}\sum_{i=1}^{d} \sum_{j \neq i} |\lambda_i - \lambda_j| \frac{\mathrm{d}B_{ij}(t)}{\max(|\Delta_{ij}(t)|, \eta_{ij})}(u_i(t) u_j^\top(t) + u_j(t) u_i^\top(t))\right\|_F^2 \nonumber\\
    & +  \left\| \int_0^{T}\sum_{i=1}^{d} \sum_{j\neq i} (\lambda_i - \lambda_j) \frac{\mathrm{d}t}{\max(\Delta^2_{ij}(t), \eta_{ij}^2)} u_i(t) u_i^\top(t) \right\|_F^2. 
\end{align}
The first term on the r.h.s. of \eqref{eq_int_1} (inside its Frobenius norm) is a ``diffusion'' term--that is, the integral has mean 0 and Brownian motion differentials $\mathrm{d}B_{ij}(t)$ inside the integral.
The second term on the r.h.s. (inside its Frobenius norm) is a ``drift'' term-- that is, the integral has non-zero  mean and deterministic differentials $\mathrm{d}t$ inside the integral.
We bound the diffusion and drift terms separately.

\paragraph{Bounding the diffusion term:}

We first use It\^o's Lemma (Lemma \ref{lemma_ito_lemma_new}) to bound the diffusion term in \eqref{eq_int_1}.
Towards this end, let $f: \mathbb{R}^{d \times d}: \rightarrow \mathbb{R}$ be the function which takes as input a $d \times d$ matrix and outputs the square of its Frobenius norm: $f(X):= \|X\|_F^2 = \sum_{i=1}^d \sum_{j=1}^d X_{ij}^2$ for every $X \in \mathbb{R}^{d\times d}$.
Then
\begin{equation}\label{eq_int_5}
\frac{\partial^2 }{\partial X_{ij} \partial X_{\alpha \beta}}f(X) =
\begin{cases} 
      2 & \textrm{if }  \, \, \, (i,j)= (\alpha, \beta) \\
      0 & \textrm{otherwise}.
   \end{cases}
   \end{equation}
Define $X(t):=  \int_0^{t}\sum_{i=1}^{d} \sum_{j \neq i} |\lambda_i - \lambda_j| \frac{\mathrm{d}B_{ij}(s)}{\max(|\Delta_{ij}(s)|, \eta_{ij})}(u_i(s) u_j^\top(s) + u_j(s) u_i^\top(s))$ for all $t \geq 0$.
Then
\begin{equation*}
\mathrm{d}X_{\ell r}(t) = \sum_{j=1}^d R_{(\ell r) (i j)}(t) \mathrm{d}B_{(ij)}(t) \qquad \qquad \forall t \geq 0,
\end{equation*}
where $R_{(\ell r) (i j)}(t) := \left(\frac{ |\lambda_i - \lambda_j| }{\max(|\Delta_{ij}(t)|, \eta_{ij})}(u_i(t) u_j^\top(t) + u_j(t) u_i^\top(t)) \right)[\ell, r]$, and where we denote by either $H_{\ell r}$ or $H[\ell, r]$  the $(\ell, r)$'th entry of any matrix $H$.

Then we have
\begin{align}\label{eq_int_2b}
   &\mathbb{E}\left[\left\| \int_0^{T}\sum_{i=1}^{d} \sum_{j \neq i} |\lambda_i - \lambda_j| \frac{\mathrm{d}B_{ij}(t)}{\max(|\Delta_{ij}(t)|, \eta_{ij})}(u_i(t) u_j^\top(t) + u_j(t) u_i^\top(t))\right\|_F^2 \right] \nonumber \\
    &= \mathbb{E}[f(X(T))] \nonumber\\
   &=\mathbb{E}[f(X(T)) - f(X(0))] \nonumber\\
     &\stackrel{\textrm{It\^o's Lemma (Lemma \ref{lemma_ito_lemma_new})}}{=} \mathbb{E}\left [\frac{1}{2} \int_0^t \sum_{\ell, r} \sum_{\alpha, \beta} \left(\frac{\partial}{ \partial X_{\alpha \beta}} f(X(t))\right) R_{(\ell r) (\alpha \beta)}(t) \mathrm{d}B_{\ell r}(t) \right] \nonumber\\
     &\qquad \qquad \qquad \qquad \qquad +\mathbb{E}\left [\frac{1}{2} \int_0^t \sum_{\ell, r} \sum_{i,j} \sum_{\alpha, \beta} \left(\frac{\partial^2}{\partial X_{ij} \partial X_{\alpha \beta}} f(X(t))\right) R_{(\ell r) (i j)}(t)  R_{(\ell r) (\alpha \beta)}(t) \mathrm{d}t \right] \nonumber\\
        &= 0 +\mathbb{E}\left [\frac{1}{2} \int_0^t \sum_{\ell, r} \sum_{i,j} \sum_{\alpha, \beta} \left(\frac{\partial^2}{\partial X_{ij} \partial X_{\alpha \beta}} f(X(t))\right) R_{(\ell r) (i j)}(t)  R_{(\ell r) (\alpha \beta)}(t) \mathrm{d}t \right],
        \end{align}
        where the third equality is It\^o's Lemma (Lemma \ref{lemma_ito_lemma_new}), and the last equality holds since $$\mathbb{E}\left[\int_0^T  \left(\frac{\partial}{ \partial X_{\alpha \beta}} f(X(t))\right) R_{(\ell r) (\alpha \beta)}(t) \mathrm{d}B_{\ell r}(t) \right] = 0$$ for each $\ell, r, \alpha, \beta \in [d]$ because $\mathrm{d}B_{\ell r}(s)$ is independent of both $X(t)$ and $R(t)$ for all $s \geq t$ and the Brownian motion increments $\mathrm{d}B_{\alpha \beta}(s)$ satisfy $\mathbb{E}[\int_t^{\tau} \mathrm{d}B_{\alpha \beta}(s)] = \mathbb{E}[B_{\alpha \beta}(\tau) - B_{\alpha \beta}(t)]= 0$ for any $\tau \geq t$.

        Thus, plugging \eqref{eq_int_5} into \eqref{eq_int_2b}, we have 
   \begin{align}\label{eq_int_2}
       &\mathbb{E}\left[\left\| \int_0^{T}\sum_{i=1}^{d} \sum_{j \neq i} |\lambda_i - \lambda_j| \frac{\mathrm{d}B_{ij}(t)}{\max(|\Delta_{ij}(t)|, \eta_{ij})}(u_i(t) u_j^\top(t) + u_j(t) u_i^\top(t))\right\|_F^2 \right] \nonumber\\ 
   &\stackrel{\textrm{Eq.} \eqref{eq_int_5}, \eqref{eq_int_2b}}{=} \mathbb{E}\left [\frac{1}{2} \int_0^t \sum_{\ell, r} \sum_{i,j}  2  R_{(\ell r) (i j)}^2(t) \mathrm{d}t \right]\nonumber\\
                   &= \mathbb{E}\left [ \int_0^t \sum_{\ell, r} \sum_{i,j} \left(\left( \frac{ |\lambda_i - \lambda_j| }{\max(|\Delta_{ij}(t)|, \eta_{ij})^2}(u_i(t) u_j^\top(t) + u_j(t) u_i^\top(t))\right)[\ell, r]\right)^2 \mathrm{d}t \right]\nonumber\\
                                      &= \mathbb{E}\left [ \int_0^t \sum_{i,j} \sum_{\ell, r}  \left(\left( \frac{ |\lambda_i - \lambda_j| }{\max(|\Delta_{ij}(t)|, \eta_{ij})^2}(u_i(t) u_j^\top(t) + u_j(t) u_i^\top(t))\right)[\ell, r]\right)^2 \mathrm{d}t \right]\nonumber\\
                        &= \mathbb{E}\left [ \int_0^t   \sum_{i,j} \left \|\frac{ |\lambda_i - \lambda_j| }{\max(|\Delta_{ij}(t)|, \eta_{ij})}(u_i(t) u_j^\top(t) + u_j(t) u_i^\top(t))\right\|_F^2\mathrm{d}t \right]\nonumber\\
                &= 2\int_0^{T}   \mathbb{E}\left[ \sum_{i=1}^{d}  \sum_{j \neq i}  \frac{(\lambda_i - \lambda_j)^2}{\max(\Delta^2_{ij}(t), \eta_{ij}^2)} \|u_i(t) u_j^\top(t) + u_j(t) u_i^\top(t)\|_F^2  \mathrm{d}t\right] \nonumber\\
                             &= 4\int_0^{T}   \mathbb{E}\left[ \sum_{i=1}^{d}  \sum_{j \neq i}  \frac{(\lambda_i - \lambda_j)^2}{\max(\Delta^2_{ij}(t), \eta_{ij}^2)} \mathrm{d}t\right],
          \end{align}
where the fifth equality holds because $\langle u_i(t) u_j^\top(t) ,  u_\ell(t) u_h^\top(t) \rangle = 0$ for all $(i,j) \neq (\ell,h)$, and the last equality holds because  $\|u_i(t) u_j^\top(t) + u_j(t) u_i^\top(t)\|_F^2 = 2$ for all $t$ with probability $1$.

\paragraph{Bounding the drift term:}

To bound the drift term in \eqref{eq_int_1}, we use the Cauchy-Shwarz inequality:
\begin{align}\label{eq_int_3}
     &\left\|\int_0^{T}\sum_{i=1}^{d} \sum_{j\neq i} (\lambda_i - \lambda_j) \frac{\mathrm{d}t}{\max(\Delta^2_{ij}(t), \eta_{ij}^2)} u_i(t) u_i^\top(t) \right\|_F^2\nonumber \\
     & =      \left\|\int_0^{T}\sum_{i=1}^{d} \sum_{j\neq i}  \frac{\lambda_i - \lambda_j}{\max(\Delta^2_{ij}(t), \eta_{ij}^2)} u_i(t) u_i^\top(t) \times 1 \mathrm{d}t \right\|_F^2 \nonumber\\ 
     &     \stackrel{\textrm{Cauchy-Schwarz Inequality}}{\leq}     \int_0^{T}\left\|\sum_{i=1}^{d} \sum_{j\neq i}  \frac{\lambda_i - \lambda_j}{\max(\Delta^2_{ij}(t), \eta_{ij}^2)} u_i(t) u_i^\top(t)\right\|_F^2 \mathrm{d}t\times \int_0^{T} 1^2 \mathrm{d}t \nonumber\\
     &=   T \int_0^{T}\left\|\sum_{i=1}^{d} \sum_{j\neq i} \frac{\lambda_i - \lambda_j}{\max(\Delta^2_{ij}(t), \eta_{ij}^2)} u_i(t) u_i^\top(t) \right\|_F^2 \mathrm{d}t \nonumber \\
          &=   T \int_0^{T}\sum_{i=1}^{d} \left\|\sum_{j\neq i} \frac{\lambda_i - \lambda_j}{\max(\Delta^2_{ij}(t), \eta_{ij}^2)} u_i(t) u_i^\top(t) \right\|_F^2 \mathrm{d}t \nonumber\\
                    &=   T \int_0^{T}\sum_{i=1}^{d} \left\|\left(\sum_{j\neq i} \frac{\lambda_i - \lambda_j}{\max(\Delta^2_{ij}(t), \eta_{ij}^2)}\right) u_i(t) u_i^\top(t) \right\|_F^2 \mathrm{d}t \nonumber\\
                                        &=   T \int_0^{T}\sum_{i=1}^{d}\left(\sum_{j\neq i} \frac{\lambda_i - \lambda_j}{\max(\Delta^2_{ij}(t), \eta_{ij}^2)}\right)^2 \left\| u_i(t) u_i^\top(t) \right\|_F^2 \mathrm{d}t \nonumber\\
                                                                                &=   T \int_0^{T}\sum_{i=1}^{d}\left(\sum_{j\neq i} \frac{\lambda_i - \lambda_j}{\max(\Delta^2_{ij}(t), \eta_{ij}^2)}\right)^2 \times 1 \mathrm{d}t,
\end{align}
where the first inequality is by the Cauchy-Schwarz inequality for integrals (applied to each entry of the matrix-valued integral).
The third equality holds since $\langle u_i(t) u_i^\top(t) , u_j(t) u_j^\top(t) \rangle = 0$ for all $i \neq j$.
The last equality holds since $\| u_i(t) u_i^\top(t) \|_F^2=1$ with probability $1$.
Therefore, taking the expectation on both sides of \eqref{eq_int_1}, and plugging \eqref{eq_int_2} and  \eqref{eq_int_3} into  \eqref{eq_int_1}, we have
\begin{align} \label{eq_int_4}
\mathbb{E}\left[\left\|Z_\eta\left(T\right) -  Z_\eta(0)\right \|_F^2\right] &\leq  2\int_0^{T}   \mathbb{E}\left[ \sum_{i=1}^{d}  \sum_{j \neq i}  \frac{(\lambda_i - \lambda_j)^2}{\max(\Delta^2_{ij}(t), \eta_{ij}^2)} \right]\mathrm{d}t \nonumber \\
    & +    T \int_0^{T}\mathbb{E}\left[\sum_{i=1}^{d}\left(\sum_{j\neq i} \frac{\lambda_i - \lambda_j}{\max(\Delta^2_{ij}(t), \eta_{ij}^2)}\right)^2\right] \mathrm{d}t .
\end{align}
\end{proof}

\section{Proof of Theorem \ref{thm_large_gap} -- Main Result} \label{sec_proof_thm_large_gap}

\begin{proof}[Proof of Theorem \ref{thm_large_gap}]

To complete the proof of Theorem \ref{thm_large_gap}, we plug in the high-probability concentration bounds on the eigenvalue gaps $\Delta_{ij}(t) = \gamma_{i}(t) - \gamma_j(t)$ (Lemma \ref{lemma_gap_concentration}) into Lemma \ref{Lemma_integral}.
Since by Lemma \ref{lemma_gap_concentration} $\Delta_{ij}(t) \geq \frac{1}{2}(\sigma_i - \sigma_j)$ w.h.p. for each $i,j \leq k+1$, and $\eta_{ij} = \frac{1}{4}(\sigma_i - \max(\sigma_{j}, \sigma_{k+1}))$,  by Lemma \ref{Lemma_orbit_differntial} we have that the derivative $\mathrm{d}\Psi(t)$ satisfies $\mathrm{d}\Psi(t) = \mathrm{d}Z_\eta(t)$ for all $t \in [0,T]$ w.h.p. and hence that $Z_\eta(t) = \Psi(t)$ for all $t \in [0, T]$ w.h.p.
Plugging in the high-probability bounds on the gaps $\Delta_{ij}(t)$ (Lemma \ref{lemma_gap_concentration}) into the bound on $\mathbb{E}\left[\left\|Z_\eta\left(T\right) -  Z_\eta(0)\right \|_F^2\right]$ from Lemma \ref{Lemma_integral} therefore allows us to obtain a bound for $\mathbb{E}\left[\left\|\Psi\left(T\right) -  \Psi(0)\right \|_F^2\right]$.
To obtain a bound for the utility $\mathbb{E}\left[\| \hat{V} \Lambda \hat{V}^\top -  V \Lambda V^\top \|_F^2\right]$ we set $T = \frac{\sqrt{2\log(\frac{1.25}{\delta})}}{\eps}$, in which case we have $\Psi(T) =\hat{V} \Lambda \hat{V}^\top$ and hence that $\mathbb{E}\left[\| \hat{V} \Lambda \hat{V}^\top -  V \Lambda V^\top \|_F^2\right] = \mathbb{E}\left[\left\|\Psi\left(T\right) -  \Psi(0)\right \|_F^2\right]$.

Towards this end, for all $i\neq j$ we define $\eta_{ij}$ as follows:
 Let $\eta_{ij} = \frac{1}{4}(\sigma_i - \max(\sigma_{j}, \sigma_{k+1}))$ for $0<i<j \leq d$ and $i\leq k$, $\eta_{ij}= 0$ if $0<i<j \leq d$ and $i> k$, and $\eta_{ij} = \eta_{ji}$ otherwise. 

Define the event $E = \cap_{i,j \in [d], i\neq j}\{ \inf_{t\in[0,T]}\Delta_{ij}(t) \geq \eta_{ij}\}$. 
And define the event $$\hat{E}:=\bigcap_{i\in [k]} \left\{\inf_{t \in [0,T]} \gamma_i(t) - \gamma_{i+1}(t) \geq  \frac{1}{4}(\sigma_i - \sigma_{i+1})) \right\}.$$
Then $\hat{E} \subseteq E$. 
In particular, whenever the event $E$ occurs, by Lemma \ref{Lemma_orbit_differntial} we have that the derivative $\mathrm{d}\Psi(t)$ satisfies $\mathrm{d}\Psi(t) = \mathrm{d}Z_\eta(t)$ for all $t \in [0,T]$ and hence that 
\begin{equation*}
\Psi(t) = \Psi(0) + \int_0^t \mathrm{d}\Psi(s) = Z_\eta(0) + \int_0^t \mathrm{d}Z_\eta(s) =  Z_\eta(t)\qquad \qquad \forall t \in [0,T],
\end{equation*}
whenever the event $E$ occurs, since $Z_\eta(0) = \Psi(0)$ by definition.
Thus we have that, conditioning $\Psi(t)$ and $Z_\eta(t)$ on the event $E$,
\begin{equation}\label{eq_e6}
    \Psi(t)|E = Z_\eta(t)|E \qquad \forall t \in [0,T].
\end{equation}
To bound the utility $\mathbb{E}\left[\| \hat{V} \Lambda \hat{V}^\top -  V \Lambda V^\top \|_F^2\right]$, we first separate $\mathbb{E}\left[\| \hat{V} \Lambda \hat{V}^\top -  V \Lambda V^\top \|_F^2\right]$ into a sum of terms conditioned on the event $E$ and its complement $E^c$.
By Lemma \ref{Lemma_integral} we have
\begin{align}\label{eq_e1}
    &\mathbb{E}\left[\| \hat{V} \Lambda \hat{V}^\top -  V \Lambda V^\top \|_F^2\right] \nonumber \\
    &= \mathbb{E}\left[\| \hat{V} \Lambda \hat{V}^\top -  V \Lambda V^\top \|_F^2\bigg| \, E\right]\times \mathbb{P}(E) + \mathbb{E}\left[\| \hat{V} \Lambda \hat{V}^\top -  V \Lambda V^\top \|_F^2\bigg| E^c \, \right]\times \mathbb{P}(E^c)\nonumber \\
        &\leq \mathbb{E}\left[\| \hat{V} \Lambda \hat{V}^\top -  V \Lambda V^\top \|_F^2\bigg| \, E\right]\times \mathbb{P}(E) \nonumber\\
        &+ 4\left(\mathbb{E}\left[\| (\hat{V} - V)\Lambda \hat{V}^\top\|_F^2\bigg| E^c \, \right] +\mathbb{E}\left[\| V \Lambda (\hat{V}-V)^\top \|_F^2\bigg| E^c \, \right]\right)\times \mathbb{P}(E^c)\nonumber \\
                &\leq \mathbb{E}\left[\| \hat{V} \Lambda \hat{V}^\top -  V \Lambda V^\top \|_F^2|E\right]\times \mathbb{P}(E)  + 8\mathbb{E}\left[\|\hat{V} -V\|_2^2\times \|\Lambda\|_F^2 \, \bigg| \, E^c\right]\times \mathbb{P}(E^c)\nonumber \\
                                &\leq \mathbb{E}\left[\| \hat{V} \Lambda \hat{V}^\top -  V \Lambda V^\top \|_F^2\bigg| \, E\right]\times \mathbb{P}(E)  + 32\|\Lambda\|_F^2\times \mathbb{P}(E^c) \nonumber \\
    &=\mathbb{E}\left[\left\|\Psi\left(T\right) -  \Psi(0)\right \|_F^2 \bigg| \, E\right] \times \mathbb{P}(E) + 32\|\Lambda\|_F^2\times \mathbb{P}(E^c) \nonumber \\
    &\leq\mathbb{E}\left[\left\|\Psi\left(T\right) -  \Psi(0)\right \|_F^2 \bigg| \, E\right] \times \mathbb{P}(E)  + 32\lambda_1^2 k\times \mathbb{P}(\hat{E}^c),
    \end{align}
    where the second inequality holds by the sub-multiplicative property of the Frobenius norm which says that $\|XY\|_F \leq \|X\|_2\times \|Y\|_F$ for any $X,Y \in \mathbb{R}^{d \times d}$.
    The third inequality holds since $\|V\|_2 = \|\hat{V}\|_2= 1$ since $V, \hat{V}$ are orthogonal matrices.

    To bound the first term in \eqref{eq_e1}, we use the fact that $\Psi(t)|E = Z_\eta(t)|E$ (Equation \eqref{eq_e6}) and apply Lemma \ref{Lemma_integral} to bound  $\mathbb{E}\left[\left\|Z_\eta\left(T\right) -  Z_\eta(0)\right \|_F^2\right]$.
    Thus we have,
    \begin{align} \label{eq_e2}
    &\mathbb{E}\left[\left\|\Psi\left(T\right) -  \Psi(0)\right \|_F^2 \bigg| \, E\right]\times \mathbb{P}(E) \nonumber\\
    &\stackrel{\textrm{Eq. }\eqref{eq_e6}}{=} \mathbb{E}\left[\left\|Z_\eta\left(T\right) -  Z_\eta(0)\right \|_F^2 \bigg| \, E\right]\times \mathbb{P}(E) \nonumber \\
    &\leq \mathbb{E}\left[\left\|Z_\eta\left(T\right) -  Z_\eta(0)\right \|_F^2\right] \nonumber \\
    &\stackrel{\textrm{Lemma }\ref{Lemma_integral}}{=}  2\int_0^{T}   \mathbb{E}\left[ \sum_{i=1}^{d}  \sum_{j \neq i}  \frac{(\lambda_i - \lambda_j)^2}{\max(\Delta^2_{ij}(t), \eta_{ij}^2)}\right] \mathrm{d}t
     +    T \int_0^{T}\mathbb{E}\left[\sum_{i=1}^{d}\left(\sum_{j\neq i} \frac{\lambda_i - \lambda_j}{\max(\Delta^2_{ij}(t), \eta_{ij}^2)}\right)^2  \right]\mathrm{d}t \nonumber \\
     &\leq 4\int_0^{T}   \mathbb{E}\left[ \sum_{i=1}^{d}  \sum_{j = i+1}^d  \frac{(\lambda_i - \lambda_j)^2}{\max(\Delta^2_{ij}(t), \eta_{ij}^2)}\right] \mathrm{d}t
     +    2T \int_0^{T}\mathbb{E}\left[\sum_{i=1}^{d}\left(\sum_{j=i+1}^d \frac{|\lambda_i - \lambda_j|}{\max(\Delta^2_{ij}(t), \eta_{ij}^2)}\right)^2  \right]\mathrm{d}t \nonumber \\
          &=4\int_0^{T}   \mathbb{E}\left[ \sum_{i=1}^{k}  \sum_{j = i+1}^d  \frac{(\lambda_i - \lambda_j)^2}{\max(\Delta^2_{ij}(t), \eta_{ij}^2)}\right] \mathrm{d}t
     +    2T \int_0^{T}\mathbb{E}\left[\sum_{i=1}^{k}\left(\sum_{j=i+1}^d \frac{|\lambda_i - \lambda_j|}{\max(\Delta^2_{ij}(t), \eta_{ij}^2)}\right)^2  \right]\mathrm{d}t \nonumber\\
     &\leq 64\int_0^{T}   \mathbb{E}\left[ \sum_{i=1}^{k}  \sum_{j = i+1}^d  \frac{(\lambda_i - \lambda_j)^2}{(\sigma_i-\max(\sigma_j, \sigma_{k+1}))^2} \right] \mathrm{d}t \nonumber \\
    &  +    32T \int_0^{T}\mathbb{E}\left[\sum_{i=1}^{k}\left(\sum_{j = i+1}^d \frac{|\lambda_i - \lambda_j|}{(\sigma_i-\max(\sigma_j, \sigma_{k+1}))^2}\right)^2  \right]\mathrm{d}t \nonumber \\
          &= 64T   \mathbb{E}\left[ \sum_{i=1}^{k}  \sum_{j = i+1}^d  \frac{(\lambda_i - \lambda_j)^2}{(\sigma_i-\max(\sigma_j, \sigma_{k+1}))^2} \right]
     +    32T^2 \mathbb{E}\left[\sum_{i=1}^{k}\left(\sum_{j=i+1}^d \frac{|\lambda_i - \lambda_j|}{(\sigma_i-\max(\sigma_j, \sigma_{k+1}))^2}\right)^2  \right] \nonumber \\
               &= 64T   \sum_{i=1}^{k}  \sum_{j = i+1}^d  \frac{(\lambda_i - \lambda_j)^2}{(\sigma_i-\max(\sigma_j, \sigma_{k+1}))^2}
     +    32T^2 \sum_{i=1}^{k}\left(\sum_{j=i+1}^d \frac{|\lambda_i - \lambda_j|}{(\sigma_i-\max(\sigma_j, \sigma_{k+1}))^2}\right)^2,
\end{align}
where the first equality holds since $\Psi(t)|E = Z_\eta(t)|E$ by \eqref{eq_e6}, and the second equality holds by Lemma  \ref{Lemma_integral}.
 The second inequality holds since, $ \frac{(\lambda_i - \lambda_j)^2}{\max(\Delta^2_{ij}(t), \eta_{ij}^2)} =  \frac{(\lambda_j - \lambda_i)^2}{\max(\Delta^2_{ji}(t), \eta_{ji}^2)}$  and $\frac{|\lambda_i - \lambda_j|}{\max(\Delta^2_{ij}(t), \eta_{ij}^2)} = \frac{|\lambda_j - \lambda_i|}{\max(\Delta^2_{ji}(t), \eta_{ji}^2)}$ for all $i,j \in [d]$.
 The third equality holds since $\lambda_{i} = 0$ for all $i \geq k+1$.
 The third inequality holds since $\eta_{ij} = \frac{1}{4}(\sigma_i - \max(\sigma_{j}, \sigma_{k+1}))$ for all $0<i<j \leq d$.

To bound the second term in \eqref{eq_e1}, we have by Lemma \ref{lemma_gap_concentration} that
\begin{align}\label{eq_e3}
    \mathbb{P}(\hat{E}^c) &= \mathbb{P}\left(\left(\bigcap_{i\in [k]} \left\{\inf_{t \in [0,T]} \gamma_i(t) - \gamma_{i+1}(t) \geq  \frac{1}{4}(\sigma_i - \sigma_{i+1})) \right\}\right)^c \, \, \right) \nonumber \\
    &= \mathbb{P}\left(\bigcup_{i\in [k]} \left\{\inf_{t \in [0,T]} \gamma_i(t) - \gamma_{i+1}(t) <  \frac{1}{4}(\sigma_i - \sigma_{i+1})) \right\}\right) \nonumber\\
        &\stackrel{\textrm{Assumption} \ref{assumption_gaps}}{\leq} \mathbb{P}\left(\bigcup_{i\in [k]} \left\{\inf_{t \in [0,T]} \gamma_i(t) - \gamma_{i+1}(t) <  \frac{1}{2}(\sigma_i - \sigma_{i+1}) - 3\log^{\frac{1}{2}}(\lambda_1 k)) \right\}\right) \nonumber\\
        &\stackrel{\textrm{Lemma}  \ref{lemma_gap_concentration}}{\leq} \min(e^{-\log(\lambda_1^2 k))},1) \nonumber\\
        &=\min\left(\frac{1}{\lambda_1^2 k }, 1\right),
\end{align}
where the first inequality holds by Assumption \ref{assumption_gaps},
and the second inequality holds by Lemma \ref{lemma_gap_concentration}.

Therefore, plugging \eqref{eq_e2} and \eqref{eq_e3} into \eqref{eq_e1}, we have
\begin{align}\label{eq_e4}
    &\mathbb{E}\left[\| \hat{V} \Lambda \hat{V}^\top -  V \Lambda V^\top \|_F^2\right] \nonumber\\
    &\stackrel{\textrm{Eq. }\eqref{eq_e1}, \eqref{eq_e2}, \eqref{eq_e3}}{\leq} 64T   \sum_{i=1}^{k}  \sum_{i = j+1}^d  \frac{(\lambda_i - \lambda_j)^2}{(\sigma_i-\max(\sigma_j, \sigma_{k+1}))^2}
     +    32T^2 \sum_{i=1}^{k}\left(\sum_{i=j+1}^d \frac{|\lambda_i - \lambda_j|}{(\sigma_i-\max(\sigma_j, \sigma_{k+1}))^2}\right)^2 \nonumber\\ & + \min(32, \, \, 32\lambda_1^2 k) \nonumber\\
     &\leq       O\left(\sum_{i=1}^{k}  \sum_{j = i+1}^d  \frac{(\lambda_i - \lambda_j)^2}{(\sigma_i-\max(\sigma_j, \sigma_{k+1}))^2}
     +    \left(\sum_{j=i+1}^d \frac{|\lambda_i - \lambda_j|}{(\sigma_i-\max(\sigma_j, \sigma_{k+1}))^2}\right)^2\right) \frac{\log(\frac{1}{\delta})}{\eps^2},
\end{align}
where the last inequality holds since $T = \frac{\sqrt{2\log(\frac{1.25}{\delta})}}{\eps}$.

Finally, we have by the Cauchy-Schwarz inequality and Assumption \ref{assumption_gaps} that 
\begin{align}\label{eq_e5}
    &\left(\sum_{j=i+1}^d \frac{|\lambda_i - \lambda_j|}{(\sigma_i-\max(\sigma_j, \sigma_{k+1}))^2}\right)^2  =  \left(\sum_{j=i+1}^d \frac{1}{|\sigma_i-\max(\sigma_j, \sigma_{k+1})|} \times  \frac{|\lambda_i - \lambda_j|}{|\sigma_i-\max(\sigma_j, \sigma_{k+1})|}\right)^2 \nonumber\\
    & \stackrel{\textrm{Cauchy-Schwarz inequality}}{\leq} \left(\sum_{j=i+1}^d \frac{1}{(\sigma_i-\max(\sigma_j, \sigma_{k+1}))^2}\right) \times \left(\sum_{j=i+1}^d \frac{(\lambda_i - \lambda_j)^2}{(\sigma_i-\max(\sigma_j, \sigma_{k+1}))^2}\right)\nonumber\\
    &\stackrel{\textrm{Assumption} \ref{assumption_gaps}}{\leq} 
   \left(\sum_{j=i+1}^d \frac{1}{(\sqrt{d})^2}\right) \times \left(\sum_{j=i+1}^d \frac{(\lambda_i - \lambda_j)^2}{(\sigma_i-\max(\sigma_j, \sigma_{k+1}))^2}\right)\nonumber\\
   &\leq \sum_{j=i+1}^d \frac{(\lambda_i - \lambda_j)^2}{(\sigma_i-\max(\sigma_j, \sigma_{k+1}))^2}.
\end{align}
In other words, \eqref{eq_e5} says that the first term inside the outer summation on the r.h.s. of \eqref{eq_e4} is at least as large as the second term.
Therefore, plugging in \eqref{eq_e5} into \eqref{eq_e4}, we have that
\begin{align*}
    \mathbb{E}\left[\| \hat{V} \Lambda \hat{V}^\top -  V \Lambda V^\top \|_F^2\right] \leq      O\left(\sum_{i=1}^{k}  \sum_{j = i+1}^d  \frac{(\lambda_i - \lambda_j)^2}{(\sigma_i-\max(\sigma_j, \sigma_{k+1}))^2}\right) \frac{\log(\frac{1}{\delta})}{\eps^2}.
    \end{align*}
\end{proof}

\section{Proof of Corollary \ref{cor_rank_k_covariance2} -- Covariance Matrix  Approximation}\label{sec:covariance}

\begin{proof}[Proof of Corollary \ref{cor_rank_k_covariance2}]

 To prove Corollary \ref{cor_rank_k_covariance2},
 we must bound the utility $\mathbb{E}[\| \hat{V} \hat{\Sigma}_k \hat{V}^\top -  V \Sigma_k V^\top \|_F]$ of the post-processing of the Gaussian mechanism for the rank-$k$ covariance matrix estimation problem.
Towards this end, we first plug in $\lambda_i = \sigma_i$ for $i \leq k$ and $\lambda_i=0$ for $i>k$ into Theorem \ref{thm_large_gap} to obtain a bound for $\mathbb{E}[\| \hat{V} \Sigma_k \hat{V}^\top -  V \Sigma_k V^\top \|_F]$ (Inequality \eqref{eq_f6}).
 We then apply Weyl's inequality (Lemma \ref{lemma_weyl}) together with a concentration bound for $\|B(T)\|_2$  (Lemma \ref{lemma_spectral_martingale}) to bound the perturbation to the eigenvalues of $M$ when the Gaussian noise matrix $B(T)$ is added to $M$ by the Gaussian mechanism.
 This implies a bound on  $\mathbb{E}[\| \hat{V} \Sigma_k \hat{V}^\top -  \hat{V} \hat{\Sigma}_k \hat{V}^\top \|_F]$ (Inequality \eqref{eq_f8}).
 Combining these two bounds \eqref{eq_f6} and \eqref{eq_f8}, implies a bound on the utility $\mathbb{E}[\| \hat{V} \hat{\Sigma}_k \hat{V}^\top -  V \Sigma_k V^\top \|_F]$  for the post-processing of the Gaussian mechanism (Inequality \eqref{eq_f11}).

\paragraph{Bounding the quantity $\mathbb{E}[\| \hat{V} \Sigma_k \hat{V}^\top -  V \Sigma_k V^\top \|_F]$.}

Let $\lambda_i = \sigma_i$ for $i \leq k$ and $\lambda_i=0$ for $i>k$ , and let $\Lambda:= \mathrm{diag}(\lambda_1,\cdots, \lambda_d)$.
Then by Assumption \ref{assumption_gaps} we have that $\sigma_i - \sigma_{i+1} \geq  \frac{8\sqrt{\log(\frac{1.25}{\delta})}}{\eps} \sqrt{d} + c\log^{\frac{1}{2}}(\sigma_1 k)$.
By Theorem \ref{thm_large_gap}, we have

\begin{equation}\label{eq_f4}
    \mathbb{E}\left[\| \hat{V} \Lambda \hat{V}^\top -  V \Lambda V^\top \|_F^2\right] 
\leq       O\left(\sum_{i=1}^{k}  \sum_{j = i+1}^d  \frac{(\lambda_i - \lambda_j)^2}{(\sigma_i-\max(\sigma_j, \sigma_{k+1}))^2}     \right) \frac{\log(\frac{1}{\delta})}{\eps^2}.
\end{equation}
First, we note that $\lambda_i - \lambda_j \leq \sigma_i - \sigma_j$ for all $i \leq j \leq k$.
Then for all $i<j\leq k$, we have
\begin{align*}
\frac{\lambda_i - \lambda_j}{\sigma_i - \max(\sigma_j, \sigma_{k+1})}  \leq
\frac{\sigma_i - \sigma_k}{\sigma_i - \max(\sigma_j, \sigma_{k+1})} \leq 1.
\end{align*}
And for all $i\leq k <j\leq d$ we have
\begin{align*}
&\frac{\lambda_i - \lambda_j}{\sigma_i - \max(\sigma_j, \sigma_{k+1})}  =
\frac{\sigma_i}{\sigma_i -  \sigma_{k+1}} = \frac{\sigma_i-\sigma_k}{\sigma_i - \sigma_{k+1}} + \frac{\sigma_k}{\sigma_i - \sigma_{k+1}}
\leq 1 + \frac{\sigma_k}{\sigma_k - \sigma_{k+1}}.
\end{align*}
Thus, the summation term in \eqref{eq_f4} simplifies to 
\begin{equation} \label{eq_f1}
\sum_{i=1}^{k}  \sum_{j = i+1}^d  \frac{(\lambda_i - \lambda_j)^2}{(\sigma_i-\max(\sigma_j, \sigma_{k+1}))^2}
\leq kd\left( 1 + \frac{\sigma_k}{\sigma_k - \sigma_{k+1}}\right)^2 \leq 4kd\left(\frac{\sigma_k}{\sigma_k - \sigma_{k+1}}\right)^2.
\end{equation}
 Therefore, plugging \eqref{eq_f1} 
 into \eqref{eq_f4}, we have
 \begin{align}\label{eq_f6}
      \mathbb{E}[\| \hat{V} \Sigma_k \hat{V}^\top -  V \Sigma_k V^\top \|_F] &\leq \sqrt{\mathbb{E}[\| \hat{V} \Sigma_k \hat{V}^\top -  V \Sigma_k V^\top \|_F^2]} \nonumber\\
      & \stackrel{\textrm{Eq.} \eqref{eq_f1}, \eqref{eq_f4} }{\leq} O\left(\sqrt{kd} \times \frac{\sigma_k}{\sigma_k - \sigma_{k+1}}\right) \frac{\log^{\frac{1}{2}}(\frac{1}{\delta})}{\eps},
 \end{align}
 where the first inequality holds by Jensen's inequality.

\paragraph{Bounding the perturbation to the eigenvalues.}

     By Weyl's inequality (Lemma \ref{lemma_weyl}), we have that for every $i \in [d]$
\begin{align}\label{eq_f9}
   \mathbb{E}[(\hat{\sigma}_i-\sigma_i)^2] &\leq \mathbb{E}[(\|B(T)\|_2)^2] \nonumber \\
   &\leq 4\mathbb{E}[( (T\sqrt{d})^2] + 4\mathbb{E}\left[\left(\|B(T)\|_2 - T\sqrt{d} \right)^2\right] \nonumber\\
   &\leq   4T^2d + 4\int_0^\infty \mathbb{P}\left(\left(\|B(T)\|_2 - T\sqrt{d} \right)^2 >  \alpha)\right) \mathrm{d} \alpha \nonumber \\
      &=   4T^2d + 4\int_0^\infty \mathbb{P}\left(\|B(T)\|_2 - T\sqrt{d}  >  \sqrt{\alpha})\right) \mathrm{d} \alpha \nonumber\\
      &\stackrel{\textrm{Lemma} \eqref{lemma_spectral_martingale} }{\leq} 4T^2d + 8\sqrt{\pi} \int_0^\infty e^{-\frac{1}{8}\frac{\alpha}{T^2}}\mathrm{d} \alpha \nonumber\\
      &= 4T^2d + 64\sqrt{\pi} T^2 e^{-\frac{1}{8}\frac{\alpha}{T^2}} \bigg|_{\alpha=0}^\infty \nonumber\\
      &=  4T^2d + 64\sqrt{\pi} T^2\nonumber\\
      &\leq 64\sqrt{\pi}  \frac{\log(\frac{1}{\delta})}{\eps^2} d.
\end{align}
 The first inequality holds by Weyl's inequality (Lemma \ref{lemma_weyl}), and the fourth inequality holds by Lemma \ref{lemma_spectral_martingale}.
Therefore, \eqref{eq_f9} implies that,
\begin{align}\label{eq_f8}
 \mathbb{E}[\| \hat{V} \Sigma_k \hat{V}^\top -  \hat{V} \hat{\Sigma}_k \hat{V}^\top \|_F]  &=  \mathbb{E}[\|  \Sigma_k-  \hat{\Sigma}_k  \|_F] \nonumber \\
 &\leq  \sqrt{\mathbb{E}[\|  \Sigma_k-  \hat{\Sigma}_k  \|_F^2]} \nonumber \\
 &\stackrel{\textrm{Eq.} \eqref{eq_f9} }{\leq}  O\left(\sqrt{kd}\frac{\log^{\frac{1}{2}}(\frac{1}{\delta})}{\eps}\right),
\end{align}
where the first inequality holds by Jensen's inequality, and the second inequality holds by \eqref{eq_f9}.
Thus, plugging \eqref{eq_f8} into \eqref{eq_f6}, we have that
\begin{align}\label{eq_f11}
 \mathbb{E}[\| \hat{V} \hat{\Sigma}_k \hat{V}^\top -  V \Sigma_k V^\top \|_F]  &\leq     \mathbb{E}[ \| \hat{V} \Sigma_k \hat{V}^\top -  V \Sigma_k V^\top \|_F] +   \mathbb{E}[\| \hat{V} \Sigma_k \hat{V}^\top -  \hat{V} \hat{\Sigma}_k \hat{V}^\top \|_F] \nonumber\\
 &\stackrel{\textrm{Eq.} \eqref{eq_f6}, \eqref{eq_f8} }{\leq} O\left(\sqrt{kd} \times\frac{\sigma_k}{\sigma_{k}-\sigma_{k+1}
    }\right) \frac{\log^{\frac{1}{2}}(\frac{1}{\delta})}{\eps}.
\end{align}

\paragraph{Privacy:}

{\em Privacy of perturbed covariance matrix $\hat{M}$:} 
Recall that two matrices $M=A^\top A$ and $M' = A'^\top A'$ are said to be {\em neighbors} if they arise from $A, A' \in \mathbb{R}^{d\times n}$ which differ by at most one row, and that each row of the datasets $A, A'$ has norm at most $1$.
In other words, we have that $M-M' = xx^\top$ for some $x\in \mathbb{R}^d$ such that $\|x\| \leq 1$.
Define the sensitivity $S:= \max_{M,M' \mathrm{neighbors}} \|M - M'\|_{\ell_2}$, where $\|X\|_{\ell_2}$ denotes the Euclidean norm of the upper triangular entries of $X$ (including the diagonal entries).
Then we have
\begin{equation*}
    S = \max_{M,M' \mathrm{neighbors}} \|M - M'\|_{\ell_2} \leq  \|M - M'\|_F \leq \max_{\|x\|\leq 1} \|xx^\top \|_F = 1.
\end{equation*}
Then by standard results for the Gaussian Mechanism (e.g., by Theorem A.1 of \cite{dwork2014algorithmic}), we have that the Gaussian mechanism which outputs the upper triangular matrix $\hat{M}_{\mathrm{upper}}$ with the same upper triangular entries as $\hat{M} = M+  \frac{S\sqrt{2\log(\frac{1.25}{\delta})}}{\eps}(G +G^\top) = M+ \frac{\sqrt{2\log(\frac{1.25}{\delta})}}{\eps}(G +G^\top)$, where $G$ has i.i.d. $N(0,1)$ entries, is $(\varepsilon, \delta)$-differentially private.
However, since the perturbed matrix $\hat{M}$ is symmetric, it can be obtained from its upper triangular entries $\hat{M}_{\mathrm{upper}}$ without accessing the original matrix $M$.
Thus, the mechanism which outputs $\hat{M} = M+ \frac{\sqrt{2\log(\frac{1.25}{\delta})}}{\eps}(G +G^\top)$ must also be $(\varepsilon, \delta)$-differentially private.

{\em Privacy of rank-$k$ approximation $ \hat{V} \hat{\Sigma}_k \hat{V}^\top$:}    The mechanism which outputs the rank-$k$ approximation $\hat{V} \hat{\Sigma}_k \hat{V}^\top$ is $(\varepsilon, \delta)$-differentially private, since  $\hat{V} \hat{\Sigma}_k \hat{V}^\top$ is obtained by post-processing the perturbed matrix $\hat{M}$ without any additional access to the matrix $M$.

Namely, to obtain  $\hat{V} \hat{\Sigma}_k \hat{V}^\top$, we first (i) compute the spectral decomposition $\hat{M}= \hat{V} \hat{\Sigma} \hat{V}^\top$.
Next, (ii) we take the top-$k$ eigenvalues $ \hat{\sigma}_1, \ldots, \hat{\sigma}_k$ of  $\hat{M}$, and set $\hat{\Sigma}_k = \mathrm{diag}(\hat{\sigma}_1, \ldots, \hat{\sigma}_k, 0, \ldots, 0)$.
Finally, we output $\hat{M}_k := \hat{V}\hat{\Sigma}_k \hat{V}^\top$.
Both of these steps (i) and (ii) are post-processing of $\hat{M}$ and do not require additional access to the matrix $M$.
In particular, the eigenvalues  $ \hat{\sigma}_1, \ldots, \hat{\sigma}_k$ of  $\hat{M}_k$ are obtained from the perturbed matrix $\hat{M}$, and thus do not compromise privacy.
Therefore, the mechanism which outputs the rank-$k$ approximation $\hat{M}_k :=\hat{V} \hat{\Sigma} \hat{V}^\top$ must also be $(\varepsilon, \delta)$-differentially private.

\end{proof}

\section{Proof of Corollary \ref{cor_subspace_recovery} -- Subspace Recovery}\label{sec:cor_rank_subspace}

\begin{proof}[Proof of Corollary \ref{cor_subspace_recovery}]

To prove Corollary  \ref{cor_subspace_recovery}, we  plug in $\lambda_1 = \cdots = \lambda_k =1$ and $\lambda_{k+1} = \cdots = \lambda_d = 0$ to Theorem \ref{thm_large_gap}.
Corollary \ref{cor_subspace_recovery} considers two cases.
In the first case (referred to here as Case I), the   eigenvalues $\sigma$ of the input matrix $M$ satisfies Assumption  \ref{assumption_gaps}
In the second case (referred to here as Case II) the eigenvalues $\sigma$ of $M$ also satisfy both Assumption \ref{assumption_gaps}  as well as the lower bound $\sigma_i - \sigma_{i+1} \geq \Omega(\sigma_k - \sigma_{k+1})$ for all $i \leq k$.
We derive a bound on the utility $\mathbb{E}[\| \hat{V}_k \hat{V}_k^\top -  V_k  V_k^\top \|_F]$ in  each case separately.

\paragraph{Case I: $M$ satisfies Assumption  \ref{assumption_gaps}.}

Plugging in $\lambda_1 = \cdots = \lambda_k =1$ and $\lambda_{k+1} = \cdots = \lambda_d = 0$ to Theorem \ref{thm_large_gap} we get that, since $M$ satisfies Assumption \ref{assumption_gaps} for ($M,k,2,\eps,\delta$), 
\begin{align}\label{eq_h1}
 \mathbb{E}\left[\| \hat{V}_k \hat{V}_k^\top -  V_k V_k^\top \|_F^2\right] &= \mathbb{E}\left[\| \hat{V} \Lambda \hat{V}^\top -  V \Lambda V^\top \|_F^2\right]\nonumber\\
&\stackrel{\textrm{Theorem }\ref{thm_large_gap}}{\leq}         O\left(\sum_{i=1}^{k}  \sum_{j = i+1}^d  \frac{(\lambda_i - \lambda_j)^2}{(\sigma_i-\max(\sigma_j, \sigma_{k+1}))^2}
     \right) \frac{\log(\frac{1}{\delta})}{\eps^2} \nonumber\\
     &=  O\left(\sum_{i=1}^{k}  \sum_{j = k+1}^d  \frac{1}{(\sigma_i-\max(\sigma_j, \sigma_{k+1}))^2}     \right) \frac{\log(\frac{1}{\delta})}{\eps^2}\nonumber\\
  &\leq   O\left(\sum_{i=1}^{k}  \sum_{j = k+1}^d  \frac{1}{(\sigma_k- \sigma_{k+1})^2}     \right) \frac{\log(\frac{1}{\delta})}{\eps^2}\nonumber\\
 &= O\left(\frac{k d}{(\sigma_k- \sigma_{k+1})^2}      \frac{\log(\frac{1}{\delta})}{\eps^2}\right),
\end{align}
where the first inequality holds by Theorem \ref{thm_large_gap}, and the second equality holds since $\lambda_1 = \cdots = \lambda_k =1$ and $\lambda_{k+1} = \cdots = \lambda_d = 0$.

Thus, applying Jensen's Inequality to Inequality \eqref{eq_h1}, we have that
\begin{equation*}
\mathbb{E}[\| \hat{V}_k \hat{V}_k^\top -  V_k  V_k^\top \|_F] \leq  O\left(\frac{\sqrt{kd}}{(\sigma_k- \sigma_{k+1})}      \frac{\log^{\frac{1}{2}}(\frac{1}{\delta})}{\eps}\right).
\end{equation*}


\paragraph{Case II: $M$ satisfies Assumption \ref{assumption_gaps} and $\sigma_i - \sigma_{i+1} \geq \Omega(\sigma_k - \sigma_{k+1})$ for all $i \leq k$.} 

Plugging in $\lambda_1 = \cdots = \lambda_k =1$ and $\lambda_{k+1} = \cdots = \lambda_d = 0$ to Theorem \ref{thm_large_gap} we get that, since $M$ satisfies Assumption \ref{assumption_gaps} for ($M,k,2,\eps,\delta$), 
\begin{align}\label{eq_h2}
 \mathbb{E}\left[\| \hat{V}_k \hat{V}_k^\top -  V_k V_k^\top \|_F^2\right] &= \mathbb{E}\left[\| \hat{V} \Lambda \hat{V}^\top -  V \Lambda V^\top \|_F^2\right]\nonumber\\
&\stackrel{\textrm{Theorem }\ref{thm_large_gap}}{\leq}         O\left(\sum_{i=1}^{k}  \sum_{j = i+1}^d  \frac{(\lambda_i - \lambda_j)^2}{(\sigma_i-\max(\sigma_j, \sigma_{k+1}))^2}
     \right) \frac{\log(\frac{1}{\delta})}{\eps^2} \nonumber\\
     &=  O\left(\sum_{i=1}^{k}  \sum_{j = k+1}^d  \frac{1}{(\sigma_i-\max(\sigma_j, \sigma_{k+1}))^2}     \right) \frac{\log(\frac{1}{\delta})}{\eps^2}\nonumber\\
  &\leq   O\left(\sum_{i=1}^{k}  \sum_{j = k+1}^d  \frac{1}{(i-k-1)^2(\sigma_k- \sigma_{k+1})^2}     \right) \frac{\log(\frac{1}{\delta})}{\eps^2}\nonumber\\
  &\leq  O\left(\sum_{i=1}^{k}   \frac{d}{(i-k-1)^2(\sigma_k- \sigma_{k+1})^2}     \right) \frac{\log(\frac{1}{\delta})}{\eps^2}\nonumber\\
 &\leq O\left(\frac{d}{(\sigma_k- \sigma_{k+1})^2}      \frac{\log(\frac{1}{\delta})}{\eps^2} \sum_{i=1}^{k} \frac{1}{i^2} \right)\nonumber\\
 &\leq O\left(\frac{d}{(\sigma_k- \sigma_{k+1})^2}      \frac{\log(\frac{1}{\delta})}{\eps^2}\right),
\end{align}
where the first inequality holds by Theorem \ref{thm_large_gap}, the second equality holds since $\lambda_1 = \cdots = \lambda_k =1$ and $\lambda_{k+1} = \cdots = \lambda_d = 0$,
the second inequality holds since $\sigma_i - \sigma_{i+1} \geq \Omega(\sigma_k - \sigma_{k+1})$ for all $i \leq k$,
and the last inequality holds since $\sum_{i=1}^{k} \frac{1}{i^2} \leq \sum_{i=1}^{\infty} \frac{1}{i^2}  = O(1)$.

Thus, applying Jensen's Inequality to Inequality \eqref{eq_h2}, we have that
\begin{equation*}
\mathbb{E}[\| \hat{V}_k \hat{V}_k^\top -  V_k  V_k^\top \|_F] \leq  O\left(\frac{\sqrt{d}}{(\sigma_k- \sigma_{k+1})}      \frac{\log^{\frac{1}{2}}(\frac{1}{\delta})}{\eps}\right).
\end{equation*}
\end{proof}

\section{Conclusion and Future Work}\label{sec_conclusion}

We present a new analysis of the Gaussian mechanism for a large class of symmetric matrix approximation problems, by viewing this mechanism as a Dyson Brownian motion initialized at the input matrix $M$. 
This viewpoint allows us to leverage the stochastic differential equations which govern the evolution of the eigenvalues and eigenvectors of Dyson Brownian motion to obtain new utility bounds for the Gaussian mechanism.
To obtain our utility bounds, we show that the gaps $\Delta_{ij}(t)$ in the eigenvalues of the Dyson Brownian motion stay at least as large as the initial gap sizes (up to a constant factor), as long as the initial gaps in the top $k+1$ eigenvalues of the input matrix are $\geq \Omega(\sqrt{d})$ (Assumption \ref{assumption_gaps}).

While we observe that our assumption on the top-$k+1$ eigenvalue gaps holds on multiple real-world datasets, in practice one may need to apply differentially private matrix approximation on any matrix where the ``effective rank'' of the matrix is $k$— that is, on any matrix where the $k$’th eigenvalue gap $\sigma_k - \sigma_{k+1}$ is large— including on matrices where the gaps in the other eigenvalues may not be large and may even be zero.
 Unfortunately, for matrices with initial gaps in the top-$k$ eigenvalues smaller than $O(\sqrt{d})$, the gaps $\Delta_{ij}(t)$ in the eigenvalues of the Dyson Brownian motion become small enough that the expectation of the (inverse) second-moment term $\frac{1}{\Delta^2_{ij}(t)}$ appearing in the It\^o integral (Lemma \ref{Lemma_integral}) in our analysis may be very large or even infinite. 
 Thus, the main question that remains open is whether one can obtain similar bounds on the utility for differentially private matrix approximation for any initial matrix $M$ where the $k$’th gap $\sigma_k - \sigma_{k+1}$ is large, without any assumption on the gaps between the other eigenvalues of $M$.

Finally, this paper analyzes a mechanism in differential privacy, which has many implications for preserving sensitive information of individuals.  Thus, we believe our work will have positive societal impacts and do not foresee any negative impacts on society.

\section*{Acknowledgements}
This research was  supported in part by  NSF CCF-2104528 and CCF-2112665 awards.

\bibliography{DP}
\bibliographystyle{plain}

\newpage

\appendix

\section{Eigenvalue Gaps of Wishart Matrices} \label{appendix_wishart}

In this section, we provide the results of numerical simulations where we compute the minimum eigenvalue gap, $\min_{i\in [d-1]} \sigma_i - \sigma_{i+1}$ of Wishart random matrices $W = A^\top A$, where $A$ is an $m \times d$ matrix with i.i.d. $N(0,1)$ Gaussian entries, for various values of $d$ and $m$.
The goal of these simulations is to evaluate for what values of $m,d$ the eigenvalue gaps in a Wishart random matrix $W$ satisfy  Assumption \ref{assumption_gaps}($W,k,\lambda_1, \eps, \delta$), which requires that the size of the top-$k+1$ eigenvalues of the matrix $W$ satisfy $\sigma_i - \sigma_{i+1} \geq  \frac{8\sqrt{\log(\frac{1.25}{\delta})}}{\eps} \sqrt{d} + c\log^{\frac{1}{2}}(\lambda_1 k)$ for every $i \in [k]$.

We observe that, when $d$ is held constant and $m$ is increased, the minimum eigenvalue gap size $\min_{i\in [d-1]} \sigma_i - \sigma_{i+1}$ grows roughly proportional to $\sqrt{m}$ (Figure \ref{fig_Wishart_gaps}).
Moreover, if we fix $m$ to be $m=d^3$ and increase $d$, we observe that the minimum eigenvalue gap is at least as large as $\sqrt{d}$ with high probability and grows roughly proportional to $\sqrt{d}$ (Figure \ref{fig_Wishart_gaps_2}).
Thus, we expect Wishart random matrices to satisfy Assumption \ref{assumption_gaps}($W,k,\lambda_1, \eps, \delta$) with high probability as long as $m\geq \Omega(\log(\frac{1.25}{\delta}){\eps^2} \times d^3)$,   for any $k \leq d$ and $\eps, \delta>0$ where, e.g., $\lambda_1 \leq O(d)$.
In particular, we note that in the application of our main result to subspace recovery we set $\lambda_1 = 1$ (Corollary \ref{cor_subspace_recovery}), and in the application of our main result  to rank-$k$ covariance matrix approximation, we set $\lambda_1 = \sigma_1$ (Corollary \ref{cor_rank_k_covariance2}) and thus have that  $\lambda_1 \leq O(\sqrt{d})$ with high probability by the concentration bounds in Lemma \ref{lemma_concentration}.
All simulations were run on Matlab.

Finally, we note that there is a long line of work in random matrix theory which provides results about the distributions of the eigenvalues of random matrix ensembles (see e.g. \cite{pastur1967distribution, dyson1962brownian, arnold1971wigner, tracy1994level, tao2011random,  erdHos2012universality, huang2015bulk}), including for Wishart random matrices  \cite{silverstein1985smallest, vivo2007large, tao2012random}. 
For instance, results are given in \cite{tao2012random, wang2012random} for the local eigenvalue statistics of families of Wishart matrices where $\frac{d}{m} \rightarrow y$ as $d \rightarrow \infty$ for any fixed constant $0< y \leq 1$. 
In particular, these works include results that give high-probability bounds on the minimum eigenvalue gap of this class of Wishart matrices (see Theorems 16 and 18 in \cite{tao2012random}, and also Theorem 1.7 of  \cite{wang2012random} who extend results of \cite{tao2012random} to the edge of the spectrum).
Results for eigenvalue gap probabilities of other random matrix ensembles are given, e.g., in \cite{tao2011random, figalli2016universality, feng2019small, bourgade2021extreme}.
However, to the best of our knowledge, we are not aware of any bounds for the minimum eigenvalue gap of families of Wishart matrices where $\frac{m}{d}$ does not converge to a constant as $d \rightarrow \infty$. 
While it may be possible to extend the analysis given in \cite{tao2012random} to obtain bounds for the minimum eigenvalue gap for families of Wishart matrices where $\frac{m}{d}$ is polynomial in $d$, this analysis would be beyond the scope of our paper.

\begin{figure}[h]
    \centering
    \includegraphics[width=0.5\linewidth]{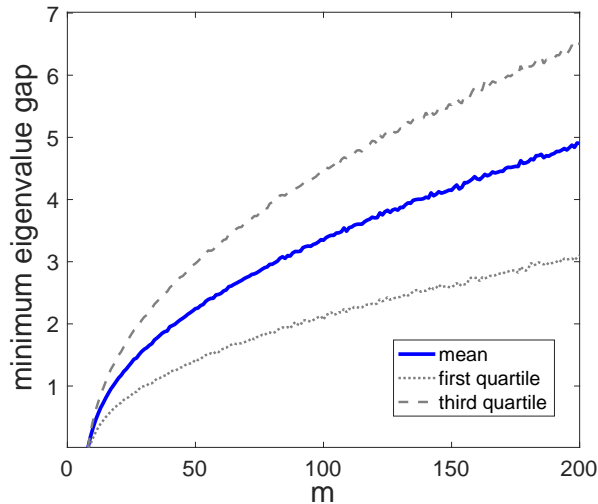}
    \caption{The minimum eigenvalue gap, $\min_{i\in [d-1]} \sigma_i - \sigma_{i+1}$ of Wishart random matrices $W = A^\top A$, where $A$ is an $m \times d$ matrix with i.i.d. $N(0,1)$ Gaussian entries, for various values of $m$ and $d=10$, averaged over 10,000 trials for each $m$ (blue curve), with first quartile (dashed grey curve) and third quartile (dotted grey curve)  also displayed.
    We observe that the minimum eigenvalue gap size grows roughly proportional to $\sqrt{m}$.}
    \label{fig_Wishart_gaps}
\end{figure}

\begin{figure}[h]
    \centering
    \includegraphics[width=0.5\linewidth]{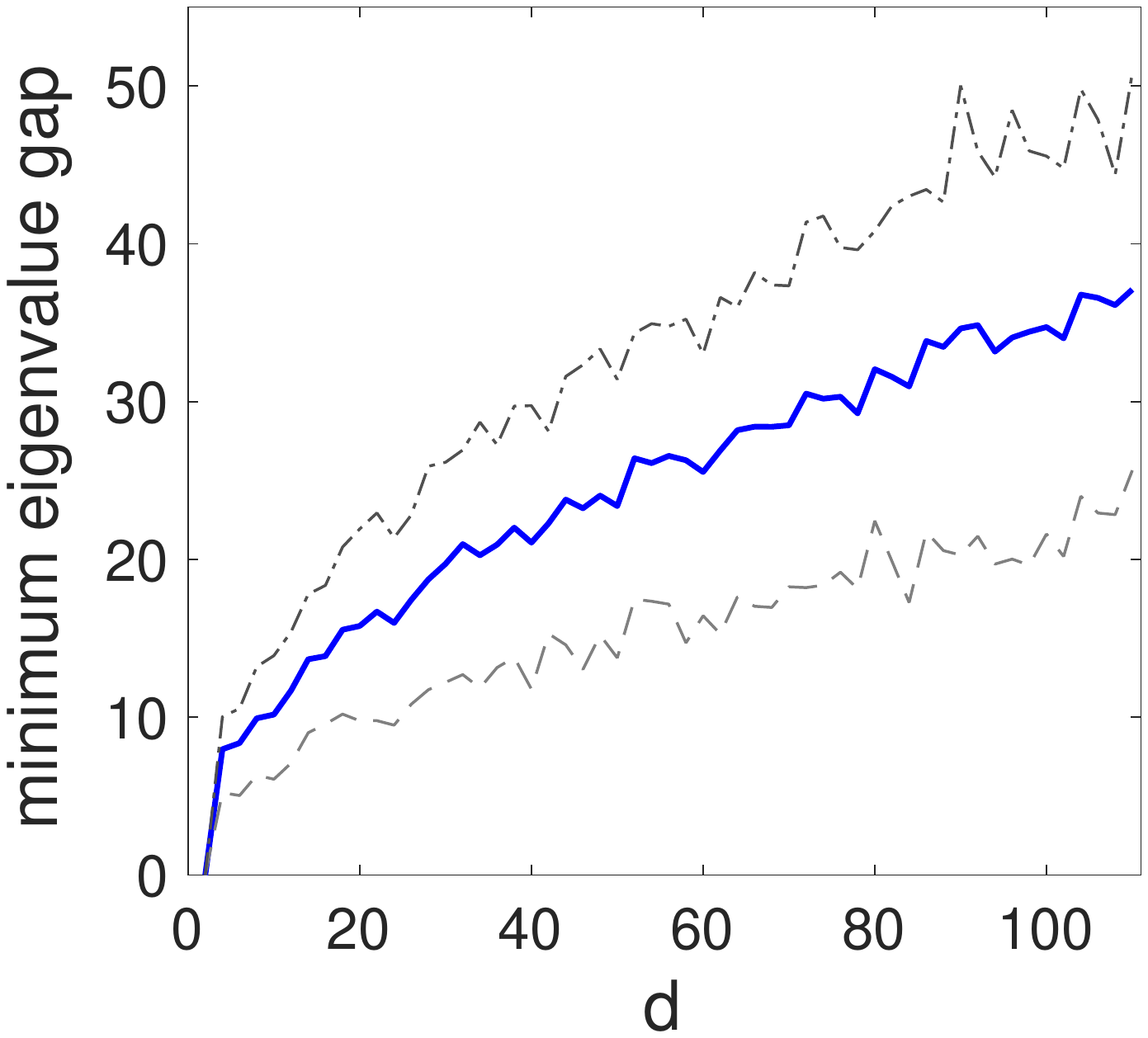}
    \caption{The minimum eigenvalue gap, $\min_{i\in [d-1]} \sigma_i - \sigma_{i+1}$ of Wishart random matrices $W = A^\top A$, where $A$ is an $m \times d$ matrix with i.i.d. $N(0,1)$ Gaussian entries, for various values of $d$ and $m=d^3$, averaged over 200 trials for each $d$ (blue curve), with first quartile (dashed grey curve) and third quartile (dotted grey curve)  also displayed.
    We observe that, if $m=d^3$, the minimum gap size is at least as large as  $\sqrt{d}$ with high probability, and grows roughly proportional to $\sqrt{d}$.}
    \label{fig_Wishart_gaps_2}
\end{figure}

\section{Eigenvalue Gaps in Real Datasets} \label{appendix_data}

In this section we compute the eigenvalues of covariance matrices $M$ of standard real-world datasets, and determine the values of $k$ for which our Assumption \ref{assumption_gaps}($M,k,\lambda_1, \eps, \delta$) holds on these datasets  (for $\lambda_1=\sigma_1$, $\epsilon=1$, and $\delta= \frac{1}{100}$) .
We consider three standard datsets from the UCI Machine Learning Repository \cite{Dua_2019}: the 1990 US Census dataset ($d=124$, $n= 2458285$), the KDD Cup dataset ($d=36$, $n = 494020$), and the Adult dataset ($d=6$, $n= 48842$). 
All three of these datasets were previously used as benchmarks in the differentially private matrix approximation and PCA literature (Census and KDD Cup in e.g. \cite{chaudhuri2012near}, and Adult in e.g. \cite{amin2019differentially}).
As is standard, we pre-process each dataset to ensure that all entries are real-valued and to normalize the range of the measurements used for the different features.
Specifically, we remove categorical features and apply min-max normalization to the remaining real-valued features.
We then multiply the data matrix by a constant to ensure that all its rows have magnitude at most $1$ (to ensure the sensitivity bounds which imply privacy of the Gaussian mechanism \cite{dwork2014algorithmic} hold on the dataset), and subtract the mean of each column.
We then compute the eigenvalues of the covariance matrix  $M = A^\top A$ of the pre-processed data matrix $A$.

Our Assumption \ref{assumption_gaps}($M,k,\lambda_1, \eps, \delta$) requires that $\sigma_i - \sigma_{i+1} \geq  \frac{8\sqrt{\log(\frac{1.25}{\delta})}}{\eps} \sqrt{d} + 3\log^{\frac{1}{2}}(\lambda_1 k)$ for all $i \leq k$.
As is done in prior works, we consider values of $k$ such that the (non-private) rank-$k$ approximation of $M$ has Frobenius norm which is a large percentage of $\|M\|_F$ (e.g., in \cite{chaudhuri2012near} they choose values of $k$ such that the Frobenius norm of the low-rank approximation is at least 80\% of the original matrix).
We will verify that, on the above-mentioned datasets, our Assumption \ref{assumption_gaps} holds for values of $k$ large enough such that the (non-private) rank-$k$ approximation contains at least $99\%$ of the Frobenius norm of $M$.

We first compute the eigenvalues of the covariance matrix of the Census dataset (Figure \ref{fig_Census}, left). 
On this dataset, the Frobenius norm of the (non-private) rank-$k$ approximation for $k =10$ contains $>99\%$ of the Frobenius norm of $M$; thus we would like our assumption to hold for values of $k$ for $k\leq 10$.
For the census dataset we have $\sigma_1 = 93730$, and thus, for any $k \leq 11$, the r.h.s. of Assumption \ref{assumption_gaps} is at most $442$.
We observe that the eigenvalue gaps satisfy $\sigma_{i} - \sigma_{i+1} \geq 442$ for all $i \leq 11$  (Figure \ref{fig_Census}, right); thus, our Assumption \ref{assumption_gaps} is satisfied for all $k \leq 11$.

Computing the eigenvalues of the covariance matrix of the KDD Cup dataset (Figure \ref{fig_KDDCUP}, left) we observe that the Frobenius norm of the (non-private) rank-$k$ approximation for $k =3$ contains $>99\%$ of the Frobenius norm of $M$; thus we would like our assumption to hold for values of $k$ at least $3$.
On this dataset we have $\sigma_1 = 72670$, and thus, for any $k \leq 7$, the r.h.s. of Assumption \ref{assumption_gaps} is at most $250$.
We observe that the eigenvalue gaps satisfy $\sigma_{i} - \sigma_{i+1} \geq 250$ for $i \leq 7$  (Figure \ref{fig_KDDCUP}, right); thus, Assumption \ref{assumption_gaps} is satisfied for all $k \leq 7$ on this dataset.

Computing the eigenvalues of the Adult dataset (Figure \ref{fig_Adult}, left) we observe that the Frobenius norm of the (non-private) rank-$k$ approximation for $k =4$ contains $99\%$ of the Frobenius norm of $M$.
On this dataset, we have that $\sigma_1 = 1195$, and thus, for any $k \leq 4$, the r.h.s. of Assumption \ref{assumption_gaps} is at most $103.4$.
We observe that the eigenvalue gaps satisfy $\sigma_{i} - \sigma_{i+1} \geq 103.4$ for $i \leq 4$  (Figure \ref{fig_Adult}, right); thus, Assumption \ref{assumption_gaps} is satisfied for all $k \leq 4$ on this dataset.

\begin{figure}[h]
    \centering
    \includegraphics[width=0.5\linewidth]{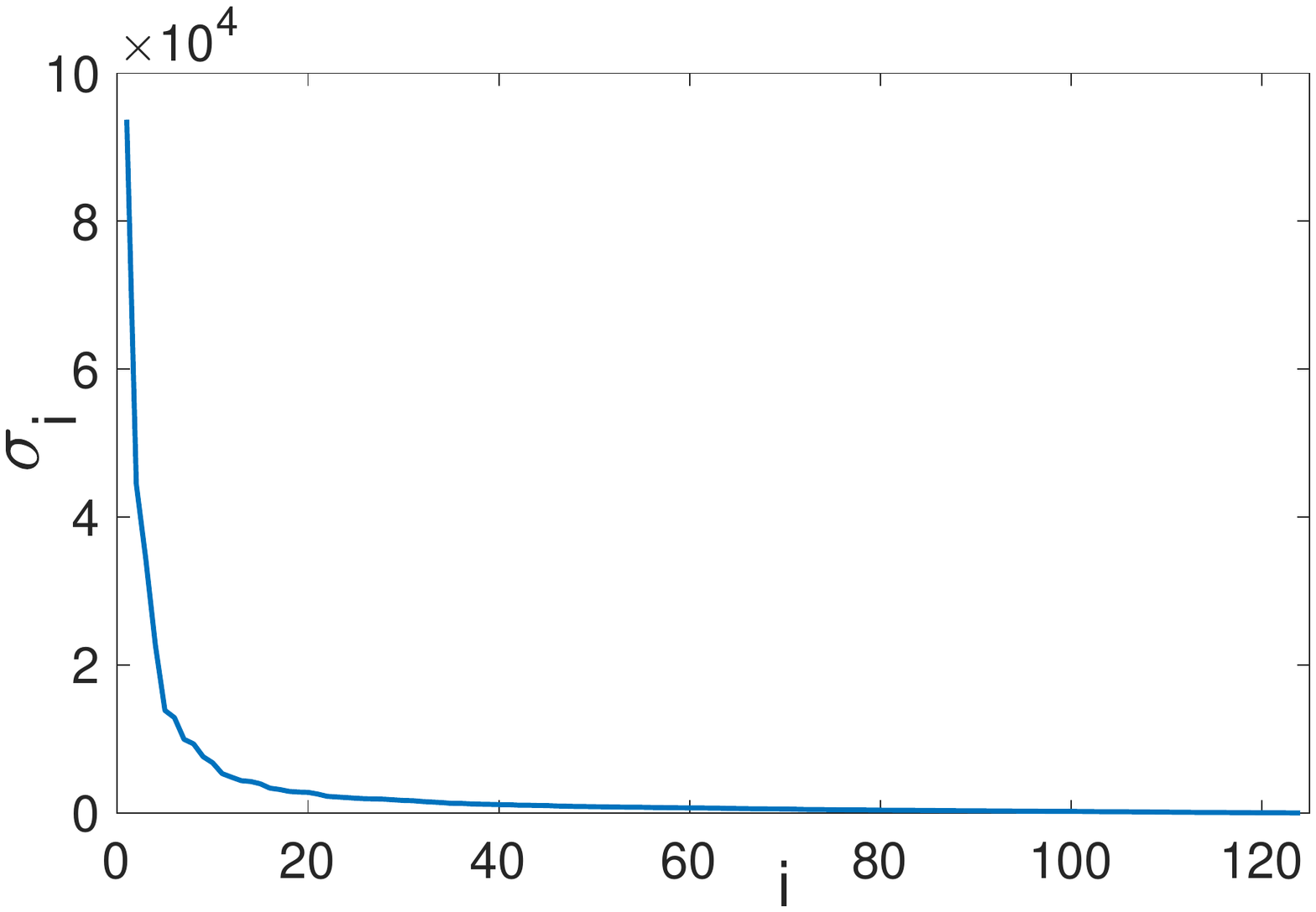}
    \includegraphics[width=0.45\linewidth]{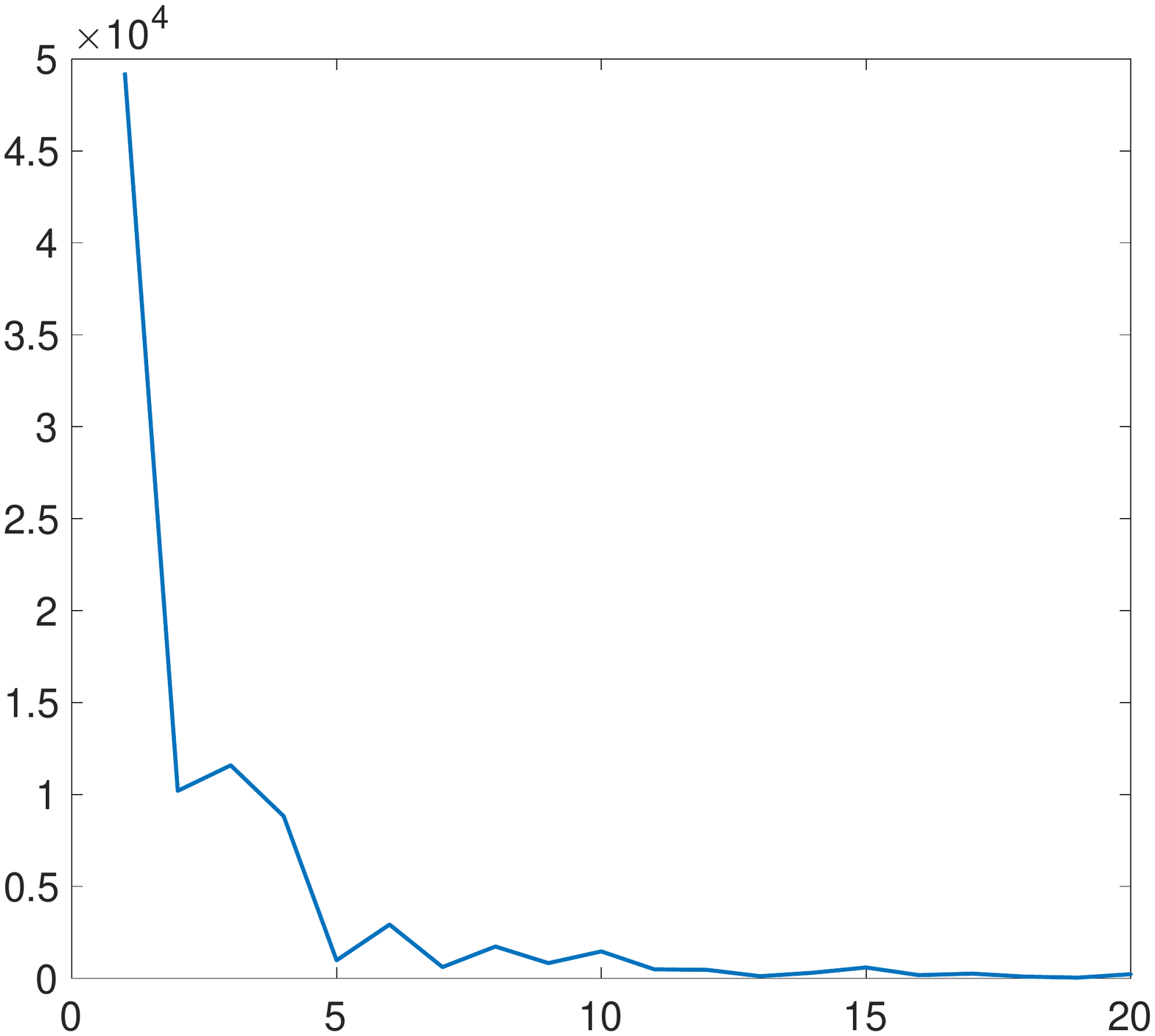}
    \caption{Eigenvalues (left) and eigenvalue gaps (right) of the covariance matrix of the Census dataset.  The gaps in the eigenvalues satisfy Assumption \ref{assumption_gaps} for any $k \leq 11$, and $\lambda_1=\sigma_1$, $\epsilon = 1$, $\delta = \frac{1}{100}$, as  for these values Assumption \ref{assumption_gaps} requires that $\sigma_i- \sigma_{i+1} \geq 442$ for all $i \leq k$.}
    \label{fig_Census}
\end{figure}

\begin{figure}[h]
    \centering
    \includegraphics[width=0.45\linewidth]{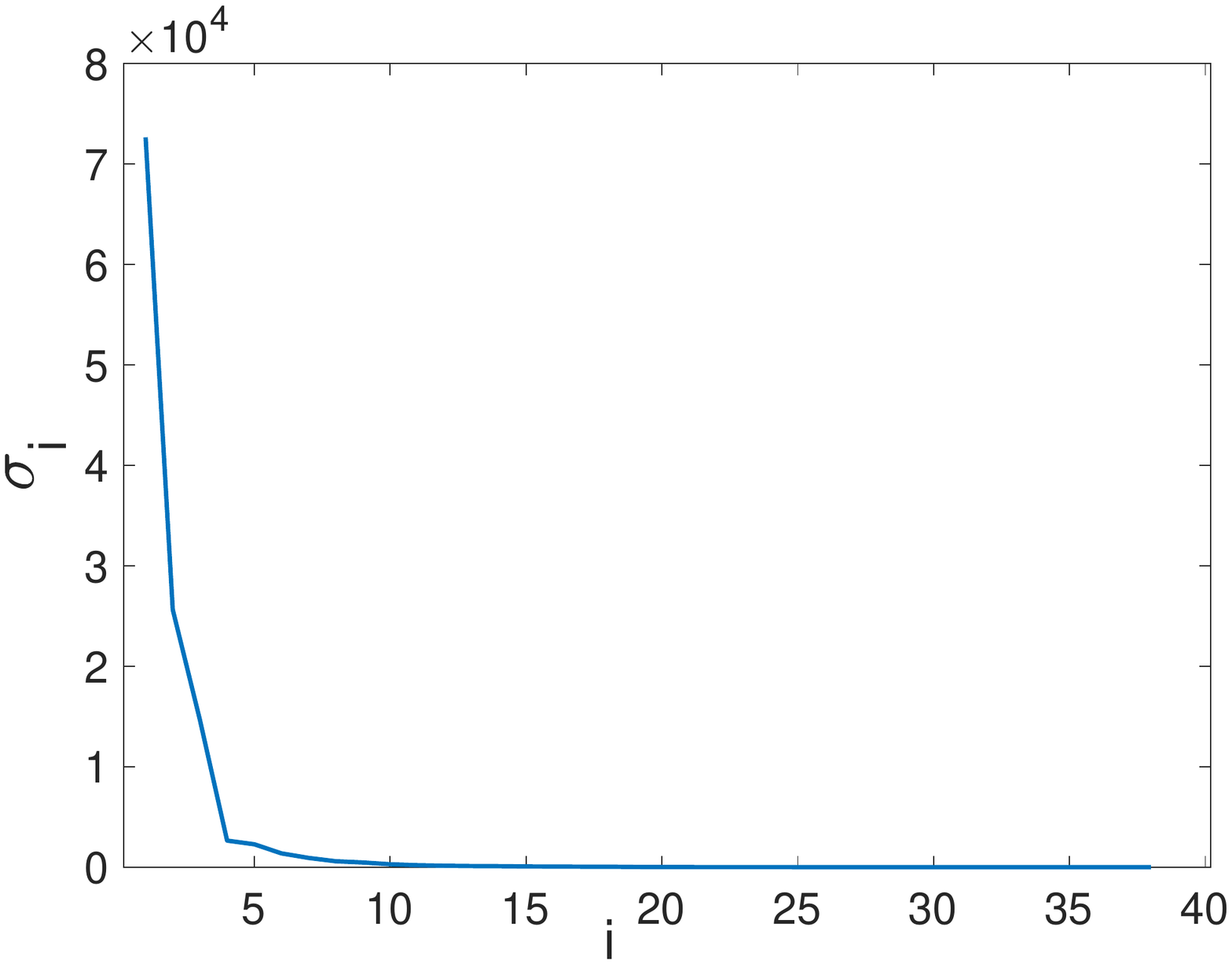}
    \includegraphics[width=0.5\linewidth]{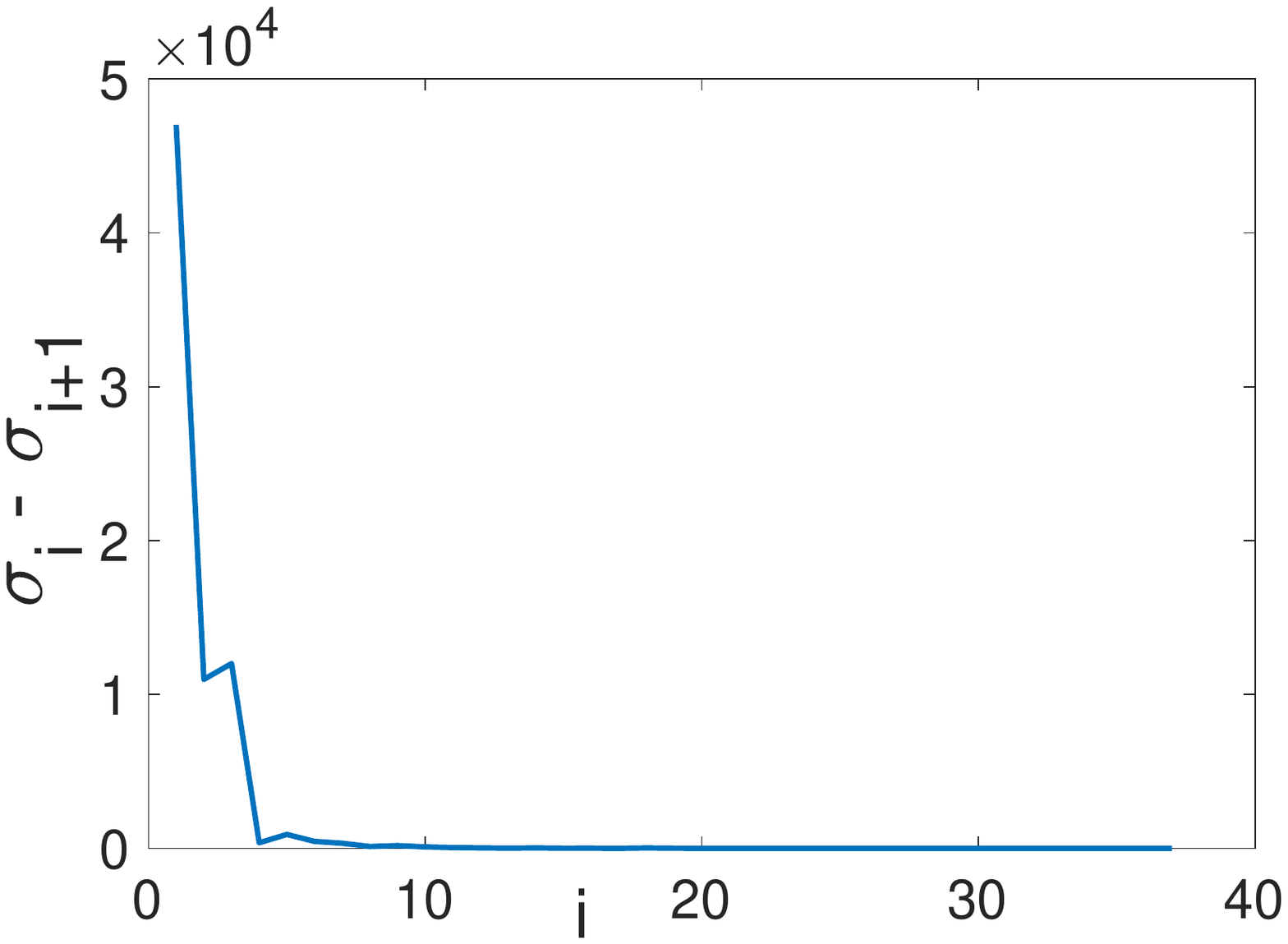}
    \caption{Eigenvalues (left) and eigenvalue gaps (right) of the covariance matrix of the KDD Cup dataset.  The gaps in the eigenvalues satisfy Assumption \ref{assumption_gaps} for any $k \leq 7$, and $\lambda_1=\sigma_1$, $\epsilon = 1$, $\delta = \frac{1}{100}$, as  for these values Assumption \ref{assumption_gaps} is satisfied when $\sigma_i- \sigma_{i+1} \geq 250$ for all $i \leq k$.}
    \label{fig_KDDCUP}
\end{figure}

\begin{figure}[h]
    \centering
    \includegraphics[width=0.45\linewidth]{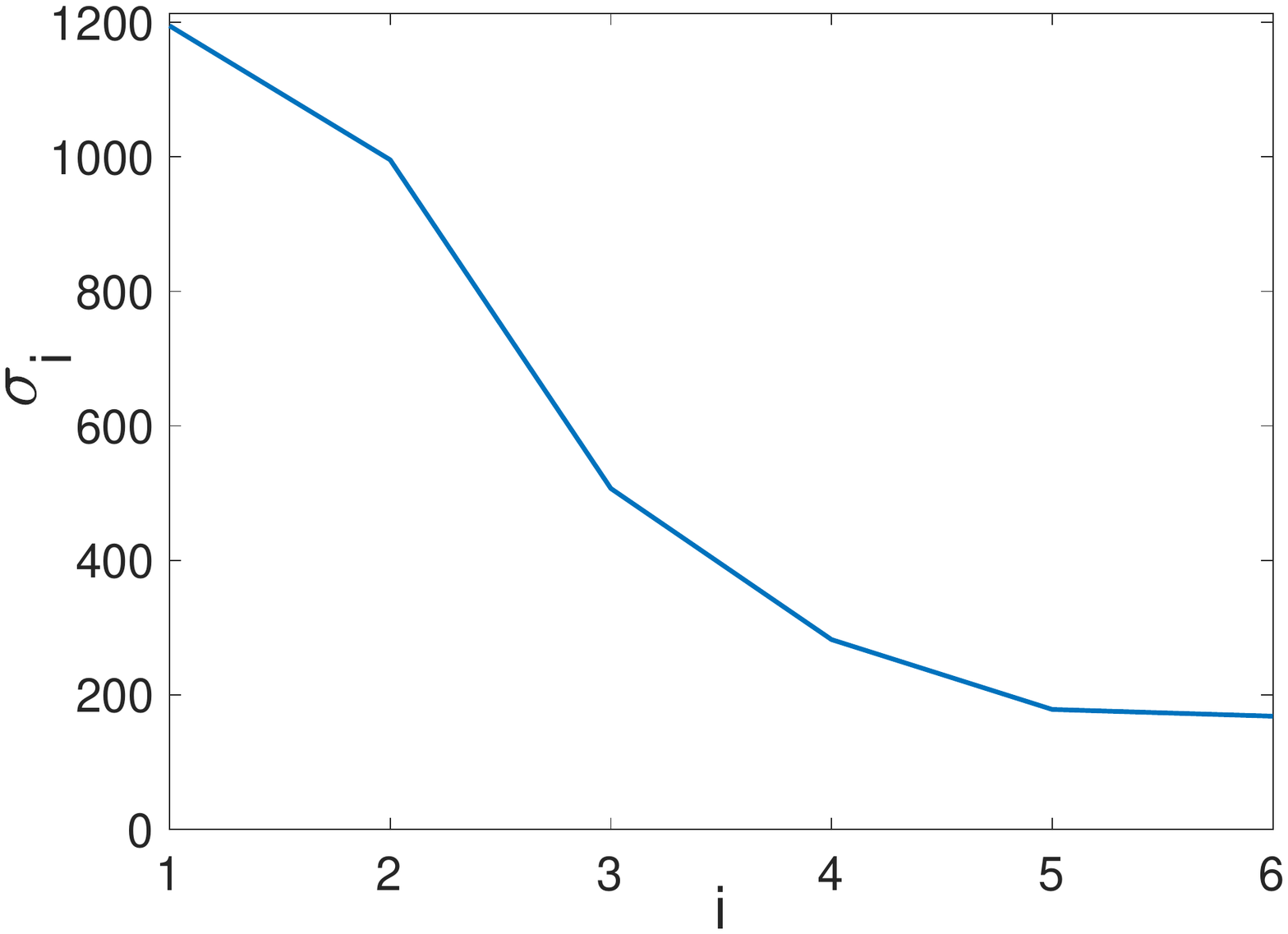}
    \includegraphics[width=0.5\linewidth]{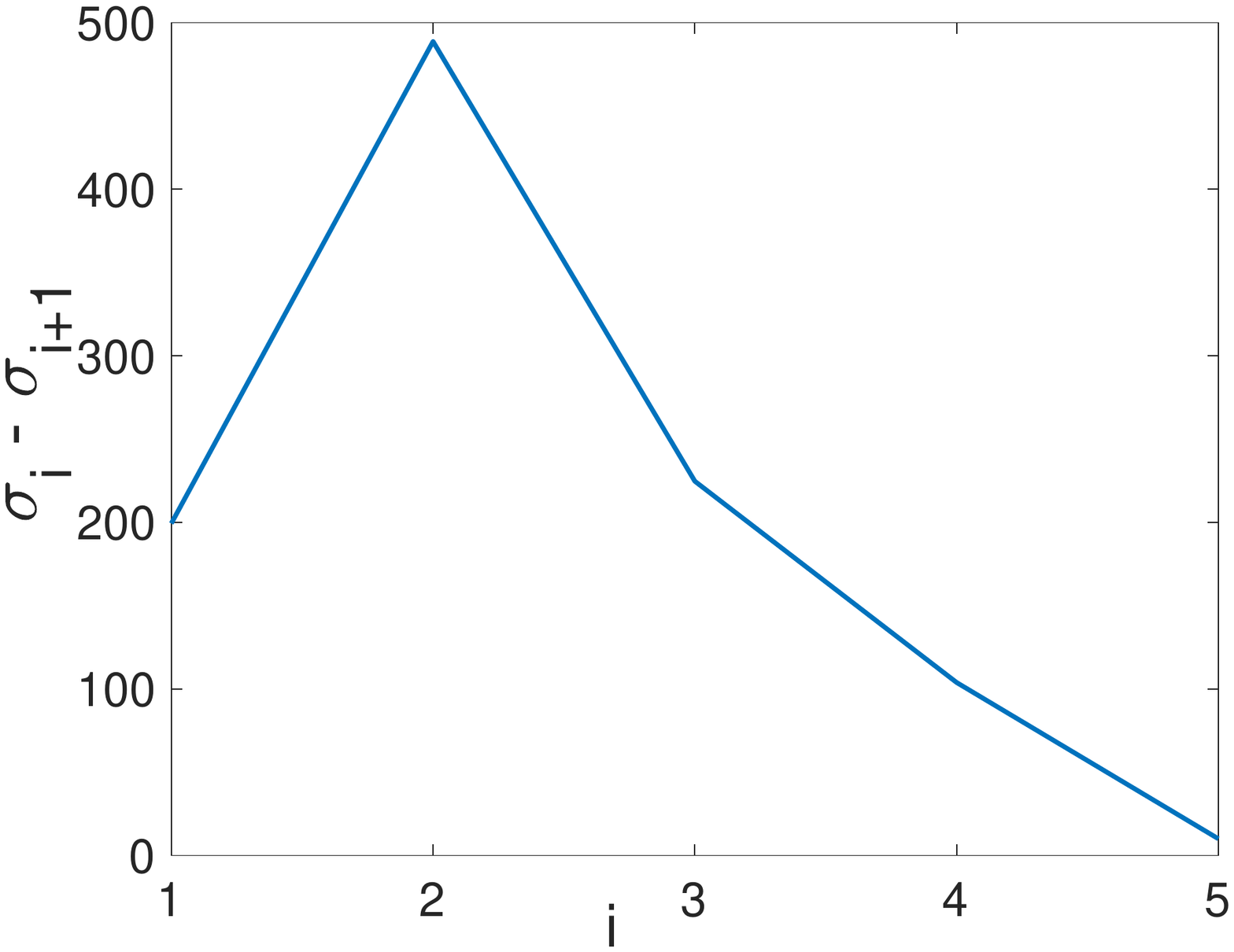}
    \caption{Eigenvalues (left) and eigenvalue gaps (right) of the covariance matrix of the Adult dataset.  The gaps in the eigenvalues satisfy Assumption \ref{assumption_gaps} for any $k \leq 4$, and $\lambda_1=\sigma_1$, $\epsilon = 1$, $\delta = \frac{1}{100}$, as  for these values Assumption \ref{assumption_gaps} is satisfied when $\sigma_i- \sigma_{i+1} \geq 103.4$ for all $i \leq k$.}
    \label{fig_Adult}
\end{figure}

\newpage
\section{Challenges in Using Previous Approaches}\label{sec_challenges}

In the special case of covariance matrix approximation, it is possible to use trace inequalities to bound the quantity $\|M- \hat{V}_k \hat{\Sigma}_k \hat{V}_k^\top\|_F - \|M- V_k \Sigma_k V_k^\top\|_F$ (which is bounded above by the quantity $\|\hat{V} \hat{\Sigma}_k \hat{V}^\top - V \Sigma_k V^\top\|_F$ we bound).
This is the approach taken in \cite{dwork2014analyze}, which applies the fact that 
\begin{equation} \label{eq_trace}
\mathrm{tr}(X) \leq \mathrm{rank}(X)\|X\|_2 \qquad \qquad \forall X \in \mathbb{R}^{d \times d}
\end{equation}
 to show that $$\|M- \hat{V}_k \hat{\Sigma}_k \hat{V}_k^\top\|_F^2 - \|M- V_k \Sigma_k V_k^\top\|_F^2 \leq O(k \|M - V_k \Sigma_k V_k^\top\|_2 \|E\|_2 + k \|E\|_2^2).$$
The r.h.s. depends on $\sigma_{k+1} = \|M - V_k \Sigma_k V_k^\top\|_2$, and is therefore not invariant to scalar multiplications of $M$.
However, one can obtain a scalar-invariant bound on the quantity $\|M- \hat{V}_k \hat{\Sigma}_k \hat{V}_k^\top\|_F - \|M- V_k \Sigma_k V_k^\top\|_F$ by plugging in $\|M - V_k \Sigma_k V_k^\top\|_2 \leq \|M - V_k \Sigma_k V_k^\top\|_F$ and plugging in the high-probability bound  $\|E\|_2 = O(\sqrt{d})$.
This leads to a bound of $\|M- \hat{V}_k \hat{\Sigma}_k \hat{V}_k^\top\|_F - \|M- V_k \Sigma_k V_k^\top\|_F \leq O(k \sqrt{d})$.
In the special case where $k=d$, this bound is $O(d^{1.5})$, and thus is not tight since we have $\|M- \hat{V}_k \hat{\Sigma}_k \hat{V}_k^\top\|_F - \|M- V_k \Sigma_k V_k^\top\|_F = \|\hat{M} - M \|_F = \|E\|_F = O(d)$ w.h.p.
Roughly, the additional factor of $\sqrt{k} = \sqrt{d}$ incurred in their bound is due to the fact that the matrix trace inequality \eqref{eq_trace} their analysis relies on gives a bound in terms of the spectral norm, even though they only need a bound in terms of the Frobenius norm-- which can (in the worst case) be larger than the spectral norm by a factor of $\sqrt{k}$.

Another issue is that the quantity $\|\hat{V} \Sigma_k \hat{V}^\top - V \Sigma_k V^\top\|_F$ we wish to bound can be very sensitive to perturbations to $V_k$, 
since $\|\hat{V} \Lambda \hat{V}^\top - V \Lambda V^\top\|_F \geq \lambda_k \|\hat{V}_k \hat{V}_k^\top - V_k V_k^\top\|_F$. 
Thus, any bound on $\|\hat{V} \Lambda \hat{V}^\top - V \Lambda V^\top\|_F$ must (at the very least) also bound the distance $\|\hat{V}_k \hat{V}_k^\top - V_k V_k^\top\|_F$ between the projection matrices onto the subspace $\mathcal{V}_k$ spanned by the top-$k$ eigenvectors of $M$.
One approach to bounding  $\|\hat{V}_k \hat{V}_k^\top - V_k V_k^\top\|_F$, is to use an eigenvector perturbation theorem, such as the Davis-Kahan theorem \cite{davis1970rotation}, which says, roughly, that
\begin{equation} \label{eq_Davis_Kahan}
    \|\hat{V}_k \hat{V}_k^\top - V_k V_k^\top\|_2 \leq \frac{\|E\|_2}{\sigma_k - \sigma_{k+1}}
\end{equation}
 (this is the approach taken by \cite{dwork2014analyze}  when proving their utility bounds for subspace recovery).
 Plugging in the high-probability bound  $\|E\|_2 = O(\sqrt{d})$, and using the fact that $\|\hat{V}_k \hat{V}_k^\top - V_k V_k^\top\|_F \leq \sqrt{k} \|\hat{V}_k \hat{V}_k^\top - V_k V_k^\top\|_2$, gives $\|\hat{V}_k \hat{V}_k^\top - V_k V_k^\top\|_F \leq \frac{\sqrt{k} \sqrt{d}}{\sigma_k - \sigma_{k+1}}$ with high probability.
To obtain bounds for the utility $\|\hat{V} \Sigma_k \hat{V}^\top - V \Sigma_k V^\top\|_F$ of the covariance matrix approximation, we can decompose $V \Sigma_k V^\top = \sum_{i=1}^k (\sigma_i- \sigma_{i+1})V_i V_i^\top$, and apply the Davis-Kahan theorem to each projection matrix $V_i V_i^\top$:
\begin{align}
    \|\hat{V} \Sigma_k \hat{V}^\top - V \Sigma_k V^\top\|_F &=  \|\hat{V} \Sigma_k \hat{V}^\top - V \Sigma_k V^\top\|_F \\
&=    \|\sum_{i=1}^{k-1} (\sigma_i - \sigma_{i+1}) (\hat{V}_i \hat{V}_i^\top - V_i V_i^\top) + \sigma_k (\hat{V}_k \hat{V}_k^\top - V_k V_k^\top)\|_F \\
    &\leq \sum_{i=1}^{k-1} (\sigma_i - \sigma_{i+1}) \|\hat{V}_i \hat{V}_i^\top - V_i V_i^\top\|_F + \sigma_k \|\hat{V}_k \hat{V}_k^\top - V_k V_k^\top\|_F \\
    & \leq \sum_{i=1}^{k-1} (\sigma_i - \sigma_{i+1}) \frac{\sqrt{i} \sqrt{d}}{\sigma_i - \sigma_{i+1}} + \sigma_k   \frac{\sqrt{k} \sqrt{d}}{\sigma_k - \sigma_{k+1}}\\
    & = O\left(k^{1.5} \sqrt{d} + \frac{\sigma_k}{\sigma_k - \sigma_{k+1}} \sqrt{k} \sqrt{d}\right),
\end{align}
where we define $\sigma_{d+1}:= \sigma_d$.  
Unfortunately, this bound is not tight up to a factor of $k$, at least in the special case where $k=d$.
As a first step to obtaining a tighter bound, we would ideally like to add up the Frobenius norm of the summands $(\sigma_i - \sigma_{i+1}) (\hat{V}_i \hat{V}_i^\top - V_i V_i^\top)$ as a sum-of-squares rather than as a simple sum, in order to decrease the r.h.s. by a factor of $\sqrt{k}$.
However, to do so we would need to bound the cross-terms $\mathrm{tr} \left((\hat{V}_i \hat{V}_i^\top - V_i V_i^\top)   (\hat{V}_j \hat{V}_j^\top - V_j V_j^\top)\right)$ for $i \neq j$.
To bound each of these cross-terms we need to carefully track the interactions between the eigenvectors in the subspaces $\mathcal{V}_i$ and $\mathcal{V}_j$ as the noise $E$ is added to the input matrix $M$.

We handle such cross-terms by viewing the addition of noise as a continuous-time matrix diffusion $\Psi(t) = M + B(t)$, whose eigenvalues $\gamma_i(t)$ and eigenvectors $u_i(t)$, $i\in [d]$, evolve over time. This allows us to ``add up'' contributions of  different eigenvectors to the Frobenius distance as a stochastic integral, 
$$||\hat{V}\Sigma_k \hat{V}^\top-V \Sigma_k V^\top||_F^2 = ||\int_0^{T}\sum_{i=1}^{d}\sum_{j \neq i} |\sigma_i-\sigma_j|\frac{\mathrm{d}B_{ij}(t)}{\gamma_i(t)-\gamma_j(t)}(u_i(t) u_j^\top(t)+u_j(t)u_i^\top(t))||_F^2,$$
where, roughly, each differential cross term  $\frac{\mathrm{d}B_{ij}(t)}{\gamma_i(t)-\gamma_j(t)}(u_i(t)u_j^\top(t)+u_j(t)u_i^\top(t))$ adds noise to the matrix $V\Sigma_k V^\top$ independently of the other terms since the Brownian motion differentials $\mathrm{d}B_{ij}(t)$ are independent for all $i,j,t$. \eqref{eq_int_1}. Roughly, this allows us to add up the contributions of these terms as a sum of squares using It\^o's Lemma from stochastic calculus \eqref{eq_int_2}.

\section{Necessity of Assumption \ref{assumption_gaps} in Our Proof}\label{sec_necessity_of_assumption}

For simplicity, assume that $\epsilon = \delta = O(1)$.
Our proof uses Weyl’s inequality to bound the gaps in the eigenvalues $\gamma_i(t) - \gamma_{i+1}(t)$ of the perturbed matrix $M + G(t) + G^\top(t)$ at every time $t \in [0,1]$, where $G(t)$ has i.i.d.  $N(0,t)$ entries.

Weyl’s inequality says that  for every $i \in d$,    $\sigma_i - \|G(t) + G^\top(t)\|_2  \leq \gamma_i(t) \leq  \sigma_i + \|G(t) + G^\top(t)\|_2$.
Thus, plugging in the high-probability bound $\|G(t)\|_2 \leq 2\sqrt{d}$,   we have that  
\begin{equation}\label{eq_gaps}
    \gamma_i(t) - \gamma_{i+1}(t) \geq \sigma_i -  \sigma_{i+1}  - 4\sqrt{d}.
    \end{equation}
If $\sigma_i - \sigma_{i+1} < \sqrt{d}$, Weyl's inequality does not give {\em any} bound on the gaps since then the r.h.s. of \eqref{eq_gaps} is $\sigma_i -  \sigma_{i+1}  - 4\sqrt{d}$ is negative.
Thus, to apply Weyl's inequality to bound the eigenvalue gaps $\gamma_i(t) - \gamma_{i+1}(t)$, we require that $\sigma_i - \sigma_{i+1} \geq \Omega(\sqrt{d})$, which is roughly  Assumption \ref{assumption_gaps}.

On the other hand, we note that Weyl's inequality is a worst-case deterministic inequality-- it says that $\sigma_i - \|G(t) + G^\top(t)\|_2  \leq \gamma_i(t) \leq  \sigma_i + \|G(t) + G^\top(t)\|_2$ with probability $1$.
However, $G(t)$ is a random matrix, and the Dyson Brownian motion equations \eqref{eq_DBM_eigenvalues} which govern the evolution of the eigenvalues of the perturbed matrix $M+ G(t)+G^\top(t)$ say that the eigenvalues $\gamma_i(t)$ and $\gamma_j(t)$ repel each other with a ``force'' of magnitude $\frac{1}{\gamma_i(t) - \gamma_j(t)}$.
This suggests that it may be possible to weaken (or eliminate) Assumption \ref{assumption_gaps}, while still recovering the same bound in our main result Theorem \ref{thm_large_gap}.

\section{High Probability Bounds}\label{sec_high_probability}

While our current result holds in expectation, it is possible to use our techniques to prove high-probability bounds. 

The simplest approach is to plug in the expectation bound in our main result (Theorem \ref{thm_large_gap}) into Markov's inequality, which says that $P(\| \hat{V} \Lambda \hat{V}^\top -  V \Lambda V^\top \|_F^2 \geq s) \leq \frac{E(\| \hat{V} \Lambda \hat{V}^\top -  V \Lambda V^\top \|_F^2)}{s}$ for all $s>0$.

While Markov's inequality gives a high-probability bound, this bound decays as $\frac{1}{s}$.
One approach to obtaining high-probability bounds which decay with a rate exponential in $s$ might be to apply concentration inequalities to the part of our proof where we currently use expectations.
Namely, our proof of Lemma \ref{Lemma_integral}  uses It\^o's Lemma (restated as Lemma \ref{lemma_ito_lemma_new}) to show that, roughly,
 $$\|\Psi(T) - \Psi(0)\|_F^2 = 4\int_0^T \sum_{i=1}^{d}  \sum_{j \neq i}  \frac{(\lambda_i - \lambda_j)^2}{(\gamma_i(t) - \gamma_j(t))^2} + \frac{1}{2} \int_0^t \sum_{\ell, r} \sum_{\alpha, \beta} \left(\frac{\partial}{ \partial X_{\alpha \beta}} f(X(t))\right) R_{(\ell r) (\alpha \beta)}(t) \mathrm{d}B_{\ell r}(t),$$
 where we define $f(X) := \|X\|_F^2$, $X(t):=  \int_0^{t}\sum_{i=1}^{d} \sum_{j \neq i} |\lambda_i - \lambda_j| \frac{\mathrm{d}B_{ij}(s)}{|\gamma_i(s) - \gamma_j(s) |}(u_i(s) u_j^\top(s) + u_j(s) u_i^\top(s))$, and  $R_{(\ell r) (i j)}(t) := \left(\frac{ |\lambda_i - \lambda_j| }{|\gamma_i(t) - \gamma_j(t)|}(u_i(t) u_j^\top(t) + u_j(t) u_i^\top(t)) \right)[\ell, r]$.
 
 The two integrals on the r.h.s. are both random variables.
 For simplicity, our current proof bounds these random variables by taking the expectation of both sides of the equation.
 In particular, the second term on the r.h.s. has  mean $0$, and thus vanishes when we apply the expectation.

We do not have to do any additional work to bound the first term on the r.h.s. with high probability, since the only random variables appearing in that term are the eigenvalue gaps $\gamma_i(t) - \gamma_j(t)$,  and we have already shown a high-probability bound on these gaps (Lemma \ref{lemma_gap_concentration}).
To bound the second term on the r.h.s. with high probability, in addition to the gaps $\gamma_i(t) - \gamma_j(t)$, we would also need to deal with the Gaussian random variables $B_{\ell r}(t)$ appearing inside the integral.
One approach to bounding these random variables $B_{\ell r}(t)$ with high probability might be to  apply standard Gaussian concentration inequalities, and it would be interesting to see whether this leads to high-probability bounds which are similar to the expectation bounds we obtain.

\end{document}